\renewcommand{\cal}[1]{\mathcal{#1}}
\renewcommand{\sf}[1]{\mathsf{#1}}
\newcommand{\scr}[1]{\mathscr{#1}}
\newcommand{\eps}{\varepsilon}
\newcommand{\ket}[1]{{|#1\rangle}}
\newcommand{\bra}[1]{{\langle#1|}}
\newcommand{\tr}{\mbox{\rm tr}}
\newcommand{\Id}{\mathbf{1}}
\newcommand{\id}{\mathbf{1}}
\newcommand{\mH}{\mathcal{H}}
\newcommand{\ot}{\otimes}
\newcommand{\calD}{\mathcal{D}}
\newcommand{\calF}{\mathcal{F}}
\newcommand{\reg}[1]{{\mathsf{#1}}}
\renewcommand{\cal}[1]{\mathcal{#1}}
\newcommand{\MA}{\mathsf{MA}}
\newcommand{\NP}{\mathsf{NP}}
\newcommand{\AM}{\mathsf{AM}}
\newcommand{\MIP}{\mathsf{MIP}}
\newcommand{\NEXP}{\mathsf{NEXP}}
\newcommand{\RE}{\mathsf{RE}}
\newcommand{\QMA}{\mathsf{QMA}}
\DeclareMathOperator{\poly}{poly}
\newcommand{\val}{\ensuremath{\mathrm{val}}}
\newcommand{\game}{G}
\newcommand{\sampler}{S}
\newcommand{\decider}{D}
\newcommand{\strategy}{\mathscr{S}}
\newcommand{\hp}{{\quarternote}}
\newcommand{\hf}{{\halfnote}}
\newcommand{\pitwolang}{L}
\newcommand{\ptl}{\pitwolang}
\newtheorem{theorem}{Theorem}
\newtheorem{definition}[theorem]{Definition}
\newtheorem{fact}[theorem]{Fact}
\newtheorem{lemma}[theorem]{Lemma}
\newtheorem{remark}[theorem]{Remark}
\newif\ifnotes\notesfalse
\newcommand{\znote}[1]{\textcolor{blue}{(Tina: #1)}}
\newcommand{\anote}[1]{\textcolor{red}{(Anand: #1)}}
\newcommand{\noteswarning}{{\begin{center} {\Large WARNING: NOTES
        ON}\endnote{Warning: notes on}\end{center}}}
\newcommand{\notesendofpaper}{{\theendnotes}}
\newcommand{\znote}[1]{}
\newcommand{\anote}[1]{}
\newcommand{\noteswarning}{{}}
\newcommand{\notesendofpaper}{}
\begin{document}
\title{Quantum free games}

\newif\ifnames\namestrue
\ifnames
\author{Anand Natarajan\thanks{\texttt{anandn@mit.edu}}\\\small{MIT}
  \and Tina Zhang\thanks{\texttt{tinaz@mit.edu}}\\\small{MIT}
}
\else
\fi


\maketitle
\noteswarning

\begin{abstract}
The complexity of free games with two or more classical players was essentially settled by Aaronson, Impagliazzo, and Moshkovitz~\cite{AIM}. In the quantum world, there are two complexity classes that can be considered quantum analogues of classical free games: (1) $\AM^*$, the multiprover interactive proof class corresponding to free games with \emph{entangled} players, and, somewhat less obviously, (2) $\sf{BellQMA}(2)$, the class of quantum Merlin-Arthur proof systems with two unentangled Merlins, whose proof states are \emph{separately} measured by Arthur. In this work, we make significant progress towards a tight characterization of both of these classes.
\begin{enumerate}
    \item We show a $\mathsf{BellQMA}(2)$ protocol for 3SAT on $n$
      variables, where the total amount of communication is
      $\tilde{O}(\sqrt{n})$. This answers an open question of Chen and
      Drucker~\cite{CD10} and
      also shows, conditional on ETH, that the algorithm of Brand\~{a}o,
      Christandl and Yard~\cite{BCY} for optimizing over separable
      states is tight up to logarithmic factors. 
    \item We show that $\AM^*[n_{\text{provers}} = 2, q = O(1), a =
      \poly\log(n)] = \mathsf{RE}$, i.e. that free entangled games with
      constant-sized questions are as powerful as general entangled
      games. (In contrast, \cite{AIM} shows that \emph{classical} free
      games are much weaker than general classical games.)
      We show this using a question
      ``hyper-compression" theorem that iteratively applies the
      introspection technique of Ji et al.~\cite{JNVWY20}. Our result
      is a significant improvement over the headline result of Ji et
      al., whose $\MIP^*$ protocol for the halting problem has
      $\poly(n)$-sized questions and answers.
    \item By the same techniques, we obtain a \emph{zero-gap} $\AM^*$
      protocol for a $\Pi_2$ complete language with constant-size
      questions and almost logarithmically ($O(\log n \cdot \log^* n)$) large answers, improving on
      the headline result of Mousavi, Nezhadi and Yuen~\cite{MNY21}. 
    \item Using a connection to the nonuniform complexity of the halting
      problem we show that any $\MIP^*$ protocol
      for $\RE$ \emph{requires} $\Omega(\log n)$ bits of communication. It follows that our results in item 3 are optimal up to an
      $O(\log^* n)$ factor, and that the gapless compression
      theorems of \cite{MNY21} are asymptotically optimal. We
      conjecture that these bounds can be saturated in the gapped case
      as well.

    \end{enumerate}

\end{abstract}

\newpage

\tableofcontents

\newpage

\section{Introduction}

The 1991 work of Babai, Fortnow and Lund which showed that $\MIP = \NEXP$ \cite{BFL91} remains one of the most important achievements of complexity theory to date. The techniques used in the proof provided a springboard toward several other important results, including the proof that $\sf{PCP} = \NP$ \cite{AS98,ALMSS98} and, more recently, $\MIP^* = \RE$ \cite{JNVWY20}. In short, history shows that the study of \emph{multiplayer games}, in which an honest, polynomial-time verifier referees a game involving two or more potentially dishonest and unbounded provers, has yielded some of the most fruitful avenues of research in the field of complexity theory.

In a multiplayer game, the verifier---being computationally bounded---is inherently at a disadvantage, and must find clever ways to force the unbounded provers with which it interacts to do computations on its behalf, even though it cannot necessarily replicate those computations to check if they were done accurately. One of the most useful powers at its disposal is the ability to ask the provers correlated questions. For example, a common paradigm in multiprover proof design is the \emph{consistency test}, in which the verifier asks one prover (`Alice') to provide an answer to some small subproblem of the overall problem which it is trying to decide, and asks another prover (`Bob') to provide \emph{part of} the answer to the same subproblem. An example of this paradigm is the \emph{clause-variable game}, in which the verifier---who is attempting to decide whether some instance of a constraint satisfaction problem (CSP), such as 3SAT or graph colouring, is satisfiable---asks Alice to provide a satisfying assignment to a single constraint in that CSP, and asks Bob to provide an assignment to just one of the variables participating in that constraint. For example, if the verifier is trying to decide an instance of 3SAT, then it may ask Alice for a satisfying assignment to a single clause in the 3SAT formula, and ask Bob for an assignment to one of the variables in that clause. The verifier then checks that the assignment which Alice provided is indeed a satisfying assignment, and moreover that Bob's answer was consistent with Alice's assignment. Note that a single clause in a 3SAT formula always has a satisfying assignment, even if the entire formula does not. The key element that prevents Alice and Bob from exploiting this to convince the verifier that an unsatisfiable formula is satisfiable is the fact that they cannot communicate. Therefore, the only way Alice can be consistent with Bob is if they have agreed beforehand to answer consistently with a \emph{global} satisfying assignment to the 3SAT formula. Otherwise, if Alice reports assignments that depend on the clause which she was given by the verifier, she will not be consistent with Bob, because Bob does not know which clause she was given.

The crucial advantage which the consistency test paradigm enjoys over single-prover $\sf{NP}$-style verification, in which the verifier simply asks for an assignment which satisfies all the clauses in an instance of 3SAT and checks that this is the case, is that a consistency test allows the verifier to efficiently check the satisfiability of CSPs which have far more clauses than the verifier could efficiently read, if it were to read all of them. Specifically, in only polynomial time, the consistency test paradigm allows a polynomially-bounded verifier to check whether an \emph{exponentially} long CSP is satisfiable or far from satisfiable, since the verifier only needs to send polynomially long questions to Alice and Bob in order to specify which constraint and which variable it wants to know about in an exponentially long CSP. This is the basic but necessary verification framework at the heart of results such as $\MIP = \NEXP$.

We may then ask: what if we take away this power of the verifier, essential to the design of many multiprover interactive proof systems, which allows it to ask the provers correlated questions (and therefore perform tests like the consistency test)? How is the computational power of the multiplayer game model altered if we demand that the verifier's questions to the provers be \emph{independently sampled}? The model in which a computationally efficient verifier referees a game involving two or more potentially dishonest and unbounded provers, and the verifier's questions to the provers must consist of independently sampled and uniformly random bits, is known as the \emph{free game model}. In this paper we will be primarily concerned with the free game model.

The power of classical free games was considered in 2014 by Aaronson, Impagliazzo and Moshkovitz \cite{AIM}. They defined the complexity class $\sf{AM}(k)$, which is exactly the class of problems that can be decided by a computationally efficient verifier interacting with $k$ potentially dishonest and unbounded provers, under the restriction that the verifier's questions to the provers must consist of independently sampled and uniformly random bits. Aaronson, Impagliazzo and Moshkovitz also showed that, for any $k = \poly(n)$, where $n$ is the length of the verifier's input, $\sf{AM}(k) = \sf{AM}$.

This tells us that, relative to the full power of classical multiplayer games, classical free games are very weak. Babai, Fortnow and Lund \cite{BFL91} showed that $\MIP(2)$ contains $\NEXP$, where $\MIP(2)$ corresponds to the `unrestrained' multiplayer two-player game model; on the other hand, Aaronson, Impagliazzo and Moshkovitz showed that, when we force the verifier to ask independently sampled questions, it can only decide problems in $\sf{AM}$---a class which is equal to $\sf{NP}$ under plausible complexity-theoretic assumptions \cite{MV05}. That is, in the classical world, placing the free-game restriction on the verifier seems to result in an exponential decrease in its deciding power!

Intuitively, we can understand this relationship as follows. The best paradigm known for converting $\MIP(2)$ protocols into $\AM(2)$ protocols is what we will call \emph{birthday repetition}. Suppose that there is a one-round $\MIP(2)$ protocol $P$ with constant completeness-soundness gap which allows the verifier to decide some language $L$ of interest, and in which the verifier samples correlated questions $(x,y)$ for the two provers from a set $\cal{X} \times \cal{Y}$. For simplicity, let us suppose that the verifier samples $(x,y)$ uniformly at random from a set $\cal{S} \subseteq \cal{X} \times \cal{Y}$. (This is true, for instance, in the clause-variable example we considered earlier.) We produce a free version of this protocol, $P_{free}$, by simply having the free verifier sample $k$ questions $(x_1, \dots, x_k)$ from $\cal{X}$, and $\ell$ questions $(y_1, \dots, y_\ell)$ from $\cal{Y}$, independently at random. The verifier then checks whether there exists $(x_i, y_j)$ for $i \in [k], j \in [\ell]$ such that $(x_i, y_j) \in \cal{S}$. If there exists such a pair $(x_i, y_j)$, then the free verifier acts as the $\MIP$ verifier would; otherwise, it automatically accepts.

We can represent the set $\cal{X} \times \cal{Y}$ as a bipartite graph $\cal{G}$, with a vertex corresponding to every $x \in \cal{X}$ on the left, and a vertex corresponding to every $y \in \cal{Y}$ on the right. We can also imagine that there is an edge between $x$ and $y$ if and only if $(x,y) \in \cal{S}$. If we assume that every vertex in $\cal{G}$ participates in at least one edge (i.e. that every Alice question $x \in \cal{X}$ has a nonzero chance of being asked, and similarly for every Bob question), the probability that $(x,y) \in \cal{S}$ for $x$ chosen uniformly at random from $\cal{X}$ and $y$ chosen independently and uniformly from $\cal{Y}$ is at least $\Omega\left( \frac{|\cal{X}| + |\cal{Y}|}{|\cal{X}||\cal{Y}|} \right)$. In the case where $|\cal{X}| \approx |\cal{Y}| := N$, this probability is $\Omega(1/N)$. Therefore, using the birthday paradox, we expect to set $k \approx \ell \approx O(\sqrt{N})$ in order to ensure that $P_{free}$, our free version of the $\MIP$ protocol $P$, still has constant soundness. (We call this procedure to turn a non-free game into a free game \emph{birthday repetition} because of the link to the birthday paradox.)

For the $\MIP$ protocols which are sufficiently powerful to capture all of $\NEXP$, as we mentioned earlier, it is typically the case that $|\cal{X}|$ and $|\cal{Y}|$ are both exponentially large in the input length. In other words, in order to convert an $\MIP(2)$-complete protocol into an $\AM(2)$ protocol using birthday repetition, we would need the verifier to send the provers questions of length $2^{cn}$, where $c$ is some constant and $n$ is the length of the input. This is clearly computationally infeasible. Birthday repetition is only computationally feasible when the question sets $\cal{X}$ and $\cal{Y}$ are polynomially large---or, in other words, when the CSP the verifier wants to decide is only polynomially long. This brings us back into the $\sf{NP}$ setting.

However, building on a line of previous work \cite{BT09,ABDFS09},
Aaronson, Impagliazzo and Moshkovitz also identify something which a
free verifier interacting with two noncommunicating provers can do
that no polynomially-bounded verifier interacting with a single prover
can: the former verifier can decide 3SAT \emph{using only $O(\sqrt{N}
  \cdot \poly\log N)$ bits of communication with its provers}, where
$N$ is the number of clauses in the 3SAT instance. Assuming the
Exponential Time Hypothesis (ETH), i.e. that 3SAT cannot be solved in
$2^{o(N)}$ time\footnote{Actually $2^{o(n)}$ whee $n$ is the number of
  variables in the 3SAT instance, but this is linearly related to $N$
  for hard instances.}, a polynomially bounded verifier interacting with a single prover cannot verify 3SAT so efficiently: any such verifier who could would lead to a subexponential-time algorithm for 3SAT. As such, though it may well be the case that $\AM(2) = \NP$, a verifier who referees a two-player free game may still have capabilities that a polynomially bounded verifier interacting with a single prover does not.

\subsection{$\sf{BellQMA}(2)$}
\subsubsection{Background and previous work}
\label{sec:intro-bellqma}

Aaronson, Impagliazzo and Moshkovitz's original motivation \cite[Section 4]{AIM} in studying classical free games was this latter application of deciding 3SAT in sublinear communication, the central ideas in which arose first not from the classical literature but from the study of a quantum class known as $\QMA(2)$. Informally, if $\NP$ is the class of problems which can be efficiently decided by a deterministic classical verifier who is provided with a classical witness, and $\QMA$ is the class of problems which can be efficiently decided by a quantum verifier given a quantum witness, then $\QMA(2)$ is the class of problems which can be decided by a quantum verifier given \emph{two} quantum witnesses. In the $\NP$ world, drawing a distinction between one and two witnesses is clearly absurd: any two classical witnesses can be concatenated into one witness, and any one classical witness can be split arbitrarily into two. In the $\QMA(2)$ world, however, the distinction between one and two witnesses is given meaning by requiring that any `two' witness states are \emph{unentangled} (or, equivalently, that they come from two noncommunicating provers who cannot share entanglement). If unentanglement were a property that an efficient quantum verifier could check for itself, then any $\QMA(2)$ protocol could be converted into a $\QMA$ protocol; however, this is \emph{not} known to be the case, and it remains unknown if $\QMA(2) = \QMA$. We can, of course, also define $\QMA(k)$ for $k \geq 2$, in which the quantum verifier receives $k$ unentangled witnesses from $k$ separate provers.

Our most compelling example of an application for the $\QMA(2)$ setup
(which cannot be instantiated in the $\QMA$ setup, conditioned on the
Exponential Time Hypothesis) is deciding $\NP$ problems in sublinear
communication. That is, we know of a $\QMA(2)$ protocol in which each
of the two provers sends only $O(\sqrt{N} \cdot \log N)$ qubits to the
quantum verifier, where $N$ is the number of clauses in a 3SAT formula
$\phi$, and that verifier can subsequently decide $\phi$ with constant
probability of error. (Note that, in the $\QMA(2)$ model, the quantum
verifier does not send any challenges to the two provers, unlike in
the $\AM(k)$ model---all the communication in a $\QMA(2)$ protocol
happens in a single quantum message from provers to verifier.) A
protocol with sublinear communication to decide 3SAT was firstly
proposed for the $\QMA(\sqrt{N} \poly\log N)$ setup \cite{ABDFS09}, in which a
quantum verifier interacts with $\sqrt{N} \poly\log N$ separate provers, and it was subsequently shown \cite{HM13} that there is a $\QMA(2)$ (two-prover) protocol achieving the same purpose with similar overall communication length.

The pervasive $\sqrt{N}$, which also appeared in the communication complexity of the \cite{AIM} $\AM(2)$ protocol for the same purpose, is not a coincidence---in fact, the \cite{AIM} $\AM(2)$ protocol for deciding 3SAT in sublinear communication draws close inspiration from protocols originally designed for $\QMA(2)$. The $\sqrt{N}$ factor does have some motivation: it originates from a clever application of the birthday paradox \cite{ABDFS09}. So far, nobody has thought of any other technique that might do better. It is natural, then, to wonder whether $\sqrt{N}$ qubits of communication is unavoidable. Is it the case that \emph{any} $\QMA(2)$ protocol for 3SAT must use at least $O(\sqrt{N})$ qubits of communication?

Unfortunately, our provable communication lower bounds in this case fail to match the upper bounds exactly. The best known lower bound on the communication complexity of a $\QMA(2)$ protocol to decide 3SAT originates from \cite{BCY}, which shows that, if there is any $\QMA(2)$ protocol \emph{of a certain restricted form} that can decide 3SAT with constant probability of error, then that protocol must involve at least $O(\sqrt{N})$ qubits of communication. \cite{BCY} shows that any such protocol with smaller communication complexity implies a subexponential-time algorithm for 3SAT, and therefore contradicts the Exponential Time Hypothesis.

The restricted model which \cite{BCY} consider in their lower bound is
one in which the quantum $\QMA(2)$ verifier acts as if it consisted of
two separate parties---call them Arthur and Lancelot---who each
receive one of the two unentangled witnesses provided by the
all-powerful provers. Arthur and Lancelot can then perform separate
measurements on their respective witness states and communicate
classically. After they communicate their measurement outcomes to each
other classically, they are allowed to perform more measurements, and
then communicate classically again, \emph{ad infinitum}; however, they
cannot perform any entangling measurements which straddle the two
witness states. At the end of many rounds of separate measurements and
classical communication, Arthur and Lancelot output a joint
decision. This model is called the \emph{local operations and
  classical communication} (LOCC) model, and the version of $\QMA(2)$
in which the verifier is restricted to behaving in this way is known
as $\textsf{LOCC-QMA}(2)$. \cite{BCY} shows that any
$\textsf{LOCC-QMA}(2)$ protocol for 3SAT must use at least
$O(\sqrt{N})$ qubits of communication between provers and
verifier.

What do we know about upper bounds on the communication necessary to
solve 3SAT in the $\textsf{LOCC-QMA}(2)$
model? Can we at least get a tight characterisation of that class, if
not of general $\QMA(2)$ protocols for $\NP$? The $\textsf{LOCC-QMA}$
model of course encompasses a model in which Arthur and Lancelot
measure their separate witnesses exactly once, and then perform joint
classical computations on the measurement results in order to
determine their decision. This latter model is known as the
$\sf{BellQMA}$ model. In 2010, building on work by Blier and Tapp
\cite{BT09}, Chen and Drucker \cite{CD10} exhibited a remarkably clean
$\sf{BellQMA}$ version of the original \cite{ABDFS09} $\QMA(\sqrt{N}
\poly\log N)$
protocol for 3SAT. In the Chen-Drucker protocol, every separate
quantum witness is measured separately, and the classical measurement
results from these measurements are post-processed classically by the
verifier in order to determine the final decision. The total
communication complexity of the Chen-Drucker protocol is $O(\sqrt{N}
\cdot \log N)$ qubits. The Chen-Drucker protocol would therefore
appear to answer our desire for an $\textsf{LOCC-QMA}$ protocol for
$\NP$ whose communication complexity matches (up to $\log N$ factors)
the lower bound on the communication complexity of $\textsf{LOCC-QMA}$
protocols which was proven by Brand\~{a}o, Christandl and
Yard. Unfortunately, the Chen-Drucker protocol requires $\sqrt{N}$
unentangled provers, not only two, so it cannot show us that the
communication lower bound from the \cite{BCY}
algorithm is optimal.\footnote{We remark that the original \cite{ABDFS09}
  protocol also required $\Omega(\sqrt{N})$ unentangled provers, and
  it was `compiled down' to a two-prover protocol by \cite{HM13};
  however, the `compilation' technique required entangling
  measurements across witness states. We also remark that in the
  multipartite setting, the Chen-Drucker protocol was proven optimal
  by Brand\~{a}o and Harrow~\cite{BH13}.}

Nonetheless, the Chen-Drucker protocol for $\NP$ illuminates the close connection that exists between $\sf{BellQMA}(k)$ and $\AM(k)$, a connection which may not be obvious at first glance. The key ingredient in the Chen-Drucker protocol is a quantum test called the \emph{uniformity test}, which, broadly speaking, involves measuring certain registers of the witness states provided by the provers in the \emph{Fourier basis}, and requiring the measurement outcomes to be zeroes. The zero Fourier state is the uniform superposition in the standard basis. As such, the uniformity test can (morally speaking) act as a substitute for the uniformly random challenges generated by the verifier in the $\AM(k)$ model, because the uniformity test in a sense \emph{forces} the provers to generate their own uniformly sampled challenges. More specifically, if the prover provides us (the verifier) with a state of the form

\begin{equation}
	\ket{\psi} = \sum_{q \in \cal{Q}} \alpha_q \ket{q}_Q \ket{a(q)}_A
	\label{eq:qa-state}
\end{equation}

where $\cal{Q}$ is a set of questions and $a(q)$ represents an answer to a given question $q$, and we can somehow certify that the $\alpha_q$s are all equal to each other (i.e. if we can certify that the question register $Q$ is in a uniform superposition after we---somehow!---`disentangle' it from the answer register $A$), then measuring this state $\ket{\psi}$ in the standard basis is just as good as sampling a uniformly random question $q \in \cal{Q}$, sending it to the prover, and receiving the prover's answer $a(q)$. Therefore, using the uniformity test to replace uniformly generated challenges, we can---sweeping all the inevitable caveats under the rug---simulate $\AM(k)$ protocols in $\sf{BellQMA}(k)$.

Reality, of course, is not quite as clean: this approach to simulating $\AM(k)$ in $\sf{BellQMA}(k)$ is not as general as our vague exposition just now made it out to be. In particular, for the uniformity test technique to work (for an \emph{honest} strategy to exist that passes the uniformity test), it is vital that the answers $a(q)$ are short---constant sized, or at most logarithmically sized. The reason is that, in order to truly get the question register $Q$ into the zero Fourier state, which we define as $\ket{0}_\cal{F} =  \sum_{q \in \cal{Q}} \frac{1}{\sqrt{|\cal{Q}|}} \ket{q}$, we need to `disentangle' it from the answer register first, and this operation involves performing a measurement on the answer register and post-selecting on a measurement outcome which occurs with negligible probability if the answers are long.

In the case where the questions $q$ represent constraints in a CSP, however, and the answers $a(q)$ represent assignments to the variables involved in those constraints, the skies are clear. In essentially all well-studied CSPs, such as 3SAT and graph colouring, any single constraint and any assignment to a single variable in a constraint can be described in constantly many bits! As such, an $\AM(k)$ protocol for 3SAT in the clause-variable paradigm can indeed, at least morally, be `compiled down' into $\sf{BellQMA}(k)$ in this way---which is the starting point for the Chen-Drucker protocol.

\subsubsection{Our results about $\sf{BellQMA}(2)$}

In this work, we resolve the question of whether or not the lower bound proven by \cite{BCY} is tight for $\textsf{LOCC-QMA}(2)$ by exhibiting a $\sf{BellQMA}(2)$ (two-prover) protocol which has communication complexity $O(\sqrt{N} \cdot \log N)$ and decides 3SAT instances with constant probability of error. This question was raised by Chen and Drucker in 2010 \cite{CD10} after they published their protocol, and raised or mentioned several times since then by others \cite[Question 1]{CF11} \cite{BH13} \cite{AIM}, but despite this has remained open for more than 10 years. In resolving this question, we present an (arguably) simpler analysis of the Chen-Drucker uniformity test, as well as a more modular analysis of a $\QMA(2)$ protocol for 3SAT with sublinear communication than any other one we know of, which we hope may be conceptually useful.

As a consequence, we show that the runtime of the \cite{BCY}
algorithm (for approximating the value of a
$\textsf{LOCC-QMA}(2)$ protocol up to constant additive error) is optimal up to logarithmic factors, assuming the
Exponential Time Hypothesis. This is because any improvement to
their algorithm would, in combination with our protocol, result in a subexponential-time algorithm for 3SAT, which would contradict the ETH.

Our protocol is very similar to the Chen-Drucker protocol. The key changes we make are in the analysis, and these changes hinge on the observation that unentanglement is actually \emph{not} necessary to the soundness of the uniformity test. Chen and Drucker assume that the $k = O(\sqrt{N})$ honest provers in their protocol provide the verifier with $k = O(\sqrt{N})$ unentangled copies of a state of the form in \Cref{eq:qa-state}, and they define a $k$-state uniformity test (implementable, of course, using separate measurements on the separate states) with the following properties:
\begin{enumerate}
\item Completeness: 	honest provers providing $k$ copies of a state of the form in \Cref{eq:qa-state}, with $\alpha_q = \alpha^*$ for some constant $\alpha^*$ for all $q$, will pass the $k$-state uniformity test with probability $1 - 2^{-\Omega(k)}$.
\item Soundness: any $k$ unentangled states passing the $k$-state uniformity test with high probability will be such that sufficiently many states among the $k$ states have the form in \Cref{eq:qa-state} with $\alpha_q \approx \alpha^*$ for some constant $\alpha^*$ for all $q$.
\end{enumerate}

Our essential observation is as follows: it is not necessary for the input state to the $k$-state uniformity test to lie in $k$ unentangled registers for a certain form of soundness, which we shall shortly define, to hold. Informally, the soundness guarantee we prove is as follows:

\begin{lemma}[informal version of \Cref{lem:unif-general}]
\label{lem:informal-unif-general}
Let $\ket{\psi}$ be any state passing the Chen-Drucker $k$-state uniformity test with high probability. $\ket{\psi}$ is divided, without loss of generality, into $k$ `question registers' and $k$ `answer registers', \`a la \Cref{eq:qa-state}, which may be entangled. $\ket{\psi}$ is such that measuring all the question registers in the standard basis (approximately) yields a uniformly random string on sufficiently many registers and junk elsewhere.
\end{lemma}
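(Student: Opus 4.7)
The plan is to revisit the Chen--Drucker soundness argument and replace every invocation of unentanglement with a local argument that works for arbitrary (possibly entangled) input states. The first step is to verify that the Chen--Drucker $k$-state uniformity test is \emph{register-local}: for each fixed outcome of the verifier's internal randomness, the quantum measurement performed on $\ket{\psi}$ is a tensor product of operators acting on the $k$ individual (question, answer) register pairs, even though the verifier's internal randomness may correlate those per-register choices. This structural property implies that, on any input $\ket{\psi}$, the joint distribution over the test's measurement outcomes depends only on the marginal reduced states $\rho_i = \tr_{\neq i}(\ket{\psi}\bra{\psi})$ of the $k$ pairs, and not on any cross-register correlations in $\ket{\psi}$ beyond what these marginals already capture.

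With this observation, I would define a per-register badness quantity $b_i \in [0,1]$ for each marginal $\rho_i$, quantifying how far $\rho_i$ is from an honest marginal state (one whose question register is in the uniform superposition and whose answer register is correctly entangled with it in the sense of \Cref{eq:qa-state}). The original Chen--Drucker per-register analysis essentially proves a bound of the form ``the probability that register $i$ locally fails its portion of the test is at least $\Omega(b_i)$,'' and crucially this bound makes no reference to the states of the other registers. By linearity of expectation, the expected number of locally failing registers is therefore at least $\Omega(\sum_i b_i)$ regardless of cross-register entanglement in $\ket{\psi}$.

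The final step is to promote ``many expected local failures'' into ``test rejects with high probability.'' The honest completeness bound $1 - 2^{-\Omega(k)}$ is tolerant of a small constant fraction of locally failing registers, so the test's global rejection rule rejects as soon as the number of local failures exceeds a small constant fraction of $k$. A Markov-type inequality applied to $\sum_i b_i$ then shows that, if $\ket{\psi}$ passes with high probability, all but a small fraction of the marginals $\rho_i$ must satisfy $b_i \approx 0$ and hence be close to the honest marginal; in particular, standard-basis measurement of the question register of each such ``good'' register yields an outcome distribution close to uniform, with the other registers absorbed into the lemma's ``junk elsewhere.'' The main obstacle I expect is verifying the per-register step precisely --- that the Chen--Drucker local failure bound really depends only on $\rho_i$ and not secretly on the global state --- which requires a careful bookkeeping pass through their proof, though the locality of the test's measurements makes this a plausible and essentially mechanical rewrite rather than a fundamentally new argument.
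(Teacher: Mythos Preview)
Your approach has a genuine gap, and it diverges significantly from the paper's argument.

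\textbf{The false locality claim.} You assert that because the test's measurements are tensor products across register pairs, ``the joint distribution over the test's measurement outcomes depends only on the marginal reduced states $\rho_i$.'' This is incorrect: product measurements on an entangled state produce correlated outcomes, and the test's accept/reject decision depends on those correlations (step~2 counts how many answer registers gave $0$, step~3 requires \emph{all} corresponding question registers to give $\bar{0}$). Two states with identical marginals $\{\rho_i\}$ can have different acceptance probabilities. Linearity of expectation salvages the claim that the \emph{expected} number of per-register failures depends only on the $\rho_i$, but your argument as written leans on more than that.

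\textbf{The more serious gap: marginals versus the joint conclusion.} Even granting that passing the test forces most marginals $\rho_i$ to be close to an honest marginal, your conclusion is only that ``standard-basis measurement of the question register of each such good register yields an outcome distribution close to uniform.'' That is a statement about \emph{individual marginals} of the question-measurement distribution $\mu_Q$. The lemma, in its formal version, requires that $\mu_Q$ be close to a mixture $\sum_T p(T)\, \mu^{unif}_T \otimes \mu^{junk}_{\overline{T}}$ --- i.e.\ that conditioned on $T$, the registers in $T$ are \emph{jointly} uniform and \emph{independent} of the rest. This joint structure is what the downstream birthday-repetition analysis needs, and marginal uniformity alone does not give it. You might hope to recover joint structure by arguing that near-pure marginals force the global state to be near-product, but that step is absent from your plan, and in the approximate case it incurs a factor-of-$k$ loss that is fatal when $k = \Theta(\sqrt{n})$ and $\eps$ is constant.

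\textbf{What the paper does instead.} The paper does not go through marginals at all. It observes that the Fourier-basis measurement on the answer registers (step~1) and the standard-basis measurement on the question registers act on disjoint registers and therefore \emph{commute}. Hence the distribution $\mu_Q$ on question outcomes is the same whether computed before or after the step~1 measurement. After step~1, conditioned on a ``good'' outcome $\vec r$ (which occurs with high probability by Markov), the gentle measurement lemma shows the question registers on $T(\vec r)$ are close to $(\ket{\bar 0}\bra{\bar 0})^{\otimes T(\vec r)}$ in tensor product with the rest, giving the joint uniform structure directly. Averaging over $\vec r$ yields the mixture form with $T = T(\vec r)$. This commutation argument is the key idea and is what your per-register/marginal approach is missing.
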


\Cref{lem:informal-unif-general} is the main technical lemma in this part of our work. We will now explain why \Cref{lem:informal-unif-general} yields a two-prover $\sf{BellQMA}$ protocol for 3SAT with $O(\sqrt{N} \cdot \log N)$ communication.

\cite{AIM} exhibits an $\AM(2)$ (classical two-prover free game) protocol with $O(\sqrt{N} \cdot \log N)$ communication complexity which decides 3SAT instances with constant probability of error. This protocol (since it is inspired by the Chen-Drucker protocol) happens to be a clause-variable game which can be `simulated' by a $\sf{BellQMA}$ protocol, in the way that we described at the end of \Cref{sec:intro-bellqma}. In particular, the verifier Arthur's challenge to the first prover Alice consists of $k$ constraints in a CSP, and for each constraint Arthur expects an answer consisting of a constant-sized assignment to the variables involved in that constraint; while Arthur's challenge to the second prover Bob consists of $k$ variables from the same CSP, and for each variable he sends to Bob, Arthur expects to receive a constant-sized assignment to that variable. Leveraging the intuition which we described at the end of \Cref{sec:intro-bellqma}, therefore, we can `compile' the \cite{AIM} $\AM(2)$ protocol into a $\sf{BellQMA}(2)$ protocol: the honest strategy for either prover consists of providing $k$ copies of a state of the form in \Cref{eq:qa-state}, with $\alpha_q = \alpha^*$ for some constant $\alpha^*$ for all $q$. We can then use the Chen-Drucker $k$-state uniformity test to enforce uniformly sampled questions. Completeness holds because the answer to any given question is constantly sized, and the form of soundness which we prove in \Cref{lem:informal-unif-general} is sufficient to induce the soundness guarantees from \cite{AIM}.

The main technical observation which leads to the proof of \Cref{lem:informal-unif-general} is as follows. The Chen-Drucker $k$-state uniformity test has, informally speaking, the following structure:
\begin{enumerate}
\item Given an input state $\ket{\psi}$ on $k$ question and $k$ corresponding answer registers: measure all the answer registers in the Fourier basis. If some `large number' of the resulting measurement outcomes were zeroes, we continue; otherwise, we reject. (We will not be precise about what `large number' means here. For details, see \Cref{fig:unif-test}.)
\item For every $i \in [k]$: if the $i$th answer register measured to zero in step 1, measure the $i$th question register in the Fourier basis as well. If the answer is not zero for any such $i$, reject; otherwise, if the answer is zero for all $i$ such that the $i$th answer register measured to zero in step 1, accept.
\end{enumerate}
Intuitively, this test is trying to leverage the intuition we explained at the end of \Cref{sec:intro-bellqma} to guarantee that as many question registers as possible are in a uniform superposition. Step 1 is necessary because we must `disentangle' the answer registers from the question registers first. We will not explain the completeness property of this test in detail, since it is analysed in \cite[Section 3.1]{CD10}. Instead, we will sketch how we prove \Cref{lem:informal-unif-general}.

Assume that we have some state $\ket{\psi}$ which passes this test with probability 1. Then the measurement in step 1 will yield a `large' set of indices $\cal{S} \subseteq [k]$ such that, for all $i \in \mathcal{S}$, the $i$th answer register measured to zero in the Fourier basis. Denote the post-measurement state after the measurement in step 1 has been performed by $\rho_1$. Because $\ket{\psi}$ passes the $k$-state uniformity test with probability 1, we know that $\rho_1$ must be such that a `large number' of its question registers are in the zero Fourier state (i.e., all the question registers of $\rho_1$ indexed by $i \in \mathcal{S}$ must be in the zero Fourier state---otherwise, step 2 would reject). Therefore, if we hypothetically measured the question registers of $\rho_1$ in the \emph{standard} basis, we would get uniformly random outcomes on a `large number' of the question registers, and junk elsewhere.

The key observation is that this latter hypothetical standard basis measurement on the question registers and the Fourier basis measurement which we performed in step 1 on the answer registers \emph{commute}---because they are performed on different registers. As such, even if we do not firstly measure the answer registers of $\ket{\psi}$ as the test prescribes, and instead directly measure the question registers of $\ket{\psi}$ in the standard basis, we will get uniformly random outcomes on a `large number' of the question registers, and junk elsewhere, just as if we had measured the question registers of $\rho_1$. \Cref{lem:informal-unif-general} follows.

\subsection{$\AM^*(2)$}

\subsubsection{Background and previous work}
\label{sec:intro-am*-background}

The close connection between $\sf{BellQMA}(2)$ and $\AM(2)$ which we explained at the end of Section \ref{sec:intro-bellqma} suggests that $\sf{BellQMA}(2)$ should be considered a `quantum analogue' of $\AM(2)$. However, there is another quantum class which has equally strong claims upon the title. This is the class of problems which can be decided by a \emph{classical} verifier who referees a free game with two unbounded provers \emph{who are allowed to share entanglement}. Following \cite{AIM}, we denote this class by $\AM^*(2)$. As far as we know, Aaronson, Impagliazzo and Moshkovitz were the first ones to define this class \cite[Section 8]{AIM}, and they left characterising its power relative to $\AM(2)$ as an open problem.

Studying the power of `entangled versions' of classical multiprover
classes has a long and fruitful history \cite{CHTW04,IV12,RUV13,FNT14,Ji17}, and has recently led to some surprising and deep
results~\cite{JNVWY20} with connections to pure mathematics. It is not \emph{a priori} clear whether allowing entanglement between the two provers increases or decreases the deciding power of the verifier. On the one hand, the entanglement might allow the provers to help the verifier more effectively, but on the other hand, it might also allow them to cheat more effectively. This is a familiar story: we have seen the same question of whether entanglement helps or hurts arise and be resolved several times already in the history of $\MIP^*(2)$ (the entangled version of $\MIP$) and variants of that class. Ji, Natarajan, Vidick, Wright, and Yuen recently showed that $\MIP^*(2) = \RE$ \cite{JNVWY20}, which clearly indicates that, in the plain multiplayer game model, allowing entanglement increases the deciding power of the verifier (from $\NEXP$ to $\RE$!). On the other hand, it is far from a foregone conclusion that allowing entanglement makes any given multiprover proof system more powerful. For example, it is known that the entangled version of $\oplus \MIP$, a version of $\MIP$ in which the verifier's decision is simply the XOR of two one-bit answers from the two provers, is inside $\sf{EXP}$, even though $\oplus \MIP$ itself is equal to $\NEXP$ \cite{Weh06}. We can conclude from these examples only that it is not clear \emph{a priori} how $\AM^*(2)$ ought to relate to $\AM(2)$.

\subsubsection{Our results about $\AM^*(2)$}
\label{sec:intro-AM*}

In this work, we resolve the open question about the power of $\AM^*(2)$ which was posed by Aaronson, Impagliazzo and Moshkovitz, by showing that, in fact, $\AM^*(2) = \MIP^*(2) = \RE$. In other words, quantum free games are just as powerful as general quantum multiplayer games, even though in the classical world the free-game restriction results in a significant decrease in the verifier's deciding power!

We note that the best lower bound on $\AM^*(k)$ prior to our work, due to Brand\~{a}o and Harrow \cite[Corollary 4]{BH13}, was $\NP \subseteq \AM^*(\sqrt{N})$. In particular, Brand\~{a}o and Harrow showed that there is an $\AM^*(\sqrt{N})$ protocol (analogous to the \cite{AIM} $\AM(2)$ protocol) with $\sqrt{N}$ provers and $O(\sqrt{N} \cdot \log N)$ total communication that decides $N$-clause 3SAT with constant probability of error. Our result subsumes this result: explicitly, we show that there is an $\MIP^*(2)$ protocol with \emph{constant}-sized questions, and answer sizes growing as $\poly\log(n)$, that is capable of deciding all of $\RE$ with constant probability of error, where $n$ is the size of the problem instance being decided. Since free games with constant-sized questions are equivalent to general games with constant-sized questions, we obtain a very communication-efficient $\AM^*(2)$ protocol for $\RE$. We prove this result by using the powerful machinery developed by Ji, Natarajan, Vidick, Wright and Yuen which was not available in 2013 to Brand\~{a}o and Harrow.

The key difference between quantum and classical free games, which allows $\AM^*(2) = \MIP^*(2)$ even though $\AM(2)$ and $\MIP$ are significantly different in power, is that allowing the provers to share entanglement opens up access to the tools provided by the self-testing literature, which allows us to get around the `birthday repetition barrier' we identified in the first section of this introduction. In particular, self-testing allows us to use relatively little communication to force the two provers to \emph{introspect}, namely to generate their own (long) questions, when they play an entangled game, and thus allows us to avoid having to send very large questions in order to achieve a constant probability of free collisions. The machinery of introspection was introduced in \cite{NW19} in order to prove $\MIP^* \supseteq \sf{NEEXP}$, and is at the heart of the \emph{compression theorems} which led to $\MIP^* = \RE$. Compression theorems are transformations that take as input some multiplayer game with long questions and answers and large verifier complexity, and output a new multiplayer game with (usually exponentially) smaller questions and answers and verifier complexity, \emph{that has about the same value as the original game}: in particular, the fact of whether the value of the original game was $= 1$ or $\leq 1/2$ should be preserved. That compression theorems can exist at all for entangled games is testament to the marvellous power of self-testing theorems. In particular, the proof that $\MIP^* = \RE$ follows by (in a sense) recursively applying a compression theorem.

In this work we take the compression theorems that were used to prove $\MIP^* = \RE$ and `bootstrap' them to prove that $\MIP^*$ is equal to $\MIP^*$ with constantly sized questions. Specifically, we prove what we term a \emph{hypercompression theorem} (\Cref{thm:gapped-hyper}), which is also the result of recursively applying compression theorems, but in a slightly different way from the way that appears in the proof of $\MIP^* = \RE$. Our hypercompression theorem starts with any $\MIP^*$ protocol with polynomially long questions and answers, and applies a general compression theorem once in order to turn it into a protocol with polylogarithmically long questions and answers, before recursively applying a \emph{question reduction} theorem to bring the question size down to constant while more or less preserving the answer size. (The efficacy of this recursive application procedure is dependent on the structure of the question reduction theorem---in particular, we cannot reduce the answer size in quite the same way, for reasons related to the fact that the efficacy of answer reduction depends on the running time of the verifier in the original game while question reduction does not.) The question reduction procedure that we use is similar to the one in \cite{JNVWY20}, although we believe that, by incorporating recent improvements to the analysis of question reduction made by de la Salle \cite{dlS22}, one would be able to prove a better question reduction theorem that might be a stepping stone towards an $\MIP^*$ protocol for $\RE$ with constantly sized questions and (truly, or up to a factor of $O(\log^*n)$) logarithmically sized answers. Due to time constraints, we leave this improvement for a future version of the paper. We discuss this possibility in more detail in \Cref{sec:open-problems}.


Two remarks about this result are in order for the benefit of the interested reader.
\begin{itemize}
\item We also prove the \emph{gapless} version (\Cref{thm:mip0gap-main}) of this result---namely, that zero-gap $\MIP^*$ is equal to zero-gap $\MIP^*$ with \emph{constantly} sized questions and $O(\log n \cdot \log^*n)$ sized answers. Zero-gap $\MIP^*$ is the same as normal $\MIP^*$ except that, for no-instances, the verifier's acceptance probability is only required to be strictly less than 1 instead of $\leq \frac{1}{2}$. In order to get gapless analogues of the question reduction and answer reduction theorems of \cite{JNVWY20}, we look to \cite{MNY20}, which proves that zero-gap $\MIP^*$ is equal to $\Pi_2$.
\item A natural corollary of our gapless hypercompression theorem is that there is a (non-robust) two-prover test for $n$ EPR pairs that uses only \emph{constantly} sized questions (and $O(n)$ sized answers). This result (\Cref{thm:epr-gapless}) arises from applying the gapless hypercompression theorem to the `question sampling game' of \cite{MNY20}, which self-tests for $n$ EPR pairs, because hypercompression also preserves entanglement bounds. (We believe that it may be possible to improve the $O(n)$-sized answers to $\poly\log(n)$ by applying a round of gapless answer reduction to the question sampling game before we apply hypercompression.) As far as we know, this is the first nonlocal game\footnote{There are \emph{nonlocal correlations} with constant-sized questions~\cite{CGS17}, and indeed constant-sized questions and answers~\cite{Fu19} that self-test maximally entangled states of arbitrarily high dimension. However, this is a different notion of self-testing, where one requires not just the winning probability to be close to optimal, but the entire distribution of answers given questions to be close to a target distribution.} in the literature which achieves a self-test for a growing number of EPR pairs using constantly sized questions (see \cite[Table 1]{BS20}). We leave obtaining an analogous result in the gapped case, which would result in a \emph{robust} two-prover test for $n$ EPR pairs with constant sized questions, as an open problem; see \Cref{sec:open-problems} for more discussion.
\end{itemize}

\subsection{Lower bounds on $\MIP^*$ protocols from Kolmogorov complexity}
\label{sec:intro-lower-bounds}
Our previous result shows that $\MIP^*$ with constant question complexity and polylogarithmic answer complexity is equal to general $\MIP^*$ (with polynomial question, answer and decision complexity). (We also prove that zero-gap $\MIP^*$ with constant question complexity and almost-logarithmic answer complexity is equal to general zero-gap $\MIP^*$.) It is natural to ask how far we can push in this direction. For example, is $\MIP^*$ with constantly sized questions and (truly) $O(\log n)$ sized answers equal to general $\MIP^*$? What about $\MIP^*$ with constantly sized questions and, say, $O(\log \log n)$ sized answers?

Our final set of results shows that the parameters we can achieve by using hypercompression (see the previous section of this introduction) are in fact almost tight. Specifically, we prove (\Cref{thm:lower-bound}) that any $\MIP^*$ protocol deciding all of $\RE$---in fact, any $\MIP^*$ protocol deciding all of $\sf{EEXP}$---must have $q(n) + a(n) \geq \frac{1}{2} \log n$, where $n$ is the instance size and $q(n)$ and $a(n)$ are the question and answer sizes (for a single prover) in the protocol respectively. An identical lower bound holds on question and answer sizes for gapless $\MIP^*$ protocols deciding all of $\sf{EEXP}$. In particular, the latter shows that we have essentially already achieved a tight characterisation of zero-gap $\MIP^*$ as far as question and answer complexity are concerned: \cite{MNY20} exhibited a zero-gap $\MIP^*$ protocol for $\Pi_2$ with $O(\log n)$ question complexity and $O(1)$ answer complexity, and we exhibit a zero-gap $\MIP^*$ protocol for $\Pi_2$ with $O(1)$ question complexity and $O(\log n \cdot \log^*n)$ answer complexity, the former of which matches the lower bound up to constant factors, and the latter of which matches the lower bound up to a factor of $O(\log^*n)$. In the gapped case, some degree of leeway remains between the upper and the lower bound---in particular, the lower bound has $q(n) + a(n) \geq \frac{1}{2}\log n$, but the best upper bound that we believe current techniques could prove only has $q(n) + a(n) = \poly\log(n)$. We think that the upper bound is the one that can be tightened, and we leave closing the gap as an interesting open problem whose resolution may have other significant implications. (See our open problems section, \Cref{sec:open-problems}, for more discussion of this.)

We prove this lower bound by observing a connection between the sizes of questions and answers in an $\MIP^*$ protocol deciding a computational problem and the size of the \emph{advice} that a deterministic Turing machine must take to solve the same problem (or, equivalently, the size of the description of a Turing machine that solves the same problem). More specifically, we show a way to convert any $\MIP^*$ protocol with questions of size $q(n)$, answers of size $a(n)$, and verifier time complexity $t(n)$ deciding a language $L$ into a deterministic Turing machine running in time roughly $2^{t(n)}$ and taking advice of length roughly $2^{2^{2(q(n) + a(n))}}$ which also decides $L$. We then observe that one can use techniques from time-bounded Kolmogorov complexity theory to show that $\sf{EEXP}$ cannot be decided by any Turing machine running in time $2^{\poly(n)}$ and taking $\eps 2^{cn}$ advice for any $\eps + c < 1$.\footnote{One may ask why we had to prove this, i.e. why we did not use known circuit lower bounds for large time classes. The answer is that, because the \emph{advice complexity} of the Turing machine which we obtain from `converting' the $\MIP^*$ protocol is more sensitive to $q(n) + a(n)$ than the \emph{running time} of the same Turing machine, we wanted a lower bound which treated running time and advice separately. In particular, $t(n)$ (the verifier's time complexity in the $\MIP^*$ protocol) could be any arbitrary polynomial in $n$, e.g. $n^{100}$, and may not depend explicitly on $q(n) + a(n)$ (which here could be sub-logarithmic). Because any language is decidable by circuits of size $2^n$, and the running time of the Turing machine $M$ which comes out of our `conversion' process is $2^{t(n)}$, we would not be able to prove any substantial conclusions about $q(n) + a(n)$ by comparing the circuit version of $M$ with known circuit lower bounds if $t(n)$ happened to be $n^{100}$, since then the complexity of the circuit version of $M$ would already be large enough to decide any language even if we only counted the $2^{t(n)} \approx 2^{n^{100}}$ gates that came from encoding the tableau of $M$'s execution. On the other hand, since the \emph{advice complexity} of the Turing machine $M$ depends sharply on $q(n) + a(n)$, a lower bound on $\sf{RE}$ (or $\sf{EEXP} \subseteq \RE$) that has a precise dependence on advice and a looser dependence on time complexity serves our purposes well.

We remark that another lower bound for $\RE$ with a precise dependence on advice and a looser dependence on time complexity is the bound which states that no finite-time Turing machine can solve the halting problem with fewer than $\approx n$ bits of advice. However, this bound is `too much in the other direction', i.e. the lower bound on the advice is very weak because the running time is allowed to be any finite time, and therefore potentially much larger than $2^{\poly(n)}$. We wanted a bound which captured a trade-off between running time and advice that would allow us to derive a logarithmic lower bound on $q(n) + a(n)$, and so we proved the bound stated in the main text.} Combining the two statements shows our claimed lower bound, since an $\MIP^*$ protocol for $\RE$ with very small questions and answers would result in a Turing machine to decide $\RE$ that takes comparatively little advice, which would contradict the lower bound on $\sf{EEXP}$.



\subsection{Related work}
We have already addressed much of the literature relevant to our work in sections \ref{sec:intro-bellqma} and \ref{sec:intro-am*-background}; in this section we briefly mention some other related work which we have not yet discussed.
\begin{itemize}
\item Chiesa and Forbes \cite{CF11} address several questions relevant to the question of whether there is a $\sf{BellQMA}(2)$ protocol for $\NP$ with sublinear communication, and in particular also produce a tighter analysis of the Chen-Drucker protocol \cite{CD10}. The original Chen-Drucker analysis simply yielded constant soundness and $1 - \exp(-\sqrt{N})$ completeness in the $O(\sqrt{N})$-prover setting. Chiesa and Forbes obtain a smooth trade-off between the number of provers and the completeness-soundness gap: in particular, they show that the Chen-Drucker protocol executed with $\kappa$ provers (where each prover sends the verifier a witness state that is $O(\log N)$ qubits long) has a completeness-soundness gap of $\Omega(\kappa^2 N^{-1})$, as long as $\kappa = \Omega(\log N)$. However, Chiesa and Forbes do not appear to consider protocols like our protocol, in which each of constantly many (in our case, 2) provers provides the verifier with $O(\sqrt{N})$ qubits, and in which each of the two witness states is (in the case of honest provers) really many copies of the state which each prover would have sent in the original Chen-Drucker protocol, `batched together' under a single prover.
\item Brand\~ao and Harrow \cite{BH13} show, among many other things, that the Chen-Drucker protocol is essentially optimal in the setting of $O(\sqrt{N})$ \emph{symmetric} $\sf{BellQMA}$ provers: that is, they show that any $\sf{BellSymQMA}(\sqrt{N})$ protocol with constant soundness and $O(n^{1/2 - \eps})$ communication (for any $\eps > 0$) would contradict the ETH, where $\sf{BellSymQMA}$ is a further restriction on $\sf{BellQMA}$ in which each prover must send the same state to the verifier. Their result can be understood as an analogue of the \cite{BCY} lower bound in the $\sf{BellSymQMA}(\sqrt{N})$ setting. It is not \emph{a priori} clear that $\sf{BellSymQMA}(\sqrt{N})$ (with $O(\sqrt{N} \cdot \log N)$ communication) can be simulated in $\sf{BellQMA}(2)$ with $O(\sqrt{N} \cdot \log N)$ communication, since having stronger unentanglement guarantees might be useful to the verifier; however, the other direction is not clear either, since in principle a $\sf{BellQMA}(2)$ protocol could require Arthur and Lancelot to do arbitrary measurements on their two separate witness states, which could be entangling across any series of cuts that would attempt to divide those two witness states into $O(\sqrt{N})$ separate witness states. Therefore, the power of $\sf{BellQMA}(2)$ relative to $\sf{BellSymQMA}(\sqrt{N})$ is simply not understood.

It so happens that our $\sf{BellQMA}(2)$ protocol for 3SAT \emph{can} be simulated in $\sf{BellSymQMA}(\sqrt{N})$, because we did not need Arthur and Lancelot to perform any highly entangling measurements for completeness to hold. Meanwhile, in the other direction, our result can also be understood as a proof that the extra unentanglement in the Chen-Drucker protocol does not actually afford it much extra power. We believe this illuminates a potentially interesting connection, because our results suggest either that the difference of unentanglement between $\sf{BellQMA}(2)$ and $\sf{BellQMA}(\sqrt{N})$ is not actually very consequential, or that we have yet to fully exploit its power.
\end{itemize}


\subsection{Open questions}
\label{sec:open-problems}
\begin{enumerate}
\item \textbf{Putting $\MA$ or $\AM$ in $\QMA(k)$ with small communication complexity.} There is now a long line of works about $\QMA(2)$ protocols for $\NP$ with sublinear communication. It is natural then to ask: is there a $\QMA(k)$ protocol (or an $\AM(k)$ protocol) with sublinear communication for $\AM$ (or $\MA$)? The main obstacle here is that we do not have a `PCP theorem' for $\MA$ or $\AM$ (in the sense that we have one for $\NP$), unless $\MA = \NP$ (resp. $\AM = \NP$), but the birthday paradox trick which puts $\NP$ in $\QMA(2)$ with sublinear communication complexity relies centrally on having a very short (short in terms of proof length) PCP for $\NP$. Alternatively, could we prove that, if $\MA$ (or $\AM$) is in $\QMA(k)$ with sublinear communication complexity, then $\MA = \NP$ (or $\AM = \NP)$?
\item \textbf{A tight gapped $\MIP^*$ protocol for $\RE$.} As we mentioned in \Cref{sec:intro-lower-bounds}, there is some leeway between our lower bound on the communication complexity of any $\MIP^*$ protocol for $\mathsf{EEXP}$ and the upper bound which we believe we can achieve by applying hypercompression to the \cite{JNVWY20} $\MIP^*$ protocol for $\RE$. One way to prove a tight upper bound would be to show that there are \emph{entanglement-sound PCPPs for $\NP$}---that is, to prove the result which \cite{NV18} claimed to prove, but whose proof subsequently turned out to have a bug. Such a result would yield a \emph{gapped answer reduction theorem} that has similar parameters to those of the gapless answer reduction theorem (\Cref{thm:ar-gapless}) which we make use of in \Cref{sec:gapless}. We believe that, in order to get a gapped \emph{question} reduction theorem with similar parameters to those of the gapless question reduction theorem (\Cref{thm:intro-gapless}) that we make use of in \Cref{sec:gapless}, one could apply the recent results of de la Salle \cite{dlS22}. Since our hypercompression theorem is able to use the gapless question and answer reduction theorems of \cite{MNY20} to get a zero-gap $\MIP^*$ protocol for $\Pi_2$ with $O(1)$ questions and $O(\log n \cdot \log^* n)$ answers, we believe we would be able to construct a protocol with similar parameters for the gapped setting if we had equally strong gapped question and answer reduction theorems.
\item \textbf{A \emph{rigid} self-test for $n$ EPR pairs with
    constant-sized questions.} The techniques in this work can only
  yield a bound on the \emph{dimension} of the Hilbert space shared by
  any pair of provers who pass in our so-called `self-test for $n$ EPR
  pairs' with constantly sized questions. Is there a self-test for $n$
  EPR pairs with constantly sized questions (and, say, $\poly(n)$
  sized answers) which guarantees that any two provers who pass in the
  self-test must be using a particular strategy, up to local
  isometries? This is a very powerful and useful property of most
  self-tests for EPR pairs in the literature which is known as
  rigidity.
\item \textbf{A \emph{robust} test for Schmidt rank $n$ with constant-sized
    questions.}
In this work, the only entanglement bounds we can obtain are in the
\emph{gapless} case (i.e. for perfect strategies). Can we show a game
with constant-sized questions where any strategy achieving value $\geq
1/2$ must have \emph{Schmidt rank} at least $n$? (The Schmidt rank is
the number of nonzero Schmidt coefficients, and is a relatively loose
characterization of entanglement.) Such a bound was obtained by
\cite{JNVWY20} for their compression theorems, but we cannot use it
for an interesting technical reason: in our work, in order to perform
parallel repetition for games with large answer sizes, we must use the
analysis of \cite{DSV15}, rather than that of \cite{BVY17}. However,
this analysis does not preserve entanglement bounds, since the
reduction from the parallel repeated strategy to a single-round
strategy requires adding a large amount of entanglement. We refer the
reader to the discussion in \cite[Section 11]{JNVWY20} for more
details on this point.

\item \textbf{A communication lower bound for a self-test for $n$ EPR pairs.} We get a lower bound on the communication complexity of $\MIP^*$ protocols for $\RE$ which almost matches the upper bound we achieve by using hypercompression. As we mention in \Cref{sec:intro-AM*}, another consequence of (gapless) hypercompression is a non-robust self-test for $n$ EPR pairs with $O(1)$ sized questions and (probably, using answer reduction) $\mathrm{\poly\log(n)}$ sized answers. Is there a way to lower bound the communication complexity of a self-test for $n$ EPR pairs using computational arguments, as we did the communication complexity of $\MIP^*$ protocols for $\RE$? It is not at once obvious how to do this, since such a self-test does not directly solve any well-understood computational problem.
\item \textbf{Infinite randomness expansion with two provers.}
  Hypercompression yields a self-test for $n$ EPR pairs with
  \emph{constantly} sized questions. It is tempting then to ask: can
  we do \emph{infinite randomness expansion} \cite{CY13} using only 2
  provers by using this self-test? The na\"ive approach does not work
  because, in the `question sampling game' of \cite{MNY21} and the
  `introspection game' of \cite{JNVWY20}, the probability that the provers are asked the `introspect' questions which cause them to generate randomness (as opposed to being asked questions that test their consistency with each other) decreases by a constant factor every time one applies question reduction, and we need to apply question reduction approximately $\log^*(n)$ times in order to make the questions constant sized. Can this obstacle be gotten around?
\end{enumerate}

\section{Preliminaries}
\label{sec:prelims}
\subsection{Probability basics}
We can represent a probability distribution $\mu : \Omega \rightarrow [0,1]$ over a finite sample space $\Omega$ as a vector $\vec \mu$ of length $|\Omega|$ such that the $i$th entry of the vector $\vec \mu$ is exactly $\mu(i)$. For two probability distributions $\mu, \nu$ over sample spaces $\Omega$ and $\Omega'$, we then denote by $\mu \otimes \nu$ the probability distribution over $\Omega \times \Omega'$ whose vector representation is the vector $\vec{\mu} \otimes \vec{\nu}$.
    \subsection{Quantum information basics}
    \begin{definition}
      For $K \in \mathbb{N}$, the \emph{quantum Fourier transform}
      $\cal{F}_K$ is the unitary map over $\mathbb{C}^{K}$ defined by
      \begin{equation}
        \cal{F}_K \ket{s} = \frac{1}{\sqrt{K}}\sum_{t
          =0}^{K-1} \omega_K^{s \cdot t} \ket{t},
      \end{equation}
      where $\omega_K = \exp(2\pi i / K)$. This map defines the
      \emph{Fourier basis} consisting of the states
      $\ket{\bar{s}} = \cal{F}_K \ket{s}$ for $s \in \{0, \dots,
      K-1\}$. In particular,
      \begin{equation}
        \ket{\bar{0}} = \frac{1}{\sqrt{K}} \sum_t \ket{t} .
      \end{equation}
    \end{definition}
    \begin{definition}
    \label{def:bell-meas}
      A \emph{Bell measurement} is a two-outcome measurement $\{M, \id-
      M\}$ on a bipartite Hilbert space $\mathcal{H}_{AB} = \cal{H}_{A}
      \ot \cal{H}_B$ that can be implemented by separately measuring the
      $A$ and $B$ registers with a POVM measurement, and then applying
      a classical Boolean function to the measurement outcomes. In
      other words, there exist POVMs $\{A_a\}$ acting on $\cal{H}_A$
      and $\{B_b\}$ acting on $\cal{H}_B$, and a Boolean function $f$ such that
      \[ M = \sum_{a,b: f(a,b) = 1} A_a \ot B_b. \]
    \end{definition}
    \subsection{Kolmogorov complexity}
        \begin{definition}
\label{def:time-bounded-kolmogorov}
The following definition is based on Definition 2 from \cite{hm95}.

Fix a choice of a universal simulator $T$ (a Turing machine). Let $K[f,g]$ be the set of strings that can be produced by $T$ running on strings of length $f$ for time $g$. More formally,
\[
K[f,g] = \{ u : \exists w \text{ s.t. } |w| \leq f(|u|), T(w) = u \text{ and this result is obtained in at most $g(|u|)$ steps of $T$}\}.
\]

Lemma 2.1 in \cite{hm95} justifies the use of a fixed universal simulator by showing that the values of $K[f,g]$ do note change very much if we switch $T$ for a different universal simulator.

\end{definition}
\subsection{Nonlocal games and $\MIP^*$ protocols}
    In this section, we define basic notions concerning nonlocal
    games, strategies, and protocols. For a more detailed treatment,
    we refer the reader to \cite{MNY21} and \cite{JNVWY20}.

    \subsubsection{Nonlocal games}
    
    The following definitions are from \cite{JNVWY20}.
    
    \begin{definition}[Two-player one-round games]
  \label{def:game}
  A \emph{two-player one-round game} $\game$ is specified by a tuple
  $(\cal{X}, \cal{Y}, \cal{A}, \cal{B}, \mu, D)$ where
  \begin{enumerate}
  \item $\cal{X}$ and $\cal{Y}$ are finite sets (called the \emph{question
      alphabets}),
  \item $\cal{A}$ and $\cal{B}$ are finite sets (called the \emph{answer
      alphabets}),
  \item $\mu$ is a probability distribution over $\cal{X} \times \cal{Y}$
    (called the \emph{question distribution}), and
  \item $D: \cal{X} \times \cal{Y} \times \cal{A} \times \cal{B} \to \{0,1\}$ is
    a function (called the \emph{decision predicate}).
  \end{enumerate}
\end{definition}

\begin{definition}[Tensor product strategies]
  \label{def:tensor-product-strategy}
  A \emph{tensor product strategy} $\strategy$ for a game $\game = (\cal{X},
  \cal{Y}, \cal{A}, \cal{B}, \mu, D)$ is a tuple $(\ket{\psi}, A, B)$ where
  \begin{itemize}
	\item $\ket{\psi}$ is a pure quantum state, i.e.\ a unit vector in $\cal{H}_A \otimes \cal{H}_B$ for finite
    dimensional complex Hilbert spaces $\cal{H}_A, \cal{H}_B$,
	\item $A$ is a set $\{A^x\}$ such that for every $x \in \cal{X}$, $A^x =
    \{A^x_a \}_{a \in \cal{A}}$ is a POVM over $\cal{H}_A$, and
	\item $B$ is a set $\{B^y\}$ such that for every $y \in \cal{Y}$, $B^y =
    \{B^y_b \}_{b \in \cal{B}}$ is a POVM over $\cal{H}_B$.
\end{itemize}
\end{definition}

\begin{definition}[Tensor product value]
  \label{def:tensor-product-value}
	The \emph{tensor product value} of a tensor product strategy $\strategy =
  (\ket{\psi}, A, B)$  with respect to a game $\game=(\cal{X}, \cal{Y}, \cal{A},
  \cal{B}, \mu, D)$ is defined as
  \begin{equation*}
		\val^*(\game, \strategy) = \sum_{x,\, y,\, a,\, b} \, \mu(x,y)\, D(x,y,a,b)\,
    \bra{\psi} A^x_a \otimes B^y_b\, \ket{\psi}\;.
  \end{equation*}
	For $v\in[0,1]$ we say that the strategy $\strategy$ \emph{passes (or wins)
    $\game$ with probability $v$ if} $\val^*(\game, \strategy) \geq v$.
  The \emph{tensor product value} of $\game$ is defined as
  \begin{equation*}
		\val^*(\game) = \sup_\strategy \val^*(\game, \strategy)\;,
  \end{equation*}
	where the supremum is taken over all tensor product strategies $\strategy$ for
  $\game$.
\end{definition}

\begin{remark}
  Unless specified otherwise, all strategies considered in this paper are tensor
  product strategies, and we simply call them \emph{strategies}.
  Similarly, we refer to $\val^*(\game)$ as the \emph{value} or \emph{quantum value} of the game
  $\game$.
\end{remark}
    
%

\begin{definition}\label{rem:symmetric-games}
	A game $\game = (\cal{X}, \cal{Y}, \cal{A}, \cal{B}, \mu, D)$ is
  \emph{symmetric} if the question and answer alphabets are the same for both
  players (i.e.\
  $\cal{X} = \cal{Y}$ and $\cal{A} = \cal{B}$), the distribution $\mu$ is
  symmetric (i.e.\
  $\mu(x,y) = \mu(y,x)$), and the decision predicate $D$ treats both players
  symmetrically (i.e.\ for all $x,y,a,b$, $D(x,y,a,b) = D(y,x,b,a)$).
 
We call a strategy $\strategy = (\ket{\psi}, A, B)$
  \emph{symmetric} if $\ket{\psi}$ is a (pure) state in $\mH \otimes \mH$, for some
  Hilbert space $\mH$, that is invariant under permutation of the two factors,
  and the measurement operators of both players are identical.
\end{definition}

We often specify symmetric games $\game$ and symmetric strategies $\strategy$ using
  a compact notation: we write $\game = (\cal{X}, \cal{A}, \mu, D)$ and
  $\strategy = (\ket{\psi}, M)$ where $M$ denotes the set of measurement
  operators for both players.
  
    \begin{definition}
      A \emph{synchronous game} $G = (\cal{X}, \cal{A}, \mu, D)$ is
      one where for all $x \in \cal{X}$ and $a, b \in \cal{A}$ with $a
      \neq b$, it holds that $D(x, x, a, b) = 0$.  A
      \emph{finite-dimensional synchronous strategy} is one where both
      players share a maximally entangled state, and for every
      question $x$ and outcome $a$, Alice and Bob's measurement
      operators $A^x_a, B^x_a$ are projective and satisfy $B^x_a =
      (A^x_a)^T$. 
    \end{definition}

    \begin{definition}
      For a synchronous nonlocal game $G$ with a value $1$
      finite-dimensional strategy, the entanglement bound
      $\cal{E}(G)$ is the minimum dimension of a synchronous strategy
      that achieves value $1$.
    \end{definition}
    
    \subsubsection{Gapless game families and $\AM^*_0(2)$ protocols}
    \label{sec:prelims-gapless}
    For our gapless results, following \cite{MNY21} we will give
    bounds on the \emph{synchronous} value only. We do not consider
    this to be a major restriction, as for synchronous
    games, the synchronous value
    preserves all of the qualitative fatures of the entangled value,
    while being technically much cleaner to work with. It is also
    known that in the gapped case, for synchronous games, the
    synchronous value is close to the quantum value~\cite{Vid22}.
    

    Roughly speaking, a language has a zero-gap $\MIP^*$ if
    for every instance $x$ in the language, the corresponding game has
    entangled value $1$, and for every instance not in the language,
    the corresponding game has value less than $1$. For any nonlocal
    game $G$ with question distribution $\mu$, the property of having
    value $1$ is preserved under changing the question distribution to
    any other distribution $\mu'$ as long as it has the same support
    as $\mu$.  Thus, following \cite{MNY21}, we define a zero-gap
    protocol in terms of two Turing machines: a \emph{decider} $D$, that
    given a pair of questions and answers decides whether the game was
    won, and a \emph{checker} $C$, that given a pair of questions decides
    whether they are in the support of the distribution $\mu$. To
    emphasize that the question distribution is not relevant here, and
    to make connections to the gapped case, we will call this class
    $\AM_0^*(2)$. In order to define it, let us first formalize the
    specification of a game in terms of a decider and checker.

    \begin{definition}
      Given a natural number $q$ and a pair of Turing machines $D, C$
      with $D$ taking input $x, y, a, b$ and $C$ taking input $x,y$,
      suppose that $D$ and $C$ always halt when $x, y \in \{0,1\}^q$
    and $D$ reads no more than $m$ bits of $a,b$ when $x, y \in
    \{0,1\}^q$ for some finite $m$. Then the associated nonlocal game, or the \emph{game given by $q$ and $(D,C)$},
    is defined by $(\cal{X} =
      \{0,1\}^q, \cal{A} = \{0,1\}^m, \mu, D')$, where $\mu$ is the
      uniform distribution over $\cal{X} \times \cal{X}$, and $D'$ on
      input $x,y,a,b$ (1) first runs the checker $C(x,y)$ and
      automatically accepts if the checker \emph{rejects} (i.e. declares $x,y$ an invalid question pair), and (2) then runs $D(x,y,a,b)$ and accepts if it
      accepts, and rejects otherwise.
    \end{definition}
    For all pairs $D,C$ that we consider in this paper, the existence
    of $m$ will be clear, and so we do not prove that $m$ exists. 
    
    In order to discuss compression theorems, it is useful to define
    families of nonlocal games indexed by a natural number $n$.
    \begin{definition}
      \label{def:game-family-gapless}
      A \emph{family of zero-gap games} is specified by a triple
      $(D, C, Q)$ of Turing machines where $D$ takes input $(n, x, y,
      a,b)$, $C$ takes input $(n, x,y)$, and $Q$ takes input $n$. The input $n$ is called
      the \emph{index}. For index $n$, the corresponding game is
      the one specified by $Q(n)$ and $(D(n, \cdot, \cdot ,\cdot,
      \cdot), C(n, \cdot, \cdot))$.
            We say that game family has \emph{question
        length} $q(n)$ if the output of $Q$ on input $n$ is at most $q(n)$, \emph{answer length}
      $a(n)$ if $D$ on input $n$ reads no more than $a(n)$ bits of the
      answers $a,b$, and \emph{decider runtime} $t_d(n)$ and
      \emph{checker runtime} $t_c(n)$ if $D$ and $C$ run in the
      respective runtime on input $n$. 
    \end{definition}

    \begin{definition}
    \label{def:am*0-protocol}
      An $\AM^*_0(2)$ protocol is a triple of Turing machines $(D, C, Q)$ such
      that
      \begin{itemize}
      \item $D$ takes input $z, x, y, a,b$ and runs in time polynomial
        in $|z|$.
      \item $C$ takes input $z, x, y$ and runs in time polynomial in
        $|z|$.
      \item $Q$ takes input $z$ and runs in time polynomial in $|z|$.
      \end{itemize}
      We say that the protocol specified by $(D, C,Q)$ decides the
      language $L$ if for any $z$,
      \begin{itemize}
        \item \textbf{Completeness:} if $z \in L$, then the nonlocal
          game $G^z$ given by $Q(z)$ and $(D(z, \cdot, \cdot, \cdot, \cdot), C(z, \cdot,
          \cdot))$ has value $\omega_q(G^z) = 1$.
        \item \textbf{Soundness:} if $z \not\in L$, then the nonlocal
          game $G^z$ given by $Q(z)$ and $(D(z, \cdot, \cdot, \cdot, \cdot), C(z, \cdot,
          \cdot))$ has value $\omega_q(G^z) < 1$.
      \end{itemize}
    \end{definition}
    Definitions \ref{def:game-family-gapless} and \ref{def:am*0-protocol} are not quite compatible, because the Turing machines in the former take the index $n$ as the input, whereas the Turing machines in the latter take as input an arbitrary string $z$. We bridge the gap by defining the notion of an \emph{$n$-indexed $\AM^*_0(2)$ protocol}. It turns out that all the $\AM^*_0(2)$
    protocols we construct will be $n$-indexed protocols.
    \begin{definition}\label{def:nindexed-amstar}
      An $n$-indexed $\AM^*_0(2)[q(\cdot), a(\cdot), t_d(\cdot), t_c(\cdot)]$ protocol is a pair of Turing machines
      $GenG, N$ such
      that
      \begin{itemize}
      \item $GenG$ takes as input $z$ and in polynomial time outputs a description of a nonlocal game
        family specified by a triple $D^z, C^z, Q^z$. These expect input of the form $n,x,y,a,b$
        and $n,x,y$, respectively.
      \item $N$ takes as input $z$ and in polynomial time outputs a
        natural number $n$ with $n = \poly(|z|)$.
      \item $D^z(N(z), x, y, a, b)$ reads at most $q(|z|)$ bits of $x,
        y$ and at most $a(|z|)$ bits of $a,b$, and runs in time at
        most $t_d(|z|)$.
      \item $C^{z}(N(z), x, y)$ reads at most $q(|z|)$ bits of $x, y$
        and runs in time at most $t_c(|z|)$.
      \item $Q^z(N(z))$ returns an output of at most $q(|z|)$ and runs
        in time at most $\poly(|z|)$.
      \end{itemize}
      By default, if we refer to $\AM^*_0(2)$ without specifying $q, a, t_d, t_c$, we assume they are
      polynomial functions.
      We say that the protocol specified by $D, C$ decides the
      language $L$ if for any $z$,
      \begin{itemize}
        \item \textbf{Completeness:} if $z \in L$, then the nonlocal
          game $G$ given by $Q^z(N(z))$ and $(D^z(N(z), \cdot, \cdot, \cdot, \cdot), C^z(N(z), \cdot,
          \cdot))$ has value $\omega^s_q(G^z) = 1$. 
        \item \textbf{Soundness:} if $z \not\in L$, then the nonlocal
          game $G$ given by $Q^z(N(z))$ and $(D^z(N(z), \cdot, \cdot, \cdot, \cdot), C^z(N(z), \cdot,
          \cdot))$ has value $\omega^s_q(G^z) < 1$.
      \end{itemize}

    \end{definition}
    We remark that, if there exists an $n$-indexed $\AM^*_0(2)$ protocol to decide a language $L$, then there also exists an $\AM^*_0(2)$ protocol to decide $L$.
    
\subsubsection{Gapped game families and $\MIP^*(2)$ protocols}
	 \label{sec:prelims-gapped}
    Next, we turn to the gapped case. In this case, the distribution
    on questions \emph{does} matter, so we use the formalism of
    \cite{JNVWY20}, who define an $\MIP^*$ protocol in terms of a pair of
    Turing machines: a \emph{sampler} that, given a random seed as
    input, generates a pair of questions for the two players, and a
    \emph{decider} that given a pair of questions and answers decides
    whether they won the game. In the gapped case, the compression
    theorems that are known impose stringent conditions on the form of
    the sampler, and so we will refer the reader to \cite[Section
    5]{JNVWY20} for definitions.

    For a family of gapped games indexed by $n$, or for an $\MIP^*$
    protocol, we may speak of the question and answer length $q(n),
    a(n)$, and the \emph{sampler} and decider runtimes $t_s(n),
    t_d(n)$. We define these analogously to the gapless case with one
    important difference: we require $q(n)$ to be an upper bound both on the question length and also on the \emph{number of bits of its input} that the sampler may read (informally, the number of random bits that the sampler may use to generate the questions). In the checker formalism, the question length and the number of bits that the checker reads are the same; in the sampler formalism, they may be different. To simplify the notation, we use $q(n)$ as an upper bound on both quantities.
    
 The compression results of \cite{JNVWY20} apply only to games with \emph{normal form samplers}: these are defined there are samplers that generate question pairs using a specific type of map (a ``conditional linear function") applied to the input random seed. We will not need the details of this definition here. However, one aspect of it will be useful: a normal form sampler is defined to include the functionality that, on a special input value, returns the value of $q(n)$. This means that we no longer need a separate Turing machine $Q$ to compute this: it can be folded into the definition of the sampler $S$. Using this, we can now define a family of gapped games in terms of a sampler and decider.
 
    \begin{definition}
    \label{def:gapped-game-family}
      A \emph{family of gapped games} is specified by a pair
      $(S, D)$ of Turing machines where $(S,D)$ constitutes a \emph{normal form verifier} as defined in \cite[Section 5.4]{JNVWY20}. Slightly more specifically, $S$ and $D$ are of the following form:
      \begin{enumerate}
      	\item $S$ takes as input a natural number $n$ along with another argument $\textsf{sampler-args}$.\footnote{We abstract the other arguments to $S$ as \textsf{sampler-args} because they are complicated. For more details, see \cite[Section 4.2]{JNVWY20}.} $S$ reads at most $q(n)$ bits of \textsf{sampler-args} on input $n$, and runs in time at most $t_s(n)$. In addition, if $\textsf{sampler-args} = \textsc{Dimension}$, $S$ on input $(n, \textsc{Dimension})$ outputs $q(n)$.
      \item $D$ takes as input a natural number $n$ and a tuple $(x, y, a, b)$. $D$ reads at most $q(n)$ bits of $x,y$ and at most $a(n)$ bits of $a,b$, and runs in time at
        most $t_d(n)$.
      \end{enumerate}
      The input $n$ is called the \emph{index}. We say that the game family $(S,D)$ has \emph{question
        length} $q(n)$, \emph{answer length}
      $a(n)$, \emph{sampler runtime} $t_s(n)$, and \emph{decider runtime} $t_d(n)$.
    \end{definition}

    \paragraph{The $n$th nonlocal game in a gapped game family} For a precise definition of the \emph{$n$th nonlocal game in the game family $(S,D)$}, we refer to \cite[Section 5]{JNVWY20}. At a
    high level, the verifier in the game $G_n$ corresponding to the $n$th game in the game family specified by $G = (S,D)$ performs the following steps: (1) it sets
    $\textsf{sampler-args}$ to be $\textsc{Dimension}$ and runs $S$ on $(n, \textsc{Dimension})$ to
    compute the question size $q(n)$, (2) it runs $S$ on a random seed of length $q(n)$ to compute a pair of questions to send to the
    provers, and (3) it runs $D$ on the questions and the provers'
    answers to decide whether they have won the game.
    
    We now define a general (two-player, one-round) $\MIP^*(2)$ protocol, following \cite{JNVWY20}. Note that, in the definition below, unlike in Definition \ref{def:gapped-game-family}, we do \emph{not} require that $S,D$ constitute a normal form verifier.
    

\begin{definition}\label{def:mip*-protocol}
  A language $L$ is in $\MIP^*$ if and only if there exist two probabilistic Turing machines
  $\sampler$ and $\decider$ with the following properties.
  \begin{enumerate}
    \item \textbf{Efficiency:} For every $z\in\{0,1\}^*$ there is
      a game $\game^z = (\cal{X}, \cal{Y}, \cal{A}, \cal{B}, \mu, D)$
      such that:
      \begin{enumerate}
      \item The Turing machine $\sampler$ given input $z$ runs in time $\poly(|z|)$ and returns a
      pair $(x,y) \in \cal{X} \times \cal{Y}$ such that the distribution of $(x,y)$, over the random choices of $\sampler$, is $\mu$.
    \item The Turing machine $\decider$ given as input $z$ and a tuple $(x,y,a,b)
      \in \cal{X} \times \cal{Y} \times \cal{A} \times \cal{B}$ runs in time $\poly(|z|)$ and returns $D(x,y,a,b)$.\footnote{Note that the running time of $\decider$ should be $\poly(|z|)$, even for long inputs $a,b$. This can be ensured by having $D$ return $0$ whenever $x,y,a,b$ are too long with respect to $|z|$.}
    \end{enumerate}
  \item \textbf{Completeness:} If $z \in L$, then $\val^*(\game^z) \geq 2/3$
  \item \textbf{Soundness:} If $z \not\in L$, then $\val^*(\game^z) \leq 1/3$.
  \end{enumerate}

We say that the pair $(\sampler, \decider)$ form an $\MIP^*$
\emph{protocol} for the language $L$.
\end{definition}

    Definitions \ref{def:gapped-game-family} and \ref{def:mip*-protocol} are again not quite compatible: the former stipulates that the sampler be \emph{normal form} and takes as input an index $n$; the latter does not require a normal form sampler and takes as input a string $z$. The following
    definition of an \emph{$n$-indexed $\MIP^*(2)$ protocol} bridges the gap. It turns out that all the $\MIP^*(2)$
    protocols we construct will be $n$-indexed protocols.
    \begin{definition}\label{def:nindexed-mipstar}
      An $n$-indexed $\MIP^*(2)[q(\cdot), a(\cdot), t_s(\cdot), t_d(\cdot)]$ protocol is a pair of Turing machines
      $GenG, N$ such
      that
      \begin{itemize}
      \item $GenG$ takes as input $z$ and in polynomial time outputs a description of a nonlocal game
        family specified by a sampler-decider pair $S^z, D^z$ which constitutes a \emph{normal form verifier} \cite[Section 4.2]{JNVWY20}.
      \item $N$ takes as input $z$ and in polynomial time outputs a
        natural number $n$ with $n = \poly(|z|)$.
      \item For any $z$, $S^{z}$ takes as input a natural number $n$ along with another argument $\textsf{sampler-args}$. When $n = N(z)$, $S^z$ reads at most $q(|z|)$ bits of \textsf{sampler-args} and runs in time at most $t_s(|z|)$. In addition, if $\textsf{sampler-args} = \textsc{Dimension}$, $S$ on input $(N(z), \textsc{Dimension})$ outputs $q(|z|)$.
      \item For any $z$, $D^z$ takes as input a natural number $n$ and a tuple $(x, y, a, b)$. When $n = N(z)$, $D^z$ reads at most $q(|z|)$ bits of $x,y$ and at most $a(|z|)$ bits of $a,b$, and runs in time at
        most $t_d(|z|)$.
      \end{itemize}
      By default, when we refer to $\MIP^*$ without specifying $q, a,
      t_s, t_d$, we assume they are
      polynomial functions. 
      We say that the protocol specified by $S,D$ decides the
      language $L$ if for any $z$,
      \begin{itemize}
        \item \textbf{Completeness:} if $z \in L$, then the $N(z)$th nonlocal game in the game family $G^z = (S^z, D^z)$, which we denote by $G^z_{N(z)}$, has value $\omega_q(G^z_{N(z)}) = 1$.
        \item \textbf{Soundness:} if $z \not\in L$, then the $N(z)$th nonlocal game in the family family $G^z = (S^z, D^z)$, which we denote by $G^z_{N(z)}$, has value $\omega_q(G^z_{N(z)}) \leq 1/2$.
      \end{itemize}

    \end{definition}
    We remark that, if an $n$-indexed $\MIP^*(2)$ protocol exists to decide a language $L$, then an $\MIP^*(2)$ protocol exists to decide a language $L$.

\subsubsection{Relationship between gapless and gapped definitions}

The reader will note that in the gapless case, the interactive proof class we defined was called $\AM^*_0(2)$, while in the gapped case, the class we defined was called $\MIP^*$. The reason for this difference (in the use of `$\AM(2)$' vs. `$\MIP$') is that the free game formulation---in which the verifier is defined in terms of a checker and a decider instead of a sampler and a decider---is without loss of generality in the gapless case, but \emph{not} in the gapped case. Indeed, in the gapless case one may
    define a class $\MIP_0^*$ in the same way as $\MIP^*$ (i.e in
    terms of a sampler and decider), except with
    the soundness gap set to $0$.  It is clear that $\AM^*_0(2)
    \subseteq \MIP^*_0$; this is because, given a checker, one can
    define an equivalent sampler that on a random seed generates a
    uniformly random pair of questions $(x,y)$, runs the checker on
    these, and if the checker fails outputs an ``abort'' question pair $(\bot,
    \bot)$ instead. Similarly, $\MIP^*_0 \subseteq \AM^*_0(2)$: in the zero-gap case, we can without loss of deciding power assume that the question distribution is uniform and free, because any nonuniform and correlated question distribution over a finite set $\cal{S} \subseteq \cal{X} \times \cal{Y}$ can be replaced with a free question distribution that is simply uniform over $\cal{X} \times \cal{Y}$, and we will still have a nonzero probability of `hitting' the question pairs that the provers fail on. In our section on nonuniform complexity bounds, we
    will show lower bounds on $\MIP^*_0$, which will thus imply bounds
    against \emph{both} $\MIP^* $ and $\AM^*_0(2)$.
    
    In the gapped case, free games are in general less powerful than general games---except in the special case where the question length is constant. Specifically, define a gapped class $\AM^*(2)$ using the decider-checker formalism that we used to define $\AM^*_0(2)$ in \Cref{sec:prelims-gapless}, except with a constant completeness-soundness gap. In
    the case where the question size is a constant, it is clear that
    $ \MIP^*[q(n) = O(1)] \subseteq \AM^*(2)[q(n)=O(1)]$; this is because the checker can
    run the $\MIP^*$ sampler on all possible seeds (a constant
    number), and generate the list of all valid question pairs efficiently. Ultimately, in \Cref{sec:gapped} we will show a protocol for $\RE$ in $\MIP^*[q(n) = O(1)]$, which will thus imply an $\RE$ protocol in $\AM^*(2)[q(n) = O(1)]$.

\znote{expand eventually; we probably want a proper definition of $\AM^*$}

\subsection{$\QMA(2)$ and related classes}

For a fuller treatment of this class, we refer the reader to \cite{HM13}.
\begin{definition}
  $\QMA(2)$ is the class of quantum Merlin-Arthur proof systems where
  Arthur is a polynomial-time quantum machine and receives a witness
  $\ket{\psi} = \ket{\psi_1} \ot \ket{\psi_2}$
  from Merlin that is guaranteed to be in tensor product across a
  fixed cut. A $\QMA(2)$ protocol decides a language $L$ if for any
  input $x \in L$, there is a witness state $\ket{\psi_1} \ot
  \ket{\psi_2}$ that Arthur accepts with probability at least $c$ (the
  \emph{completeness probability}), and for any input $x \not\in L$,
  no witness state in tensor product form makes Arthur accept with
  probability greater than $s$ (the \emph{soundness
    probability}); when not otherwise specified, we assume $c = 2/3$
  and $s = 1/3$.
\end{definition}

\begin{definition}
  $\sf{BellQMA}(2)$ is the class of $\QMA(2)$ proof systems where the
  POVM element corresponding to the accepting measurement outcome of
  verifier is a Bell measurement (see \Cref{def:bell-meas}).
\end{definition}

    \section{A $\sf{BellQMA}(2)$ protocol for 3SAT}
    
    \subsection{The protocol}
\label{sec:protocol}
\begin{definition}[Generalised $K$-colouring]
	Let $K \in \mathbb{N}$, let $\cal{G} = (V,E)$ be a graph, and let $R: E \times [K] \times [K] \rightarrow \{0,1\}$ be a function. We say that $\cal{G}$ is generalised $K$-colourable with respect to $R$ if there exists an assignment function $c : V \rightarrow [K]$ such that, for all edges $e = (v_1, v_2) \in E$, $R(e, c(v_1), c(v_2)) = 1$.
\end{definition}

\begin{definition}[\textsf{$\delta$-GAP-$K$COL}]
\label{def:delta-gap-kcol}
	\textsf{$\delta$-GAP-$K$COL} is a promise problem. An instance of \textsf{$\delta$-GAP-$K$COL} consists of a graph $\cal{G} = (V,E)$, a number $K \in \mathbb{N}$, and a function $R: E \times [K] \times [K] \rightarrow \{0,1\}$.
	\begin{itemize}
	\item $(\cal{G}, K, R)$ is a YES-instance of \textsf{$\delta$-GAP-$K$COL} if $\cal{G}$ is generalised $K$-colourable with respect to $R$.
	\item $(\cal{G}, K, R)$ is a NO-instance of \textsf{$\delta$-GAP-$K$COL} if, for all possible assignments $c : V \rightarrow [K]$, there exist at least $\delta|E|$ edges $e \in E$ such that $R(e, c(v_1), c(v_2)) = 0$.
	\end{itemize}
\end{definition}

\begin{theorem}
\label{thm:3sat-to-kcol}
There is a reduction $f : \{0,1\}^* \rightarrow \{0,1\}^*$ from 3SAT to \textsf{$\delta$-GAP-$K$COL} with constant $\delta > 0$ and constant $K > 0$ such that, if $x$ is an $N$-clause 3SAT instance, $f(x)$ is an instance $(\cal{G}, K, R)$ of \textsf{$\delta$-GAP-$K$COL} such that $|V| = O(N \cdot \poly\log N)$ and $|E| = O(N \cdot \poly\log N)$.
\end{theorem}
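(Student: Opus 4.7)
The plan is to derive this reduction by composing a quasilinear-size PCP theorem for 3SAT with the standard, essentially syntactic translation from a 2-query constraint satisfaction problem over a constant-size alphabet to a labeled constraint graph. A reduction of this shape is already essentially implicit in the $\AM(2)$ protocol of \cite{AIM} for 3SAT, whose soundness analysis is carried out on precisely this kind of labeled-graph CSP.

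First I would invoke a quasilinear-size PCP theorem for 3SAT: there is a polynomial-time reduction taking an $N$-clause 3SAT formula $\varphi$ to a 2-CSP $\Phi$ with variable set $V$ of size $|V| = O(N \cdot \poly\log N)$, variables taking values in an alphabet $[K]$ of constant size $K$, and a collection of $O(N \cdot \poly\log N)$ binary constraints. The reduction satisfies: if $\varphi$ is satisfiable then some assignment $c : V \to [K]$ satisfies every constraint of $\Phi$, while if $\varphi$ is unsatisfiable then every assignment violates at least a $\delta$-fraction of the constraints of $\Phi$, for some absolute constant $\delta > 0$. Such a 2-CSP can be obtained by starting from a quasilinear-size PCP in the style of Ben-Sasson--Sudan and then applying the standard alphabet-reduction and gap-amplification machinery, or by directly invoking one of the several known quasilinear PCP constructions in the literature.

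Second I would identify $\Phi$ with an instance $(\cal{G}, K, R)$ of \textsf{$\delta$-GAP-$K$COL}. Let $\cal{G} = (V, E)$ use the variables of $\Phi$ as vertices, and for each binary constraint of $\Phi$ acting on a variable pair $(u,v)$ include a corresponding edge $e = (u,v) \in E$, using multi-edges when the same pair is constrained more than once. Define $R(e, k_1, k_2) = 1$ exactly when the local predicate labeling $e$ accepts the restricted assignment $(u \mapsto k_1, v \mapsto k_2)$. Under this identification a generalized $K$-coloring of $\cal{G}$ with respect to $R$ is literally the same object as a satisfying assignment of $\Phi$, and the fraction of edges violated by a coloring coincides with the fraction of constraints violated by the corresponding assignment. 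The completeness and soundness guarantees of $\Phi$ therefore translate directly into the YES and NO conditions of \textsf{$\delta$-GAP-$K$COL}.

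Finally I would read off the parameters: $K$ and $\delta$ are the constants produced by the PCP, while the bounds $|V|, |E| = O(N \cdot \poly\log N)$ come directly from the quasilinear size of $\Phi$. The only real obstacle is locating an off-the-shelf PCP theorem that simultaneously delivers $2$ queries, a constant alphabet, a constant soundness gap, and quasilinear proof length; once such a theorem is fixed, the passage to \textsf{$\delta$-GAP-$K$COL} is a purely notational identification and requires no further work.
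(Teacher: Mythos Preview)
Your proposal is correct and matches the content of the paper's proof, which consists solely of a citation to Theorem~2 of \cite{CD10}; that result is obtained exactly as you describe, by composing a quasilinear-size PCP for 3SAT with the syntactic identification of a 2-CSP over a constant alphabet with a labeled constraint graph. You have essentially unpacked the black-box citation, and your only caveat---locating a PCP with the right parameters---is precisely what \cite{CD10} handles by assembling known PCP ingredients.
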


\begin{proof}
See Theorem 2 of \cite{CD10}.
\end{proof}

We now present a $\sf{BellQMA}(2)$ protocol for \textsf{$\delta$-GAP-$K$COL}. The protocol relies on two sub-tests: the uniformity test (\Cref{fig:unif-test}) and the consistency test (\Cref{fig:cons-test}). We call the verifier in this protocol Arthur, and the two provers Alice and Bob.

{
\floatstyle{boxed} 
\restylefloat{figure}
\begin{figure}[H]
Fix an instance $(\cal{G} = (V,E), K, R)$ of \textsf{$\delta$-GAP-$K$COL}.

\textbf{Input:} Let $n = |V|, m = |E|$. All parties in the protocol receive the instance $(\cal{G} = (V,E), K, R)$ as input, along with an integer $k = O(\sqrt{n})$, and a constant $0 < \eta < 1$ to use in the uniformity test (\Cref{fig:unif-test}). Honest provers also receive as input a generalised $K$-colouring of $\cal{G}$, described as a function $c : V \rightarrow [K]$.

The protocol is as follows:
\begin{enumerate}
\item Alice and Bob both send Arthur a state; Alice's state is $k(\log m + 2\log K)$ qubits long, and Bob's state is $k(\log n + \log K)$ qubits long. Let the states that they send be $\ket{\psi_1}$ and $\ket{\psi_2}$. Honest provers send the states
\begin{gather*}
	\ket{\psi_1} = \left( \frac{1}{\sqrt{m}} \sum_{e = (v_1,v_2) \in E} \ket{e} \ket{c(v_1), c(v_2)} \right)^{\otimes k} \\
	\ket{\psi_2} = \left( \frac{1}{\sqrt{n}} \sum_{v \in V} \ket{v} \ket{c(v)} \right)^{\otimes k}.
\end{gather*}
\item Arthur flips a single coin. If it lands heads, he performs the uniformity test (Figure \ref{fig:unif-test}) on both $\ket{\psi_1}$ and $\ket{\psi_2}$, setting $\eta$ to be the choice of $\eta$ that was provided to him as input. The uniformity test also takes two natural number parameters, $K'$ and $Q$. For the uniformity test on $\ket{\psi_1}$, he sets \[K' = K^2, Q = m,\] and for the uniformity test on $\ket{\psi_2}$, he sets \[K' = K, Q = n.\] If it lands tails, Arthur performs the consistency test (Figure \ref{fig:cons-test}) on $\ket{\psi_1} \otimes \ket{\psi_2}$, setting $\cal{G}, K, R$ to be the choices which were provided to him as input.
\end{enumerate}
\caption{The $\sf{BellQMA}(2)$ protocol for \textsf{$\delta$-GAP-$K$COL}. \label{fig:protocol}}
\end{figure}
}

\begin{lemma}[Completeness]
If $\cal{G}$ is generalised $K$-colourable with respect to $R$, then the honest strategy outlined in \Cref{fig:protocol} is accepted with probability $1 - \exp(-\Omega(\sqrt{n}))$.
\end{lemma}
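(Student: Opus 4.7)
The plan is to treat the two branches of Arthur's coin flip separately and then average. For the \emph{consistency test} branch, the honest provers have encoded a \emph{valid} generalised $K$-colouring $c$: every edge in the superposition of $\ket{\psi_1}$ carries the labels $(c(v_1), c(v_2))$, which by hypothesis satisfy $R(e, c(v_1), c(v_2)) = 1$, and every vertex label in $\ket{\psi_2}$ is globally consistent with $c$. Thus any collision-based or label-matching check performed by the consistency test on measurement outcomes from these states must succeed, so I expect this branch to be accepted with probability $1$ on the honest strategy. All the real work is in bounding the uniformity test.

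For the uniformity test, I would first analyse a single one of the $k$ tensor copies of Alice's state. Take $\ket{\phi_1} = \frac{1}{\sqrt{m}} \sum_{e \in E} \ket{e}_Q \ket{c(v_1), c(v_2)}_A$. The first step of the Chen--Drucker test measures the answer register in the Fourier basis of $\mathbb{C}^{K^2}$. A direct computation gives
\[
\bigl(\bra{\bar{0}}_A \otimes \id_Q\bigr) \ket{\phi_1} = \frac{1}{K \sqrt{m}} \sum_{e \in E} \ket{e}_Q,
\]
so the probability of outcome $\bar{0}$ on the answer register is exactly $1/K^2$, and conditioned on that outcome the question register is left in the zero Fourier state $\ket{\bar{0}}_Q$. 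Hence, if an index $i$ is selected in step 1 (its answer register measured to $\bar{0}$), then in step 2 its question register, measured in the Fourier basis, deterministically yields $\bar{0}$. The identical calculation with $K' = K$ and $Q = n$ applies to each tensor factor of $\ket{\psi_2}$.

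Across the $k$ independent tensor copies, let $T$ denote the number of indices selected in step 1. Then $T$ is a sum of independent Bernoullis with mean $k/K^2$ for Alice (respectively $k/K$ for Bob), and since $K$ is a constant and $k = \Theta(\sqrt{n})$, a Chernoff bound shows $T$ lies within the acceptance window prescribed by $\eta$ with probability $1 - \exp(-\Omega(k)) = 1 - \exp(-\Omega(\sqrt{n}))$. Combining this with the previous paragraph's observation that step 2 never rejects on the selected indices, the uniformity test accepts each of $\ket{\psi_1}$ and $\ket{\psi_2}$ with probability $1 - \exp(-\Omega(\sqrt{n}))$. Averaging over Arthur's coin flip (consistency with probability $1$, uniformity with this tail bound) yields the claimed overall completeness.

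The only non-routine obstacle is confirming that the window in \Cref{fig:unif-test} is set so that $T$ within a multiplicative $\eta$ of its mean $k/K'$ causes the test to proceed; this is exactly the regime that a Chernoff bound covers, and is the intended design of the test. Independence of the $k$ answer-register measurement outcomes follows immediately from the tensor-product form of the honest state, so there is no hidden entanglement to worry about in the completeness direction.
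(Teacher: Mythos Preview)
Your proposal is correct and follows the same approach as the paper. The paper's own proof is terser: it states that the consistency test passes with probability $1$ for honest provers, defers the uniformity-test analysis entirely to \cite[Section~3.1]{CD10}, and then applies a union bound. Your write-up unpacks exactly the computation that \cite{CD10} performs (the $1/K'$ probability of the answer register measuring to $\bar{0}$, the post-measurement question register collapsing to $\ket{\bar{0}}_Q$, and the Chernoff bound across the $k$ independent copies), so the underlying argument is identical. One small remark: in the heads branch Arthur runs \emph{both} uniformity tests, so you need a union bound (or independence) to combine the two tail bounds before averaging over the coin; the paper handles this with its single union-bound sentence.
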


\begin{proof}
The consistency test accepts with probability 1 when $\cal{G}$ is generalised $K$-colourable and the two provers are honest. According to the analysis in \cite[Section 3.1]{CD10}, the uniformity test on $\ket{\psi_1}$ and the uniformity test on $\ket{\psi_2}$ each pass with probability $1 - \exp(-\Omega(\sqrt{n}))$ when the provers are honest. A union bound gives the desired conclusion.
\end{proof}

In the following sections, we analyse the soundness of the protocol.

{
\floatstyle{boxed} 
\restylefloat{figure}
\begin{figure}[htpb]
\textbf{Input:} Two numbers $K', Q \in \mathbb{N}$, another number $0 < \eta < 1$, and a state $\ket{\psi}_{\reg{Q}_1 \reg{A}_1 \dots \reg{Q}_k \reg{A}_k}$ on registers $\reg{Q}_1 \reg{A}_1 \dots \reg{Q}_k \reg{A}_k$. The registers $\reg{Q}_i$ are called the \emph{question} registers and the registers $\reg{A}_i$ are called the \emph{answer} registers.
\begin{enumerate}
    \item Perform a Fourier transform $\calF_{K'}$ on each answer register and then measure it in the standard basis.
    \item Let $Z = \{i : \text{the answer register $\reg{A}_i$ measured to 0}\}$. If $\frac{|Z|}{k} < (1-\eta)\frac{1}{K'}$, reject; otherwise, continue.$^1$
    \item For each answer register $\reg{A}_i$ that measured to 0 in step 1, perform a Fourier transform $\calF_{Q}$ on the $i$th question register $\reg{Q}_i$, and measure it in the standard basis. If any non-zero measurement outcome is obtained at this step, reject. Otherwise, accept.
\end{enumerate}

$^1$ \footnotesize $\eta$ is necessary because even honest Merlins will not always pass in the uniformity test; instead, they will only be able to achieve an \emph{average} of $|Z| = \frac{k}{K'}$, so $\eta$ is necessary to be able to perform a Chernoff bound and achieve $1 - \exp(-k)$ completeness. See \cite[Section 3.1]{CD10} for more details.
\caption{The uniformity test. \label{fig:unif-test}}
\end{figure}
}

{
\floatstyle{boxed} 
\restylefloat{figure}
\begin{figure}[htpb]
\textbf{Input:}
\begin{itemize}
\item A state $\ket{\psi}_{\reg{Q}_1 \reg{A}_1 \dots \reg{Q}_k \reg{A}_k} \otimes \ket{\psi'}_{\reg{Q}'_1 \reg{A}'_1 \dots \reg{Q}'_\ell \reg{A}'_\ell}$ on registers $\reg{Q}_1 \reg{A}_1 \dots \reg{Q}_k \reg{A}_k \reg{Q}'_1 \reg{A}'_1 \dots \reg{Q}'_\ell \reg{A}'_\ell$. The registers $\reg{Q}_i$ and $\reg{Q}'_j$, $i \in [k], j \in [\ell]$, are called the \emph{question} registers, and the registers $\reg{A}_i$ and $\reg{A}'_j$ are called the \emph{answer} registers.
\item A graph $\cal{G} = (V,E)$.
\item A number $K \in \mathbb{N}$.
\item A relation $R: [K] \times [K] \rightarrow \{0,1\}$.
\end{itemize}
\begin{enumerate}
    \item Measure all the registers in the standard basis. Interpret each measurement outcome in a register $\reg{Q}_i$, $i \in [k]$, as a question for Alice, and interpret the measurement outcome coming from the associated answer register $\reg{A}_i$ as her answer to that question. (Therefore, Alice receives $k$ questions and answers each one.) Interpret each measurement outcome in a register $\reg{Q}'_j$, $j \in [\ell]$, as a question for Bob, and interpret the measurement outcome coming from the associated answer register $\reg{A}'_j$ as his answer to that question.
    \item Interpret each Alice question as an edge $e \in
      E$, and interpret the corresponding Alice answer as a pair of
      colours in $[K]$ for the vertices that form the endpoints of
      $e$. Interpret each Bob question as a vertex $v \in 
      V$, and interpret the corresponding Bob answer as a colour in
      $[K]$ for $v$. Let $A \subseteq E$ be the set of all edges obtained as Alice questions
      and $B \subseteq V$ be the set of all vertices obtained as Bob questions. 
    \item For every edge $e \in A$ and vertex $v \in B$ such that $v \in e$, check that Alice's and Bob's colorings
    agree and that the two endpoints of $e$ are assigned colours that satisfy the function $R(e, \cdot, \cdot)$.
\end{enumerate}
\caption{The consistency test. \label{fig:cons-test}}
\end{figure}
}

\subsection{Soundness of uniformity test}

%
For illustrative purposes, we begin with a zero-error analysis of the uniformity test. In the proof of \Cref{lem:unif-general}, we will show how the argument presented below generalises to the case of nonzero error.
\begin{lemma}\label{lem:unif-zero-error}
  Suppose $\ket{\psi}$ passes the uniformity test with certainty. Then there exists a collection $\cal{S}$ of subsets of $[k]$ such that, \begin{enumerate}
      \item for all $T \in \cal{S}$, it holds that $|T| \geq \frac{k}{K'}(1-\eta)$, and
      \item the distribution $\mu_Q$ which results from measuring the question registers of $\ket{\psi}$ in the standard basis can be
  decomposed as a mixture 
  \[ \mu_Q = \sum_{T \in \cal{S}} p(T) \mu^{unif}_T \ot \mu^{junk}_{\overline{T}}, \]
  where $p : \cal{S} \rightarrow [0,1]$ is a distribution over
  $\cal{S}$, $\mu^{unif}_T$ is the uniform distribution over $[Q]^{|T|}$ on the
  indices in $T$, and $\mu^{junk}_{\overline{T}}$ is an arbitrary
  distribution on the indices in $[k] - T$. 
  \end{enumerate}
\end{lemma}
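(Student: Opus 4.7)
The plan is to exploit the commutativity observation highlighted in the introduction: Step 1 of the uniformity test acts only on the answer registers $\reg{A}_1, \dots, \reg{A}_k$, while a standard-basis measurement of the question registers $\reg{Q}_1, \dots, \reg{Q}_k$ acts on a disjoint set of registers, so the two measurements commute. Hence to compute $\mu_Q$ I will equivalently (i) first perform the Step~1 Fourier measurements on the answer registers, obtaining a classical outcome $(a_1, \dots, a_k) \in \{0, \dots, K'-1\}^k$ with some probability $p(a_1, \dots, a_k)$ together with a residual pure state $\ket{\phi_{(a_1, \dots, a_k)}}$ on the question registers, and then (ii) measure the question registers in the standard basis. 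The whole argument then reduces to analysing the conditional states $\ket{\phi_{(a_1, \dots, a_k)}}$.

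Next, I would use the assumption that $\ket{\psi}$ passes with certainty to extract two constraints on every outcome with $p(a_1, \dots, a_k) > 0$. Writing $T := T(a_1, \dots, a_k) = \{i : a_i = 0\}$, Step~2 forces $|T| \geq k(1-\eta)/K'$, and Step~3 forces that the Fourier-basis measurement of every question register indexed by $T$ of $\ket{\phi_{(a_1, \dots, a_k)}}$ returns $0$ with probability $1$. The latter condition is equivalent to saying that the reduced state of $\ket{\phi_{(a_1, \dots, a_k)}}$ on the $T$-indexed question registers equals the pure product state $\ket{\bar{0}}^{\otimes |T|}\bra{\bar{0}}^{\otimes |T|}$. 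Since the global state is pure and its marginal on a subsystem is itself pure, a standard Schmidt-decomposition argument then forces the factorisation $\ket{\phi_{(a_1, \dots, a_k)}} = \ket{\bar{0}}^{\otimes |T|}_T \otimes \ket{\sigma^{(a_1, \dots, a_k)}_{\overline{T}}}$ for some pure state $\ket{\sigma^{(a_1, \dots, a_k)}_{\overline{T}}}$ on the remaining question registers.

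Finally I would aggregate. Standard-basis measurement of $\ket{\bar{0}}^{\otimes |T|}$ yields exactly the uniform distribution $\mu^{unif}_T$ on $[Q]^{|T|}$, while standard-basis measurement of $\ket{\sigma^{(a_1, \dots, a_k)}_{\overline{T}}}$ yields some distribution $\mu^{junk}_{\overline{T},(a_1, \dots, a_k)}$ on $[Q]^{k - |T|}$, independent of what happens on $T$. I would let $\cal{S}$ be the collection of sets $T \subseteq [k]$ arising as $T(a_1, \dots, a_k)$ for some outcome with $p(a_1, \dots, a_k) > 0$, set $p(T) := \sum_{(a_1, \dots, a_k):\, T(a_1, \dots, a_k) = T} p(a_1, \dots, a_k)$, and take $\mu^{junk}_{\overline{T}}$ to be the convex combination of the $\mu^{junk}_{\overline{T},(a_1, \dots, a_k)}$ weighted by $p(a_1, \dots, a_k)/p(T)$. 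Combining this with the commutativity observation gives the claimed decomposition $\mu_Q = \sum_{T \in \cal{S}} p(T)\, \mu^{unif}_T \otimes \mu^{junk}_{\overline{T}}$, with the size bound $|T| \geq k(1-\eta)/K'$ inherited from Step~2.

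In the zero-error case I do not anticipate any real obstacle: the proof is essentially the purity/factorisation observation above, made clean by the commuting-measurements trick. The main obstacle---which the subsequent \Cref{lem:unif-general} will have to surmount---is that in the nonzero-error regime the reduced state on $T$ is only \emph{approximately} $\ket{\bar{0}}^{\otimes |T|}\bra{\bar{0}}^{\otimes |T|}$, so the exact factorisation used here must be replaced by an approximate one (e.g., via a gentle-measurement style bound), while simultaneously controlling the fact that $T$ itself is a random variable whose size may fall below $k(1-\eta)/K'$ on a small-probability event.
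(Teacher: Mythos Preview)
Your proposal is correct and follows essentially the same approach as the paper's proof: both first perform the Fourier-basis measurement on the answer registers, argue that perfect success forces the conditional question-register state to factor as $\ket{\bar{0}}^{\otimes T} \otimes (\text{junk})$ for a large $T$, and then use the commutativity of the answer and question measurements to transfer this structure to $\mu_Q$. You are slightly more explicit than the paper about invoking purity and the Schmidt decomposition to justify the factorisation, and about the commutativity step, but the logical skeleton is identical.
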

\begin{proof}
  Suppose we perform the first step of the uniformity test (\Cref{fig:unif-test}) on $\ket{\psi}$, i.e., we measure all the answer registers of $\ket{\psi}$ in the Fourier
  basis. Let $\rho_{\vec{r}}$ denote the post-measurement state after this measurement conditioned on getting outcome $\vec{r}$. Assuming that $\ket{\psi}$ passes the uniformity test with
  certainty, $\vec{r}
  = r_1,
  \dots, r_k$ must be such that $r_i = 0 \: \forall i \in T$ for some subset $T \subset [k]$ with $|T| \geq
  \frac{k}{K'}(1-\eta)$. Moreover, the probability that $\rho_{\vec{r}}$ now
  passes step 3 of the uniformity test is still $1$. Therefore, 
  \[ \rho_{\vec{r}} = (\ket{\bar{0}}\bra{\bar{0}})^{\ot T} \ot
    \rho_{\overline{T}}, \]
  where the notation $(\ket{\bar{0}}\bra{\bar{0}})^{\ot T}$ means that the registers with indices in
  $T$ are in the all-zero state in the Fourier basis and in tensor product with the other
  registers.

  Thus, measuring $\rho_{\vec{r}}$ in the standard basis will
  yield uniformly random iid outcomes on the registers in $T$ and
  some arbitrary distribution on the other registers.

  Finally, to get the lemma, observe that (letting $\rho$ denote the post-measurement state after the Fourier measurement of step 1 with no conditioning)
  \[ \rho = \sum_{\vec{r}} q_{\vec{r}} \:
    \rho_{\vec{r}}, \]
    for some distribution $q_{\vec{r}}$.
  Thus, the conclusion follows.
\end{proof}

We now proceed to the main technical lemma in this section, which is a
version of \Cref{lem:unif-zero-error} that tolerates constant error.

\begin{lemma}\label{lem:unif-general}
  Suppose $\ket{\psi}$ passes the uniformity test with probability $1
  - \eps > 0$. Then there exists a collection $\cal{S}$ of subsets of $[k]$ such that, \begin{enumerate}
      \item for all $T \in \cal{S}$, it holds that $|T| \geq \frac{k}{K'}(1-\eta)$, and
      \item the distribution $\mu_Q$ which results from measuring the question registers of $\ket{\psi}$ in the standard basis can be
  decomposed as a mixture 
  \[ \mu_Q \simeq_{\delta(\eps)} \sum_{T \in \cal{S}} p(T) \mu^{unif}_T \ot \mu^{junk}_{\overline{T}}, \]
  where
  \begin{enumerate}
      \item $p : \cal{S} \rightarrow [0,1]$ is a distribution over $\cal{S}$,
      \item $\mu^{unif}_T$ is the uniform distribution over $[Q]^{|T|}$ on the indices in $T$,
      \item $\mu^{junk}_{\overline{T}}$ is an arbitrary distribution on the indices in $[k] - T$,
      \item the notation $\simeq_\delta$ indicates that the two sides are
  a distance of $\delta$ apart in total variational distance, and
      \item $\delta(\eps) = O(\eps^{1/4})$.
  \end{enumerate}
  \end{enumerate}
\end{lemma}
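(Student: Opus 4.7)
The plan is to mimic the zero-error argument of \Cref{lem:unif-zero-error} but replace every exact equality with an approximate one, using gentle-measurement-style arguments to control the accumulated error, and ultimately exploit the fact that the Fourier measurement on the answer registers and the standard-basis measurement on the question registers commute (because they act on disjoint registers).

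First, I would write $\rho = \sum_{\vec r} q_{\vec r}\, \rho_{\vec r}$ for the post-measurement ensemble after step 1, where $\vec r = (r_1,\dots,r_k)$ records the Fourier outcomes on the answer registers and $\rho_{\vec r}$ is the conditional state on the question registers. Let $T(\vec r) = \{i : r_i = 0\}$ and call an outcome $\vec r$ \emph{admissible} if $|T(\vec r)| \geq \frac{k}{K'}(1-\eta)$; the test rejects immediately on non-admissible outcomes. For each admissible $\vec r$, let $\epsilon_{\vec r}$ be the probability that $\rho_{\vec r}$ fails step 3, i.e.\ $\epsilon_{\vec r} = 1 - \mathrm{tr}\bigl(P_{T(\vec r)}\,\rho_{\vec r}\bigr)$ where $P_T$ is the projector onto $\ket{\bar 0}\bra{\bar 0}$ on the question registers indexed by $T$ (tensored with identity elsewhere). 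The total failure probability is then $\epsilon \geq \sum_{\vec r \text{ admiss.}} q_{\vec r}\, \epsilon_{\vec r}$.

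Next, I would apply a Markov argument: let $\cal{R}_{good} = \{\vec r \text{ admissible} : \epsilon_{\vec r} \leq \sqrt{\epsilon}\}$, so that $\sum_{\vec r \notin \cal{R}_{good}} q_{\vec r} \leq \sqrt{\epsilon}$. For each $\vec r \in \cal{R}_{good}$, the gentle measurement lemma gives $\bigl\| \rho_{\vec r} - \sigma_{\vec r}\bigr\|_1 = O(\epsilon^{1/4})$, where $\sigma_{\vec r} := P_{T(\vec r)}\,\rho_{\vec r}\,P_{T(\vec r)} / \mathrm{tr}(P_{T(\vec r)}\rho_{\vec r})$ has the form $\ket{\bar 0}\bra{\bar 0}^{\otimes T(\vec r)} \otimes \rho^{junk}_{\overline{T(\vec r)}}$. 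Measuring $\sigma_{\vec r}$ in the standard basis on the question registers yields exactly $\mu^{unif}_{T(\vec r)} \otimes \mu^{junk}_{\overline{T(\vec r)}}$. Because trace distance contracts under measurement, the distribution obtained by measuring $\rho_{\vec r}$ is within $O(\epsilon^{1/4})$ in total variational distance of this product form.

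Now I would exploit commutativity: the distribution $\mu_Q$ obtained by measuring the question registers of $\ket\psi$ directly in the standard basis is identical to the distribution obtained by first measuring the answer registers in the Fourier basis (yielding the ensemble $\rho = \sum_{\vec r} q_{\vec r}\rho_{\vec r}$) and then measuring the question registers in the standard basis, since the two measurements act on disjoint registers. Averaging the per-$\vec r$ TV bounds and paying $\sqrt{\epsilon}$ for the weight on $\vec r \notin \cal{R}_{good}$, I obtain
\begin{equation*}
\mu_Q \simeq_{\delta(\epsilon)} \sum_{\vec r \in \cal{R}_{good}} q_{\vec r}\, \mu^{unif}_{T(\vec r)} \otimes \mu^{junk}_{\overline{T(\vec r)}}
\end{equation*}
for $\delta(\epsilon) = O(\epsilon^{1/4}) + O(\sqrt\epsilon) = O(\epsilon^{1/4})$. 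Grouping admissible outcomes by the subset $T(\vec r)$ then produces the desired decomposition over a collection $\cal S$ of subsets of $[k]$ with weights $p(T) = \sum_{\vec r \in \cal{R}_{good},\, T(\vec r) = T} q_{\vec r}$ (renormalized by $\sum_T p(T)$, absorbing the remaining $O(\sqrt\epsilon)$ into $\delta$).

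The main obstacle is the gentle-measurement step and the bookkeeping of the three sources of error: the Markov cutoff ($\sqrt\epsilon$), the conversion from acceptance probability to trace distance ($\epsilon^{1/4}$ via $\sqrt{\epsilon_{\vec r}}$), and the contraction of trace distance to TV distance under measurement (factor of 1). A subtle point is that $\sigma_{\vec r}$ must genuinely have the tensor form $\ket{\bar 0}\bra{\bar 0}^{\otimes T(\vec r)} \otimes \rho^{junk}_{\overline{T(\vec r)}}$; this holds because $P_{T(\vec r)}$ is a rank-one projector on the $T(\vec r)$ registers and the identity on the complement, so sandwiching $\rho_{\vec r}$ by $P_{T(\vec r)}$ automatically produces a product state across this cut. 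Once this is in place, the rest is a routine triangle-inequality computation.
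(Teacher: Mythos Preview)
Your proposal is correct and follows essentially the same approach as the paper: Markov-cutoff at threshold $\sqrt{\eps}$ to isolate good outcomes $\vec r$, gentle measurement to get the $O(\eps^{1/4})$ trace-distance bound for each good $\vec r$, commutativity of the answer-register Fourier measurement with the question-register standard-basis measurement, and then averaging/triangle inequality. The paper's presentation differs only cosmetically (it builds an explicit intermediate state $\rho_{question}'$ that reassigns the bad weight to a fixed good $\vec r^*$, whereas you handle the bad weight directly in the TV-distance accounting); one small bookkeeping point is that $\sum_{\vec r \notin \cal{R}_{good}} q_{\vec r}$ is really $O(\sqrt{\eps})$ rather than exactly $\sqrt{\eps}$, since it also includes the non-admissible mass (which is at most $\eps$), but this does not affect the final bound.
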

\begin{proof}

    Suppose we perform the first step of the uniformity test (\Cref{fig:unif-test}) on $\ket{\psi}$, i.e., we measure all the answer registers of $\ket{\psi}$ in the Fourier
  basis. Let $\rho_{\vec{r}}$ denote the post-measurement state after this measurement conditioned on getting outcome $\vec{r}$, and let $\rho$ denote the overall post-measurement state after this measurement without conditioning on any particular outcome. Let $q_{\vec{r}}$ denote the probability of obtaining any given outcome $\vec{r}$.
  

  Let $p_{success, \vec{r}}$ be a function mapping density matrices to $[0,1]$ such that $p_{success, \vec{r}}(\sigma)$ gives the probability that a given mixed state $\sigma$ passes when it is subjected to step 3 of the uniformity test and $\vec{r}$ was the outcome obtained in step 1 of the uniformity test. Let $\Id_{\vec{r}}$ be an indicator function which indicates whether or not a given vector $\vec{r}$ passes step 2 of the uniformity test (i.e. whether or not $\vec{r}$ is such that there exists $T \subseteq [k]$, with $|T| \geq
  \frac{k}{K'}(1-\eta)$, for which $r_i = 0 \: \forall i \in T$). Using this notation, the probability that $\ket{\psi}$ passes in the uniformity test can then be expressed as
  \[ \sum_{\vec{r}} q_{\vec{r}} \cdot \Id_{\vec{r}} \cdot
    p_{success, \vec{r}}(\rho_{\vec{r}}) \geq 1 - \eps
    .\]

  Rewrite as
  \[ \sum_{\vec{r}} q_{\vec{r}} (1 - \Id_{\vec{r}} \cdot
    p_{success, \vec{r}}(\rho_{\vec{r}} ))\leq \eps
    . \]
  Therefore (using a Markov bound), with probability at least $1 - \frac{1}{\alpha}$, $\vec{r}$ obtained
in step 1 is such that
  \begin{equation} (1 - \Id_{\vec{r}} \cdot
    p_{success}(\rho_{\vec{r}}) ) \leq \alpha \eps. \label{eq:r-good} \end{equation}
  Let us set $\alpha = \frac{1}{\sqrt{\eps}}$, and define any such
  $\vec{r}$ to be \emph{good}. With this definition of $\alpha$, $\vec{r}$ is good
  with probability at least $1 - \sqrt{\eps}$. Note that, for any good
  $\vec{r}$, step 2 of the uniformity test passes with
  certainty (or else $\Id_{\vec{r}} = 0$ and \Cref{eq:r-good} would become $1 \leq \sqrt{\eps}$), and step 3 of the uniformity test applied to $\rho_{\vec{r}}$ passes with
  probability at least $1 - \sqrt{\eps}$.

  Let $\rho_{question | \vec{r}}$ be $\rho_{\vec{r}}$ restricted to its question registers. For any fixed good $\vec{r}$ (for which step 3 of the uniformity test applied to $\rho_{\vec{r}}$ passes with
  probability at least $1 - \sqrt{\eps}$), we have
  \[ \tr[ ((\ket{\bar{0}}\bra{\bar{0}})^{\ot T} \ot
    I_{\overline{T}})\rho_{question | \vec{r}}] \geq 1 - \sqrt{\eps}, \]
  where the notation $(\ket{\bar{0}}\bra{\bar{0}})^{\ot T}$ means that the registers with indices in
  $T$ are in the zero Fourier state and in tensor product with the other
  registers. Thus, by the Gentle Measurement Lemma~\cite[Lemma 9.4.1]{Wilde11}, it holds that
  \begin{equation}
  \label{eq:fixed-r-distance}
  	\| \rho_{question | \vec{r}} - \underbrace{(\ket{\bar{0}}\bra{\bar{0}})^{\ot T}
      \ot \sigma(\vec{r})_{\overline{T}}}_{\sigma(\vec{r})} \|_1 \leq 2 \eps^{1/4}.
  \end{equation}
  By construction, measuring $\sigma(\vec{r})$ in the standard basis
  will yield a distribution $\mu^{\vec{r}}$ that is uniformly random iid outcomes on the registers in $T$ and
  some arbitrary distribution on the other registers.

  Thus, measuring $\rho_{question |\vec{r}}$ in the standard basis (for any good $\vec{r}$) will
  yield a distribution that is $O(\eps^{1/4})$-close to $\mu^{\vec{r}}$ in total
  variational distance, by the relation between variational distance
  and trace distance \cite[Theorem 9.1]{NC02}.

  Let $\rho_{question}$ denote the state $\rho$ (defined in the first paragraph of this proof) restricted to its question registers. To argue about the distribution we obtain by measuring $\rho_{question}$ without the
  conditioning on a fixed good $\vec{r}$, observe that
  \begin{equation} \rho_{question} = \sum_{\vec{r}} q_{\vec{r}} \:
    \rho_{question|\vec{r}} = \sum_{\vec{r} \in BAD} q_{\vec{r}} \:
    \rho_{question|\vec{r}} + \sum_{\vec{r} \in GOOD} q_{\vec{r}} \: \rho_{question|\vec{r}}, \end{equation}
    and recall that $\vec{r}$ is good with probability at least $1 - \sqrt{\eps}$. Given this, there exists a state with no weight on $\rho_{question|\vec{r}}$s with $\vec{r}$ in $BAD$ which is at most $O(\sqrt{\eps})$ from $\rho_{question}$ in trace distance. Formally, if we define a new state
    \[ \rho_{question}' = \sum_{\vec{r} \in GOOD} q_{\vec{r}} \: \rho_{question|\vec{r}} + \left(\sum_{\vec{r} \in BAD} q_{\vec{r}} \right) \rho_{question|\color{BurntOrange} \vec{r}^*}, \]
    where $\vec{r}^*$ is an arbitrary (for concreteness, the lexicographically first) $\vec{r}$ in $GOOD$, we have that
    \begin{equation} \|\rho_{question}' - \rho_{question} \|_1 = O(\eps^{1/2}). \end{equation}
    Meanwhile, note that $\rho_{question}'$ can be expressed as a sum
    \begin{equation} \rho_{question}' = \sum_{\vec{r} \in GOOD} q_{\vec{r}}' \: \rho_{question|\vec{r}}, \end{equation}
    where $q_{\vec{r}}'$ is some distribution over $\vec{r}$.
    For any good $\vec{r}$, let $T(\vec{r})$ denote a set such that $T \subseteq [k]$, $|T| \geq
  \frac{k}{K'}(1-\eta)$, $r_i = 0 \: \forall i \in T$. By the strong convexity of the trace distance \cite[Theorem 9.3]{NC02} and \Cref{eq:fixed-r-distance}, we have that
    \begin{equation} \Big\| \rho_{question}' - \sum_{\vec{r} \in GOOD} q_{\vec{r}}' \left( (\ket{\bar{0}}\bra{\bar{0}})^{\ot T(\vec{r})}
      \ot \sigma(\vec{r})_{\overline{T(\vec{r})}} \right) \Big\|_1 \leq \sum_{\vec{r} \in GOOD} q_{\vec{r}}' \cdot 2\eps^{1/4}. \end{equation}
       Therefore,
      \begin{equation} \Big\| \rho_{question}' - \sum_{\vec{r} \in GOOD} q_{\vec{r}}' \left( (\ket{\bar{0}}\bra{\bar{0}})^{\ot T(\vec{r})}
      \ot \sigma(\vec{r})_{\overline{T(\vec{r})}} \right) \Big\|_1 \leq 2\eps^{1/4}. \end{equation}
      By the triangle inequality, then,
      \begin{equation} \Big\| \rho_{question} - \sum_{\vec{r} \in GOOD} q_{\vec{r}}' \left( (\ket{\bar{0}}\bra{\bar{0}})^{\ot T(\vec{r})}
      \ot \sigma(\vec{r})_{\overline{T(\vec{r})}} \right) \Big\|_1 = O(\eps^{1/4}) + O(\eps^{1/2}) = O(\eps^{1/4}). \end{equation}
      Finally, by the contractivity of the trace distance under completely positive trace-preserving maps \cite[Theorem 9.2]{NC02}, measuring both $\rho_{question}$ and $\sum_{\vec{r} \in GOOD} q_{\vec{r}}' \left( (\ket{\bar{0}}\bra{\bar{0}})^{\ot T(\vec{r})}
      \ot \sigma(\vec{r})_{\overline{T(\vec{r})}} \right)$ in the standard basis will not increase the trace distance between them. Measuring the latter in the standard basis manifestly results in a distribution of the form
      \[ \sum_{T \in \cal{S}} p(T) \mu^{unif}_T \ot \mu^{junk}_{\overline{T}} \]
      for $\cal{S}$ the set $\{ T : \exists \vec{r} \in GOOD \text{ s.t. } T = T(\vec{r}) \}$.
  Thus, the conclusion follows.

\end{proof}

\subsection{Soundness of consistency test and soundness of main protocol}

We begin by making a few definitions.

\begin{definition}[Consistency game]
\label{def:consis-game}
For any given graph $\cal{G} = (V,E)$, natural number $K$, and relation $R : [K] \times [K] \rightarrow \{0,1\}$, we define the \emph{$(k, \ell)$ consistency game}, denoted $G^{k, \ell}(\cal{G},K,R)$ or simply $G^{k, \ell}$ when the parameters are clear from context, to be the following \emph{classical} two player free game. (Note that this game is identical to the $k, \ell$ birthday repetition game from \cite{AIM}.)

\begin{itemize}
  \item Alice receives a uniformly random size-$k$ subset $A$ of the set of edges $E$, and Bob receives a uniformly random size-$\ell$ subset $B$ of the set of vertices $V$.
  \item Alice responds with a colouring of all the vertices that are at the endpoints of edges in $A$ (i.e. Alice gives a number in $[K]$ for every vertex that is at the end of some edge in $A$), and Bob responds with a colouring of all the vertices in $B$.
  \item For every edge $e \in A$ and vertex $v \in B$ such that $v \in e$, Arthur checks that Alice and Bob's colorings
    agree and that the colours assigned to the two endpoints of $e$ satisfy the relation $R(e, \cdot, \cdot)$.
  \end{itemize}
\end{definition}

    \begin{definition}[Free game with special question distribution]
    Let $G$ be a two-player free game where question pairs are uniformly sampled from a question set $X \times Y$, and let $\calD$ be a distribution over $X \times Y$. Then $G|_{\calD}$ denotes the game $G$ where the question pairs are sampled according to $\calD$.
    \end{definition}

  We would like to prove the soundness of the protocol from section \ref{sec:protocol} by reducing its soundness
  to that of the consistency game from \Cref{def:consis-game}, which was already analysed as the `birthday game' in \cite{AIM}. This means that, given a strategy for our
  QMA(2) protocol (i.e. a pair of witness states) from section \ref{sec:protocol}, we would like to
  construct a strategy for the consistency game. The statement we want to prove is formalised in the following lemma.

    
    \begin{lemma}\label{lem:cons-soundness}
    Let $(\cal{G}, K, R)$ be an instance of \textsf{$\delta$-GAP-$K$COL}. Suppose the two states $\ket{\psi_1}, \ket{\psi_2}$ are accepted in the protocol of \Cref{fig:protocol} with probability at least $1 - \eps$. Then there exists a strategy for $G^{k', k'}(\cal{G}, K, R)$, $k' = \frac{k}{K'}(1-\eta)$, with value $1 - O(\eps^{1/4})$.
    \end{lemma}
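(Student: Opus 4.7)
The plan is to convert the quantum witness $\ket{\psi_1}\otimes\ket{\psi_2}$ into a classical strategy for $G^{k',k'}$ by ``planting'' the birthday-game inputs into the uniformly-distributed positions of the question registers guaranteed by \Cref{lem:unif-general}. First, since Arthur's coin uniformly selects between the uniformity branch and the consistency branch, and the overall acceptance probability is at least $1-\eps$, each of the two uniformity tests (on $\ket{\psi_1}$ and $\ket{\psi_2}$) and the consistency test must pass with probability at least $1 - O(\eps)$. Applying \Cref{lem:unif-general} to the two uniformity tests then yields collections $\cal{S}_1, \cal{S}_2$ of subsets of $[k]$, each member of size at least $k'$, and distributions $p_1, p_2$ over them such that measuring the question registers of $\ket{\psi_1}$ (resp.\ $\ket{\psi_2}$) in the standard basis is within $O(\eps^{1/4})$ total variation of $\sum_T p_1(T)\,\mu^{unif}_T \otimes \mu^{junk,1}_{\overline{T}}$ (resp.\ the analogous mixture for $\ket{\psi_2}$).

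The classical strategy is then defined as follows. Using shared randomness, Alice and Bob sample $T_A \sim p_1$, $T_B \sim p_2$, random injections $\pi_A:[k']\to T_A$ and $\pi_B:[k']\to T_B$, and fresh values for the remaining positions from the appropriate conditional distributions. Given her input $A = \{e_1,\dots,e_{k'}\}$ (viewed as a sequence in some canonical order), Alice constructs a full $k$-tuple of question values by planting $e_i$ at position $\pi_A(i)$, filling $T_A\setminus \pi_A([k'])$ with iid uniform edges, and filling $\overline{T_A}$ with junk sampled from the mixture. She then samples her answers from the classical conditional distribution induced by measuring $\ket{\psi_1}$'s answer registers given these fixed question values, and outputs the colourings at positions $\pi_A([k'])$. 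Bob's strategy is analogous using $\ket{\psi_2}$.

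To bound the winning probability, I would let $\cal{D}_{\mathrm{strat}}$ denote the joint distribution on (Alice's $k$ questions and answers, Bob's $k$ questions and answers) induced by this strategy together with the random birthday-game inputs, and let $\cal{D}_{\mathrm{quant}}$ denote the joint distribution from measuring $\ket{\psi_1}\otimes\ket{\psi_2}$ in the standard basis. By construction $\cal{D}_{\mathrm{strat}}$ differs from $\cal{D}_{\mathrm{quant}}$ only in that the values at the $k'$ planted positions form a uniformly random $k'$-subset rather than $k'$ iid uniform samples; this coupling discrepancy is bounded by the collision probability $O(k'^2/|E|)$, which is negligible in the regime $k' = O(\sqrt{n})$, $|E| = \widetilde{O}(n)$, on top of the $O(\eps^{1/4})$ already incurred by \Cref{lem:unif-general}. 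The birthday-game winning condition is exactly the quantum consistency check restricted to the $(k')^2$ pairs $\pi_A([k'])\times\pi_B([k'])$, so since the unrestricted check passes under $\cal{D}_{\mathrm{quant}}$ with probability $\geq 1 - O(\eps)$, the restricted check holds under $\cal{D}_{\mathrm{strat}}$ with probability $\geq 1 - O(\eps^{1/4})$.

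The main obstacle is verifying that the answer-register conditional distribution transfers cleanly through the planting operation; this holds because the quantum measurement of $\ket{\psi_1}$ induces a well-defined classical conditional distribution of answers given questions that is intrinsic to the state and does not depend on the mechanism used to generate the question values. A secondary bookkeeping subtlety is showing that the coupling between iid samples (in $\cal{D}_{\mathrm{quant}}$) and subset-valued samples (in $\cal{D}_{\mathrm{strat}}$) introduces only a collision-probability error that is dominated by the $O(\eps^{1/4})$ term from \Cref{lem:unif-general}.
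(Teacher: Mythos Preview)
Your approach is correct and essentially the same as the paper's: the paper factors the argument through three auxiliary lemmas (\Cref{lem:cons-classical-strat}, \Cref{lem:cons-decompose-D}, \Cref{lem:cons-discard-questions}), but the core idea---plant the birthday-game questions at the uniform-on-$T$ positions, use the answer-conditional distribution from the measured state as the classical strategy, and observe that the $G^{k',k'}$ checks are a subset of the full consistency-test checks---is identical. The paper simply normalizes to $|T|=k'$ and treats the $G^{k',k'}$ input as filling $T$ exactly, so it does not separately account for the subset-vs-iid collision term you raise (which is in any case $o(1)$ and absorbed into the final constant-gap soundness conclusion).
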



We delay the proof of Lemma \ref{lem:cons-soundness} until we have proven Lemmas \ref{lem:cons-classical-strat}, \ref{lem:cons-decompose-D}, and \ref{lem:cons-discard-questions}. It is clear, however, that Lemma \ref{lem:cons-soundness} taken together with the following lemma, Lemma \ref{lem:cons-birthday-soundness}, yields the desired constant soundness for the protocol of section \ref{sec:protocol}.

  \begin{lemma}\label{lem:cons-birthday-soundness}
    Suppose $\cal{G} = (V,E)$ is a graph with $n$ vertices and $\Theta(n)$ edges, $K \in \mathbb{N}$ is a constant, and $R : E \times [K] \times [K] \rightarrow \{0,1\}$ is a function. Suppose that any generalised $K$-coloring of $\cal{G}$ with respect to $R$ violates at
    least $\delta$-fraction of the edges for some constant $\delta > 0$. Then the
    classical value of the consistency game $G^{k, \ell}(\cal{G})$ for $k = \ell = \Omega(\sqrt{n})$ is at most $1 - c$ for a constant $c > 0$.
  \end{lemma}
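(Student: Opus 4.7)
The plan is to reduce the soundness of the birthday-repetition game $G^{k,\ell}(\cal{G})$ to the soundness of the underlying single-round clause-variable game, which is a direct consequence of the hypothesis. Specifically, consider the single-round game $G^{1,1}(\cal{G})$ in which Arthur samples a uniformly random edge $e$, samples a uniformly random endpoint $v$ of $e$, sends $e$ to Alice and $v$ to Bob, and checks both that their colorings of $v$ agree and that Alice's coloring satisfies $R(e,\cdot,\cdot)$. The hypothesis that every generalized $K$-coloring violates at least a $\delta$-fraction of edges implies that any Alice strategy, viewed as a coloring of vertex-endpoint-pairs, can be consistent with at most a $1 - \Omega(\delta)$ fraction of $(e,v)$ pairs under any Bob strategy, so $\val(G^{1,1}(\cal{G})) \leq 1 - \delta'$ for some constant $\delta' > 0$.

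The next step is the birthday argument. Because $\cal{G}$ has $n$ vertices and $\Theta(n)$ edges, a uniformly random edge $e$ and uniformly random vertex $v$ satisfy $v \in e$ with probability $\Theta(1/n)$. When Alice receives $k = \Omega(\sqrt{n})$ edges and Bob receives $\ell = \Omega(\sqrt{n})$ vertices (independently and uniformly), the expected number of incident $(e,v)$ pairs among their questions is $\Theta(k\ell/n) = \Theta(1)$, and a standard second-moment calculation (as in \cite{AIM}) shows that the probability of having at least one such incident pair is bounded below by a constant $\gamma > 0$.

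I would then perform a reduction: assume toward contradiction that some strategy $\sigma$ for $G^{k,\ell}(\cal{G})$ achieves value $1-c$. Build a strategy $\tau$ for $G^{1,1}(\cal{G})$ as follows. On inputs $(e,v)$ to $G^{1,1}$, Alice chooses a uniformly random position $i^* \in [k]$ and samples $k-1$ additional edges uniformly at random, forming a tuple $A$ with $e$ at position $i^*$; she runs $\sigma$'s Alice strategy on $A$ and returns its coloring at position $i^*$. Bob does the analogous thing with $v$ and $\ell-1$ random vertices at a uniform position $j^*$. The marginal distribution of $(A,B)$ matches that of $G^{k,\ell}$, and conditioned on $\sigma$ passing and on the randomly chosen pair $(i^*,j^*)$ being an incident pair (which has probability at least $\gamma/(k\ell)$ over the randomness of $(i^*,j^*)$ given that a random incident pair exists), the strategy $\tau$ wins in $G^{1,1}$. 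Averaging carefully (i.e.\ reweighting the probability of success over all incident pairs present in $(A,B)$ and then picking a uniformly random one of them to route back to $G^{1,1}$) gives $\val(G^{1,1}(\cal{G})) \geq 1 - O(c/\gamma)$; taking $c < c_0$ for some small enough constant contradicts $\val(G^{1,1}(\cal{G})) \leq 1 - \delta'$.

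The main obstacle, and the only nontrivial bookkeeping, is the final averaging: one must ensure that the reduction routes the single-round question through a uniformly random incident pair in $(A,B)$ rather than through a fixed position, so that the distribution of the ``routed'' pair in $G^{k,\ell}$ matches the product distribution of $G^{1,1}$. This is exactly the argument carried out in the birthday repetition theorem of \cite{AIM}, and the lemma follows by essentially invoking their analysis with the substrate clause-variable game being the generalized $K$-coloring game induced by $(\cal{G}, K, R)$.
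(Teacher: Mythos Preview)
Your proposal is correct and takes essentially the same approach as the paper: the paper's proof is simply ``This follows from Theorem 26 of~\cite{AIM},'' and you likewise reduce to the birthday repetition analysis of \cite{AIM}, merely sketching more of its internals before invoking it.
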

  \begin{proof}
  This follows from Theorem 26 of~\cite{AIM}.
  \end{proof}
  
  From now on in this section, we will fix an instance $(\cal{G}, K, R)$ of \textsf{$\delta$-GAP-$K$COL}, and omit the parameters in our notation for the consistency game $G^{k, \ell}(\cal{G}, K, R)$.
    
    \begin{lemma}\label{lem:cons-classical-strat}
    For any pair of states $\ket{\psi_1}, \ket{\psi_2}$ that are accepted by the $\sf{QMA}(2)$ verifier Arthur in the consistency test from Figure \ref{fig:cons-test} with probability $1 -\nu$, there exists a product distribution $\calD(\psi_1, \psi_2) = \calD^A \ot \calD^B$ over question pairs in $G^{k, \ell}$ and a randomized classical strategy achieving value $1 - \nu$ on $G^{k,\ell}|_{\calD(\psi_1, \psi_2)}$.
    \end{lemma}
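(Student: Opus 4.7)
\textbf{Proof plan for Lemma \ref{lem:cons-classical-strat}.}

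The plan is to read the consistency test itself as a recipe for a product distribution together with a classical strategy. Since the consistency test (\Cref{fig:cons-test}) consists entirely of (i) measuring every register of $\ket{\psi_1} \otimes \ket{\psi_2}$ in the standard basis and then (ii) applying a purely classical predicate to the outcomes, the accepting probability is exactly the value, under this joint measurement distribution, of the classical predicate used in $G^{k,\ell}$. I will exploit this directly.

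First, I will define $\calD^A$ to be the marginal distribution over $k$-tuples of edges obtained by measuring the question registers $\reg{Q}_1 \cdots \reg{Q}_k$ of $\ket{\psi_1}$ in the standard basis, and $\calD^B$ to be the marginal distribution over $\ell$-tuples of vertices obtained by measuring the question registers $\reg{Q}'_1 \cdots \reg{Q}'_\ell$ of $\ket{\psi_2}$ analogously. Because $\ket{\psi_1} \otimes \ket{\psi_2}$ is a product state and the two measurements act on disjoint tensor factors, these two marginals are independent, and the joint distribution of all question-register outcomes is exactly $\calD(\psi_1,\psi_2) = \calD^A \otimes \calD^B$.

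Next, I will define the classical strategy. For Alice, given question $A$ sampled from $\calD^A$, let $P_{\text{Alice}}(\cdot \mid A)$ be the conditional distribution over Alice's answer registers $\reg{A}_1 \cdots \reg{A}_k$ that arises from the joint standard-basis measurement of $\ket{\psi_1}$, conditioned on the question registers having produced $A$. Alice samples her answer from this conditional distribution using local randomness (she knows $\ket{\psi_1}$ and can efficiently describe this conditional). Define $P_{\text{Bob}}(\cdot \mid B)$ analogously from $\ket{\psi_2}$. Since $\ket{\psi_1}$ and $\ket{\psi_2}$ are unentangled, the full joint distribution over $(A, B, \text{ans}_A, \text{ans}_B)$ induced by (i) sampling $(A,B) \sim \calD^A \otimes \calD^B$ and (ii) sampling answers from $P_{\text{Alice}}(\cdot\mid A) \otimes P_{\text{Bob}}(\cdot \mid B)$ is identical, register by register, to the distribution of outcomes obtained by measuring $\ket{\psi_1} \otimes \ket{\psi_2}$ in the standard basis.

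Finally, I will observe that the decision predicate of $G^{k,\ell}$ (\Cref{def:consis-game}) coincides verbatim with step~3 of the consistency test: both check, for every edge $e$ in Alice's $k$ questions and every vertex $v$ in Bob's $\ell$ questions with $v \in e$, that the two colour assignments agree on $v$ and satisfy $R(e, \cdot, \cdot)$. Therefore the value achieved by the randomized classical strategy $(P_{\text{Alice}}, P_{\text{Bob}})$ on $G^{k,\ell}|_{\calD(\psi_1,\psi_2)}$ equals the probability that Arthur accepts $\ket{\psi_1}\otimes\ket{\psi_2}$ in the consistency test, which is $1 - \nu$ by hypothesis. There is no real obstacle here; the only thing to verify carefully is the equality of the two joint distributions, which is immediate from the fact that both are obtained by performing the same product measurement on the same product state.
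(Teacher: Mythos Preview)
Your proposal is correct and takes exactly the same approach as the paper, which proves this lemma in a single sentence: ``To obtain $\calD$ and the classical strategy, perform a standard basis measurement of $\ket{\psi_1}$ and $\ket{\psi_2}$.'' You have simply unpacked this one-liner carefully, making explicit the product structure of the measurement distribution and the identification of the consistency-test predicate with the decision predicate of $G^{k,\ell}$.
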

    \begin{proof}
    To obtain $\calD$ and the classical strategy, perform a standard basis measurement of $\ket{\psi_1}$ and $\ket{\psi_2}$.
    \end{proof}

The following is a restatement of Lemma \ref{lem:unif-general}.
    \begin{lemma}\label{lem:cons-decompose-D}
    For $\ket{\psi_1}, \ket{\psi_2}$ each passing the uniformity test (Figure \ref{fig:unif-test}) with probability $1 - \nu$, the distribution $\calD(\psi_1, \psi_2)$ obtained by measuring the question registers of $\ket{\psi_1}$ and $\ket{\psi_2}$ in the standard basis is of the form
    \[ \calD(\psi_1, \psi_2) = \calD^A\ot \calD^B, \]
    where, for $W \in \{A,B\}$, $\calD^W$ has a decomposition of the form
    \[ \calD^{W} = \sum_{T: T \subseteq [k], |T| \geq \frac{k}{K'}(1-\eta)} p^{W}({T}) \calD^{unif}_T \ot \calD^{junk, W}_{\overline{T}} + \calD^{error, W}, \]
    where $p^W$ is a distribution mapping subsets $T \subseteq [k]$ to $[0,1]$, and $\| \calD^{error, W}\|_1 \leq \delta(\nu) = O(\nu^{1/4})$.
    \end{lemma}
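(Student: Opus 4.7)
The plan is to obtain this lemma as an essentially immediate consequence of \Cref{lem:unif-general} applied separately to the two witness states $\ket{\psi_1}$ and $\ket{\psi_2}$. The key observation is that the joint witness provided to Arthur is $\ket{\psi_1} \otimes \ket{\psi_2}$, which is in tensor product across the two-prover cut by the unentanglement promise of the $\sf{BellQMA}(2)$ protocol. Consequently, performing a standard-basis measurement on the question registers of each half is a local operation on each side, and the resulting distribution $\calD(\psi_1, \psi_2)$ automatically factors as $\calD^A \otimes \calD^B$, where $\calD^A$ arises from measuring the question registers of $\ket{\psi_1}$ and $\calD^B$ from those of $\ket{\psi_2}$. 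This already handles the first claim of the lemma.

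For each $W \in \{A, B\}$, I would then invoke \Cref{lem:unif-general} on the corresponding state; its hypothesis is satisfied because each of $\ket{\psi_1}, \ket{\psi_2}$ is assumed to pass the uniformity test with probability $1 - \nu$. The conclusion of that lemma supplies a collection $\cal{S}^W$ of subsets of $[k]$, each of size at least $\frac{k}{K'}(1-\eta)$, and a distribution $p^W$ supported on $\cal{S}^W$, such that
\[ \calD^W \simeq_{\delta(\nu)} \sum_{T \in \cal{S}^W} p^W(T)\, \calD^{unif}_T \otimes \calD^{junk, W}_{\overline{T}} \]
in total variation distance, with $\delta(\nu) = O(\nu^{1/4})$.

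To match the precise form in the statement of \Cref{lem:cons-decompose-D}, I would extend $p^W$ by zero to all subsets $T \subseteq [k]$ of size at least $\frac{k}{K'}(1-\eta)$ that lie outside $\cal{S}^W$, and then simply \emph{define}
\[ \calD^{error, W} \;:=\; \calD^W - \sum_{T} p^W(T)\, \calD^{unif}_T \otimes \calD^{junk, W}_{\overline{T}}. \]
By construction, $\calD^{error, W}$ is the signed difference of two objects whose total variation distance is bounded by $\delta(\nu)$, so its $\ell_1$ norm is at most $2\delta(\nu) = O(\nu^{1/4})$, as required.

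There is no real obstacle here beyond careful bookkeeping: all of the technical work has already been absorbed into the proof of \Cref{lem:unif-general}, and the tensor-product structure of the question distribution is an immediate consequence of the unentanglement of $\ket{\psi_1}$ and $\ket{\psi_2}$. The only subtlety worth flagging is that $\calD^{error, W}$ is in general a \emph{signed} measure rather than a probability distribution, which is precisely why the statement quantifies it by its $\ell_1$ norm rather than by giving it a probabilistic interpretation.
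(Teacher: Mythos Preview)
Your proposal is correct and matches the paper's own approach: the paper simply states that \Cref{lem:cons-decompose-D} is a restatement of \Cref{lem:unif-general}, and your argument spells out exactly how---apply \Cref{lem:unif-general} to each of $\ket{\psi_1},\ket{\psi_2}$ separately, use the tensor-product promise to factor $\calD(\psi_1,\psi_2)$, and absorb the approximation error into $\calD^{error,W}$.
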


    Intuitively, Lemma \ref{lem:cons-decompose-D} says that, if Alice and Bob provide states $\ket{\psi_1}, \ket{\psi_2}$ which pass the uniformity test (Figure \ref{fig:unif-test}) with high probability, then the distribution over questions for $G^{k, \ell}$ which is obtained by measuring the question registers of $\ket{\psi_1}$ and $\ket{\psi_2}$ in the standard basis can be expressed (on each of Alice's and Bob's sides) as a convex mixture of distributions, such that most of the distributions making up this convex mixture are uniform on some significant fraction of their indices, and the rest of the distributions (the $\cal{D}^{error}$ ones) are arbitrary.
    
    We now prove a lemma which will allow us to reduce the soundness of the consistency game with questions sampled in such a way to the soundness of a smaller instance of the consistency game with uniformly random questions.
%
    
    \begin{lemma}\label{lem:cons-discard-questions}
    Let $\calD = \calD^A \ot \calD^B$, where $\calD^A$ is of the form
    \[ \sum_{T: T \subseteq [k], |T| = k'} p^A(T) \calD_T^{unif} \ot \calD_{\overline{T}}^{junk, A} \]
    and $\calD^B$ is of the form
    \[ \sum_{T: T \subseteq [\ell], |T| = \ell'} p^B(T) \calD_T^{unif} \ot \calD_{\overline{T}}^{junk, B} \]
    with $k' \leq k, \ell' \leq \ell$. Then
    \[ \omega(G^{k, \ell}|_{\calD}) \leq \omega(G^{k', \ell'}). \]
    \end{lemma}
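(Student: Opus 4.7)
The plan is to prove the inequality by direct strategy simulation: given any (possibly randomized) strategy $\strategy$ for $G^{k,\ell}|_{\calD}$ achieving value $v$, I will construct a strategy $\strategy'$ for $G^{k',\ell'}$ achieving value at least $v$. Since the two players in a free game act independently, and since $\calD$ factors as $\calD^A \otimes \calD^B$, I can describe the simulation separately for each player.

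Here is the construction on Alice's side (Bob's is symmetric). Given her uniformly random size-$k'$ edge tuple $\vec{e} = (e_1,\dots,e_{k'})$ in $G^{k',\ell'}$, Alice locally samples $T \sim p^A$, obtaining a subset $T \subseteq [k]$ with $|T| = k'$. She constructs a length-$k$ edge tuple $\vec{E}$ by placing the entries of $\vec{e}$ into the positions indexed by $T$ (in some canonical order) and sampling the remaining coordinates from $\calD^{junk,A}_{\overline{T}}$. She then runs $\strategy$'s Alice on $\vec{E}$, receives a colouring for the endpoints of every edge in $\vec{E}$, and returns only those colourings relevant to the positions in $T$. Bob does the same using $p^B$, the junk distribution $\calD^{junk,B}_{\overline{T'}}$, and his size-$\ell'$ vertex tuple.

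The key step is the distributional matching. By construction, the induced marginal distribution of $\vec{E}$ is exactly
\[
\sum_{T: |T|=k'} p^A(T) \, \calD^{unif}_T \otimes \calD^{junk, A}_{\overline{T}} \;=\; \calD^A,
\]
and similarly Bob produces a tuple with marginal $\calD^B$. Since Alice and Bob sample their local randomness independently, the joint distribution over $(\vec{E}, \vec{V})$ fed into $\strategy$ is exactly $\calD^A \otimes \calD^B = \calD$. Therefore the probability that $\strategy$ wins $G^{k,\ell}$ on these simulated questions equals $\val(G^{k,\ell}|_\calD, \strategy) = v$.

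Finally, I will argue that $\strategy'$ wins $G^{k',\ell'}$ whenever $\strategy$ wins $G^{k,\ell}$ on the simulated input. The winning predicate for $G^{k',\ell'}$ checks consistency over pairs $(e,v)$ with $e$ among Alice's $k'$ real edges (those in positions $T$) and $v$ among Bob's $\ell'$ real vertices. Every such check is one of the checks performed by the $G^{k,\ell}$ predicate on $(\vec{E},\vec{V})$, so a win in the larger game implies a win in the smaller one. Hence $\val(G^{k',\ell'}, \strategy') \geq v$, and taking the supremum over $\strategy$ yields $\omega(G^{k',\ell'}) \geq \omega(G^{k,\ell}|_\calD)$. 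There is no serious obstacle here; the only thing to be careful about is that $T$ and $T'$ are sampled locally (so no extra coordination between the players is required) and that the canonical ordering used to embed $\vec{e}$ into positions $T$ is consistent with how $\calD^{unif}_T$ is defined, but both are straightforward bookkeeping.
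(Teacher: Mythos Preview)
Your proposal is correct and follows essentially the same approach as the paper: both construct the induced strategy for $G^{k',\ell'}$ by having each player locally sample their subset $T$ from $p^W$, embed their real questions into the $T$-positions, fill the remaining coordinates from the junk distribution, run the original strategy, and restrict the answer. The paper presents this via a short sequence of hybrids while you give the distributional matching and ``checks are a subset'' arguments directly, but the underlying construction and reasoning are identical.
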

    \begin{proof}
    Fix $S$, a strategy for the game $G^{k, \ell}|_{\cal D}$. Any strategy $S$ for $G^{k, \ell}|_{\cal D}$ automatically induces a strategy $S'$ for $G^{k', \ell'}$. Concretely, this induced strategy works as follows: given an Alice question $x'$ from $G^{k', \ell'}$, Alice samples a set $T$ according to the distribution $p(T)$, samples an $x''$ from $\calD_{\overline{T}}^{junk}$, and sets $x = x' \| x''$. She then samples an answer $a$ for the question $x$ using her strategy for $G^{k, \ell}|_{\calD}$. $a$ will necessarily assign colours to all the endpoints of edges in the set of edges represented by $x'$; Alice responds with these colours only, which we will denote by $a'$. Bob does likewise, receiving question $y'$, sampling question $y''$ to form $y = y' \| y''$, obtaining answer $b$ to question $y$, and returning $b'$, the restriction of $b$ to that part which is relevant to $y'$. 
    
    We can analyse the success probability of $S'$ relative to that of $S$ through a series of hybrids.
    \begin{enumerate}
    \item In the first hybrid, the strategy $S$ is played in $G^{k,\ell}|_{\calD}$. Suppose that $S$ has a success probability of $p$ in $G^{k,\ell}|_{\calD}$.
    \item In the second hybrid, we define a new strategy $R$ for $G^{k,\ell}|_{\cal{D}}$. The new strategy works as follows: Alice samples a question $x'$ from the question distribution of $G^{k', \ell'}$, samples a set $T$ according to the distribution $p(T)$, samples an $x''$ from $\calD_{\overline{T}}^{junk}$, and sets $x = x' \| x''$. She then plays strategy $S$ on question $x$. Bob does likewise, sampling question $y'$, sampling question $y''$ to form $y = y' \| y''$, and playing strategy $S$ on $y$. The form of $\calD_W$ for $W \in \{A,B\}$ means that the questions $x$ and $y$ in this hybrid are distributed exactly as they would be in $G^{k,\ell}|_{\calD}$. Therefore, the success probability of $R$ is still $p$.
    \item In the third hybrid, Alice and Bob play the `induced strategy' $S'$ outlined in the first paragraph of this proof in the game $G^{k',\ell'}$. Note that, for all possible questions $(x', y')$ in $G^{k', \ell'}$ and all possible embeddings of $(x', y')$ into questions $(x,y)$ in $G^{k,\ell}|_\calD$, the checks that the rules of $G^{k', \ell'}$ require Arthur to perform on $a' \subseteq a, b' \subseteq b$ form a subset of the checks that the rules of $G^{k, \ell}|_\calD$ require Arthur to perform on $a,b$. The latter means that, for any valid question $(x',y')$ in $G^{k', \ell'}$ and any $(x,y)$ induced by $(x',y')$ according to the procedure in the first paragraph of this proof, a winning answer to question pair $(x,y)$ induces a winning answer to $(x',y')$. Therefore, the success probability can only increase between the last hybrid and this one.
    \end{enumerate}

We conclude that, if there is a strategy $S$ for $G^{k,\ell}|_{\calD}$ which wins with probability $p$, then the strategy $S'$ for $G^{k',\ell'}$ induced by $S$ also wins with probability at least $p$.
    \end{proof}
    
Now we are ready to prove Lemma \ref{lem:cons-soundness}. For convenience, we restate Lemma \ref{lem:cons-soundness} below as Lemma \ref{lem:cons-soundness-2}.

   \begin{lemma}\label{lem:cons-soundness-2}
    Let $(\cal{G}, K, R)$ be an instance of \textsf{$\delta$-GAP-$K$COL}. Suppose the two states $\ket{\psi_1}, \ket{\psi_2}$ are accepted in the protocol of \Cref{fig:protocol} with probability at least $1 - \eps$. Then there exists a strategy for $G^{k', k'}(\cal{G}, K, R)$, $k' = \frac{k}{K'}(1-\eta)$, with value $1 - O(\eps^{1/4})$.
    \end{lemma}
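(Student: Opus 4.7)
The plan is to chain together the three lemmas we have already established: \Cref{lem:cons-classical-strat} (extracting a classical strategy from a quantum one), \Cref{lem:cons-decompose-D} (the structural consequence of the uniformity test, which is just a repackaging of \Cref{lem:unif-general}), and \Cref{lem:cons-discard-questions} (reducing from a structured question distribution to the plain uniform one on a smaller game). The overall acceptance probability $1-\eps$ is a convex combination of the uniformity-test and consistency-test acceptance probabilities (Arthur flips one coin in \Cref{fig:protocol}), so by Markov each of the uniformity test on $\ket{\psi_1}$, the uniformity test on $\ket{\psi_2}$, and the consistency test on $\ket{\psi_1}\otimes\ket{\psi_2}$ passes with probability at least $1-O(\eps)$.

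First, from the consistency-test success, apply \Cref{lem:cons-classical-strat}: measuring the question registers of $\ket{\psi_1}$ and $\ket{\psi_2}$ in the standard basis yields a product distribution $\calD(\psi_1,\psi_2)=\calD^A\otimes\calD^B$ and a classical randomized strategy achieving value $1-O(\eps)$ on $G^{k,k}|_{\calD(\psi_1,\psi_2)}$. Second, from the two uniformity-test successes, apply \Cref{lem:cons-decompose-D} to conclude that $\calD^A$ and $\calD^B$ each admit a decomposition
\[
\calD^W \;=\; \calD^{W,\mathrm{main}} + \calD^{W,\mathrm{error}},
\]
where $\calD^{W,\mathrm{main}} = \sum_{T:\,|T|\geq k'} p^W(T)\,\calD^{unif}_T\otimes \calD^{junk,W}_{\overline T}$ and $\|\calD^{W,\mathrm{error}}\|_1 = O(\eps^{1/4})$, with $k' = \frac{k}{K'}(1-\eta)$.

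Third, I would transfer the strategy to the main piece of the distribution. Since the total variation distance between $\calD^A\otimes\calD^B$ and $\calD^{A,\mathrm{main}}\otimes \calD^{B,\mathrm{main}}$ (after renormalisation) is $O(\eps^{1/4})$, and since the winning probability of any fixed strategy is a linear functional of the question distribution, the same classical strategy wins on $G^{k,k}|_{\calD^{\mathrm{main}}}$ with probability at least $1-O(\eps^{1/4})$, where $\calD^{\mathrm{main}} := \calD^{A,\mathrm{main}}\otimes\calD^{B,\mathrm{main}}$ (up to normalisation). Finally, apply \Cref{lem:cons-discard-questions} to $\calD^{\mathrm{main}}$ to obtain a strategy for $G^{k',k'}$ with value at least $1-O(\eps^{1/4})$, as desired.

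The one minor wrinkle is that \Cref{lem:cons-decompose-D} gives sets $T$ with $|T|\geq k'$ whereas \Cref{lem:cons-discard-questions} is phrased for $|T|= k'$; I would handle this by subsampling a uniformly random $k'$-subset inside each $T$ and folding the remaining coordinates into the junk part, which preserves the required structure because a uniform distribution on a set restricts to a uniform distribution on any subset. The main conceptual obstacle is simply making sure the two sources of error (the $O(\eps)$ error in the classical strategy's winning probability and the $O(\eps^{1/4})$ error in the distribution) combine cleanly; both are bounded by $O(\eps^{1/4})$, yielding the stated value.
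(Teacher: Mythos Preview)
Your proposal is correct and follows essentially the same approach as the paper: extract a classical strategy via \Cref{lem:cons-classical-strat}, decompose the induced question distribution via \Cref{lem:cons-decompose-D}, discard the $O(\eps^{1/4})$-mass error terms, and apply \Cref{lem:cons-discard-questions}. The paper handles the $|T|\geq k'$ versus $|T|=k'$ discrepancy with a one-line ``without loss of generality'' (absorbing the excess uniform coordinates into the junk part), which is exactly the observation underlying your subsampling remark.
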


    \begin{proof}
    Let $\calD = \calD^A \ot \calD^B$ and $\cal{S}$ be, respectively, the product distribution over questions in $G^{k, k}$ and the randomised classical strategy succeeding with probability $1-2\eps$ in $G^{k,k}|_{\calD}$ which arise from applying \Cref{lem:cons-classical-strat} to $\ket{\psi_1}, \ket{\psi_2}$.
    
    Now, applying \Cref{lem:cons-decompose-D} to $\calD^W$ for $W \in \{A,B\}$, we obtain decompositions
      \[ \calD^{W} = \sum_{T: T \subseteq k, |T| \geq \frac{k}{K'}(1-\eta)} p^{W}(T) \calD^{unif}_T \ot \calD^{junk, W}_{\overline{T}} + \calD^{error, W}, \]
    where the error terms have bounded 1-norm at most $\delta(\eps) = O(\eps^{1/4})$. 
    
    We can, without loss of generality, rewrite such a decomposition as
    \[ \calD^{W} = \sum_{T: T \subseteq k, \color{BurntOrange}|T| = \frac{k}{K'}(1-\eta)} p^{W}(T) \calD^{unif}_T \ot \calD^{junk, W}_{\overline{T}} + \calD^{error, W}. \]
    
    Since the error terms in $\calD^W$ have probability mass at most $\delta(\eps)$, the probability mass of the subgames in $G^{k, k}|_\calD$ where either the Alice or the Bob questions are drawn from $\calD^{error, W}$ is at most $2 \cdot \delta(\eps)$. We may thus discard these terms from $\calD^W$ at the cost of changing the value of the game $G^{k, k}|_{\calD}$ by $O(\delta(\eps))$, obtaining a new game $G^{k, k}|_{\cal{D}'}$ where the question distribution $\calD' = (\calD^A)' \otimes (\calD^B)'$ is such that $(\calD^A)'$ and $(\calD^B)'$ are mixtures over only the ``good" terms, and which has value at least $1 - 2\eps - O(\delta(\eps))$.
    
    To proceed, let us apply \Cref{lem:cons-discard-questions} to $G^{k, k}|_{\cal{D}'}$. This tells us that $\omega(G^{k, k}|_{\cal{D}'}) \leq \omega(G^{k', k'})$, with $k' = \frac{k}{K'}(1-\eta)$. Therefore,
    \begin{align*}
     1 - 2\eps - O(\delta(\eps)) &\leq  \omega(G^{k', k'}).
    \end{align*}
    This concludes the proof of the lemma.
    \end{proof}
    
    Moreover, by \Cref{lem:cons-birthday-soundness}, $\omega(G^{k', k'})$ is at most $1 - c$ when $k' = \Omega(\sqrt{n})$. Putting this together with Lemma \ref{lem:cons-soundness}, we obtain that the success probability of a cheating Merlin in the protocol of section \ref{sec:protocol} is at most $1 - c'$ for some constant $c' > 0$: the soundness of the consistency game $\omega(G^{k', k'})$, as expressed in \Cref{lem:cons-birthday-soundness}, requires that
    \[ 1 - O(\eps^{1/4}) \leq 1 - c, \]
    and therefore
    \[ \eps \geq \Omega(c^{4}).\]


    \subsection{Lower bounds for $h_{\mathrm{Sep}}$ conditional on
      ETH}
    \begin{definition}[$h_{\mathrm{Sep}}$]
      Let $\cal{H}_{AB} := \cal{H}_A \otimes \cal{H}_B$, where $\cal{H}_A$ and $\cal{H}_B$ are finite dimensional Hilbert spaces. Let $\cal{S}_{AB}$ be the set of separable states on $\cal{H}_{AB}$, i.e. the set of density matrices $\rho_{AB}$ such that $\rho_{AB}$ can be written as
      \[ \rho_{AB} = \sum_k p_k \: \rho_{A,k} \otimes \rho_{B,k} \]
      where, for all $k$, $\rho_{A,k}$ is a density matrix on $\cal{H}_A$, $\rho_{B,k}$ is a density matrix on $\cal{H}_B$, and $p_k$ is a probability.
      Given a Hermitian matrix $M$ on $\cal{H}_{AB}$, $h_{\mathrm{Sep}}(\eps, M, \cal{H}_{AB})$ is the problem of estimating
      \[ \max_{\rho \in \cal{S}_{AB}} \tr(M\rho) \]
      up to additive error $\eps$.
    \end{definition}
    \begin{theorem}
    \label{thm:hsep-lower-bound}
     Let $|\cal{H}|$ denote the dimension of a Hilbert space $\cal{H}$. Suppose $\mathscr{A}$ is an algorithm to solve $h_{\mathrm{Sep}}(\eps, M, \cal{H}_{AB})$ for constant $\eps$ and $M$ such that $\{M, \Id - M\}$ is a Bell measurement (see \Cref{def:bell-meas}). If $\mathscr{A}$ has time complexity at most
      \[ \exp(O(\log^{1 - \nu}|\cal{H}_A| \log^{1-\mu} |\cal{H}_B|)) \]
      for $\nu + \mu > 0$, then 3SAT with $N$ clauses has an algorithm taking time $2^{N^{1-(\nu+\mu)/2} \cdot \poly\log N}$.
    \end{theorem}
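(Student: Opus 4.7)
The plan is to reduce 3SAT to $h_{\mathrm{Sep}}$ by invoking the $\sf{BellQMA}(2)$ protocol for 3SAT constructed in the previous sections. Recall that this protocol has constant completeness-soundness gap, uses witness states on $O(\sqrt{N}\log N)$ qubits for each of the two provers, and has an acceptance measurement that is (by definition of $\sf{BellQMA}$) a Bell measurement $\{M, \Id - M\}$ on $\cal{H}_A \otimes \cal{H}_B$, where $|\cal{H}_A| = |\cal{H}_B| = 2^{O(\sqrt{N}\log N)}$.

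First, I would observe that the maximum acceptance probability of a $\sf{BellQMA}(2)$ verifier over \emph{unentangled} witness pairs $\ket{\psi_1} \otimes \ket{\psi_2}$ is equal to the optimum of $\tr(M\rho)$ over separable states $\rho \in \cal{S}_{AB}$. The ``$\leq$'' direction is immediate, since pure product states are separable. The ``$\geq$'' direction follows from convexity: any separable $\rho = \sum_k p_k \rho_{A,k} \otimes \rho_{B,k}$ satisfies $\tr(M\rho) \leq \max_k \tr(M (\rho_{A,k} \otimes \rho_{B,k}))$, and each $\rho_{W,k}$ is a mixture of pure states, so by convexity we can further upper bound this by a quantity of the form $\bra{\phi_A \phi_B} M \ket{\phi_A \phi_B}$ for pure $\ket{\phi_A}, \ket{\phi_B}$. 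Hence the maximum separable value coincides with the maximum $\sf{BellQMA}(2)$ acceptance probability.

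Next, given a 3SAT instance with $N$ clauses, I would produce (in polynomial time) the Bell measurement $M$ corresponding to the verifier's acceptance POVM in the $\sf{BellQMA}(2)$ protocol on that instance, and then invoke $\mathscr{A}$ to estimate $\max_{\rho \in \cal{S}_{AB}} \tr(M\rho)$ to within additive error $\eps$ smaller than the (constant) completeness-soundness gap of the protocol. By completeness and soundness, the estimate distinguishes satisfiable from unsatisfiable 3SAT instances. The running time of $\mathscr{A}$ on this input is at most
\[ \exp\!\left(O\bigl(\log^{1-\nu}|\cal{H}_A|\,\log^{1-\mu}|\cal{H}_B|\bigr)\right) = \exp\!\left(O\bigl((\sqrt{N}\log N)^{2-\nu-\mu}\bigr)\right) = 2^{N^{1-(\nu+\mu)/2}\cdot\poly\log N}, \]
as claimed, and the reduction itself runs in polynomial time which is negligible in comparison.

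The only genuinely non-routine point is the convexity argument showing that the maximum separable value equals the maximum acceptance probability on pure product witnesses; everything else is bookkeeping. I do not expect any serious obstacle — the real work was done in the previous sections, which established that 3SAT admits a $\sf{BellQMA}(2)$ protocol with exactly the parameters needed to make the arithmetic of the running time work out.
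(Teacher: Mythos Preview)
Your proposal is correct and follows essentially the same approach as the paper: reduce 3SAT via the $\sf{BellQMA}(2)$ protocol to an instance of $h_{\mathrm{Sep}}$ with $\log|\cal{H}_A|, \log|\cal{H}_B| = O(\sqrt{N}\,\poly\log N)$, and then plug in the assumed running time of $\mathscr{A}$. Your explicit convexity argument (that the separable optimum equals the optimum over pure product witnesses) is a nice addition the paper leaves implicit; the only minor imprecision is that writing out $M$ as a full matrix is not polynomial time in $N$, but this cost is absorbed into the stated bound (and the paper glosses over this point too).
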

    \begin{proof}
      Suppose $\cal{G} = (V,E)$ is a graph with $n$ vertices and $m$ edges, and let $(\cal{G}, K, R)$ be an instance of \textsf{$\delta$-GAP-$K$COL} (\Cref{def:delta-gap-kcol}). We show a $\sf{BellQMA}(2)$ protocol in Section \ref{sec:protocol} to decide any such instance of \textsf{$\delta$-GAP-$K$COL}, which has a constant completeness-soundness gap if $\delta$ and $K$ are both constants, and where the witness is on two unentangled registers of size $O(\sqrt{n} \log m)$ and $O(\sqrt{n} \log n)$. By Theorem \ref{thm:3sat-to-kcol}, we can reduce any $N$-clause instance of 3SAT to an instance of \textsf{$\delta$-GAP-$K$COL} where $\delta$ and $K$ are constants and $m, n = N \poly \log N$. As such, we can set $|\cal{H}_A| = |\cal{H}_B| = 2^{\sqrt{N} \cdot \poly\log N}$. Applying the hypothetical algorithm $\mathscr{A}$ to the measurement $\{M, \Id - M\}$ induced by our protocol, we get that $\mathscr{A}$ can solve 3SAT in time
      \[ \exp\big(O(\log^{1 - \nu}|\cal{H}_A| \log^{1-\mu} |\cal{H}_B|)) \leq \exp(N^{1 - (\nu + \mu)/2} \cdot \poly\log N). \]
    \end{proof}

Assuming the Exponential Time Hypothesis (that $N$-clause 3SAT does not have any algorithm taking time $2^{o(N)}$), \Cref{thm:hsep-lower-bound} shows that there does not exist an algorithm for $h_{\mathrm{Sep}}$ on Bell measurements taking time at most
\[ \exp(O(\log^{1 - \nu}|\cal{H}_A| \log^{1-\mu} |\cal{H}_B|)) \]
for any constant $\nu + \mu > 0$. Therefore, \Cref{thm:hsep-lower-bound} shows that the algorithm given by \cite{BCY} for $h_{\mathrm{Sep}}$ on LOCC measurements, a superclass of Bell measurements, is optimal (possibly up to factors doubly logarithmic in $|\cal{H}_A|$ and $|\cal{H}_B|$).

    \section{Free entangled games: the gapless case}
    \label{sec:gapless}
    In this section, we show that any zero-gap $\MIP^*$ protocol can
    be compressed into a protocol with constant-length questions,
    without greatly increasing its answer length and verifier runtime. As
    a consequence, we obtain that $\MIP^*$ and $\AM^*$ coincide in the
    gapless case.

    It is important to interpret this result carefully: in the gapless
    case, deciding whether a free game has value $1$ is
    trivially as hard as deciding whether a general game has value $1$,
    since any game can be made free by having the verifier sample questions for the two provers independently and
    automatically accept on ``invalid'' question pairs. This
    transformation preserves value 1 strategies, and if the value of
    the original game was bounded away from $1$, so will be the value
    of the transformed game. Thus, the conclusion $\MIP^*_0 = \AM^*_0(2)$
    on its own is \emph{not} interesting, and in fact (using our definitions of $\AM^*_0(2)$ and $\MIP^*_0$ from \Cref{sec:prelims})
    it was \emph{already} shown by \cite{MNY20,MNY21} that $\AM^*_0(2)
    = \MIP^*_0 = \Pi_2$. Rather, what \emph{is} interesting about the
    results in this section is that we can obtain protocols with
    constant question size. In the subsequent section, we will use similar techniques to obtain protocols with constant question size in the
    gapped case, enabling us to show the nontrivial and novel fact
    that $\MIP^* = \AM^*(2) = \RE$.

    The main complexity result of the section is \Cref{thm:mip0gap-main-0}.
    \begin{definition}\label{def:ptl}
      The language $\pitwolang$ consists of all strings $x$ that are descriptions
      of Turing machines such that $\forall y \in \{0,1\}^*$, there exists
      a time $t \in \mathbb{N}$ such that Turing machine $M_x$ described
      by $x$ halts on input $y$ in $t$ steps. 
    \end{definition}
    \begin{fact}
      The language $\pitwolang$ is complete for $\Pi_2^0$. 
    \end{fact}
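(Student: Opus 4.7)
The plan is the standard two-part argument: first show $\pitwolang \in \Pi_2^0$, then exhibit a computable many-one reduction from an arbitrary $\Pi_2^0$ language to $\pitwolang$.

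For containment, I would simply rewrite the definition. Unpacking \Cref{def:ptl}, we have $x \in \pitwolang$ if and only if $\forall y \in \{0,1\}^* \, \exists t \in \mathbb{N} : R(x, y, t)$, where $R(x, y, t)$ is the predicate ``$M_x$ halts on input $y$ within $t$ steps.'' Since $R$ is decidable (run the universal Turing machine on $x$ and $y$ for $t$ steps and check whether it has halted), this exhibits $\pitwolang$ as a $\Pi_2^0$ predicate.

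For hardness, fix an arbitrary $\Pi_2^0$ language $L$. By definition there is a decidable predicate $R'$ such that $w \in L$ iff $\forall y \, \exists z : R'(w, y, z)$. I would define a computable map $w \mapsto \langle M_w \rangle$, where $M_w$ is the Turing machine which, on input $y$, enumerates $z = 0, 1, 2, \ldots$ and halts as soon as it finds a $z$ with $R'(w, y, z)$. Such a machine is uniformly constructible from $w$ by fixing a universal simulator and hardcoding $w$ and $R'$, so the map $w \mapsto \langle M_w \rangle$ is total computable. By construction, $M_w$ halts on every input $y$ iff for every $y$ there exists $z$ with $R'(w, y, z)$, which is precisely the statement $w \in L$. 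Hence $w \in L \iff \langle M_w \rangle \in \pitwolang$, giving the many-one reduction.

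There is no real obstacle here, as both directions are entirely standard; the only step that requires a tiny bit of care is ensuring the map $w \mapsto \langle M_w \rangle$ is total computable (it is: one produces the source code of the search routine with $w$ inserted as a constant), and this is what lets the reduction go through uniformly. Combining the two parts yields $\Pi_2^0$-completeness.
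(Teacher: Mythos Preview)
Your proof is correct and is the standard argument. The paper states this as a \textbf{Fact} without proof, as it is a classical result in computability theory; your two-part argument (containment via the evident $\forall\exists$ form, hardness via the search-machine reduction) is exactly the canonical justification.
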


    \begin{theorem}[\Cref{thm:mip0gap-main} below]
        There is an $\AM^*_0(2)[O(1), O(\log^*(\log(n)) \cdot \log(n)), O(n
  \log^*(n))]$ protocol for the $\Pi_2$-complete language $\ptl$ (\Cref{def:ptl}). 
  \label{thm:mip0gap-main-0}
    \end{theorem}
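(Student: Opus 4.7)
The plan is to obtain the claimed protocol by starting from a baseline zero-gap $\MIP^*$ protocol for $\ptl$ and then compressing it via the gapless answer reduction theorem followed by several iterations of the gapless question reduction (introspection) theorem. The baseline protocol we start from is the $\MIP^*_0$ protocol for $\ptl = \Pi_2^0$ established by \cite{MNY20,MNY21}, which has polynomial question length, polynomial answer length, and polynomial verifier runtime. Our goal is to shrink the question length all the way to $O(1)$ while only inflating the answer length to $O(\log n \cdot \log^* n)$.

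First, I would apply the gapless answer reduction theorem (\Cref{thm:ar-gapless}) once to the baseline $\MIP^*_0$ protocol. This brings the answer length down from $\poly(n)$ to roughly $\poly\log(n)$ while keeping the question length at $\poly(n)$, at the price of a controlled increase in the decider and checker runtimes. Next, I would apply the gapless question reduction theorem (\Cref{thm:intro-gapless}) once to bring the question length down from $\poly(n)$ to $\poly\log(n)$. After these two steps, both the question and answer lengths are polylogarithmic and the runtime is still polynomial, so we have an $n$-indexed $\AM^*_0(2)[\poly\log(n),\poly\log(n),\poly(n)]$ protocol for $\ptl$. Crucially, both of these transformations preserve value $1$ strategies (completeness) and preserve the property that cheating strategies have value strictly less than $1$ (zero-gap soundness).

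The heart of the proof is then the iterative step. Because the gapless question reduction theorem turns a game family with question length $q$ (and answer length $a$, index bound $n$) into a new game family with question length $O(\log q + \log n)$ and answer length $a + O(\log n)$, I would recursively apply question reduction $O(\log^* n)$ times. At iteration $i$, the question length drops like $q_{i+1} = O(\log q_i) + O(\log n)$, which shrinks from $\poly\log(n)$ to $O(\log n)$ in one step, then to $O(\log\log n)$, and so on; after $O(\log^* n)$ iterations the question length is $O(\log n)$ (already dominated by the $O(\log n)$ additive term coming from encoding the index), and one further trick (re-indexing so that the index bound inside the game family grows no faster than $O(n)$) lets us drive the question length down to $O(1)$. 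Meanwhile, the answer length grows additively by $O(\log n)$ per iteration, giving a final answer length of $O(\log n \cdot \log^* n)$, matching the statement. Verifier runtime grows by a $\poly\log$ factor per iteration but, by carefully amortising and keeping the index bounded by $O(n)$ throughout, the final decider/checker runtime is $O(n \log^* n)$.

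The main obstacle, and the place where care is required, is controlling the index $n$ through the iterated question reduction. Naively, each compression step increases the "meta-description size" of the game family it produces (since the new sampler and decider have to include a description of the old sampler and decider), which would blow up the effective index and therefore the additive $O(\log n)$ term in every subsequent iteration. The fix, following the approach in \cite{JNVWY20} and used in our hypercompression theorem (\Cref{thm:gapped-hyper}), is to observe that the gapless question reduction theorem can be engineered so that the new game family is \emph{uniformly} generated from a Turing machine whose size depends only logarithmically on the size of the old one, so that the index bound does not grow by more than a constant factor per iteration. Combined with the fact that after $O(\log^* n)$ iterations the question length drops below the encoding size of the index (at which point a final re-encoding step puts us at $O(1)$ question length), this yields the desired parameters. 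The other subtlety is that we must use the \emph{gapless} compression theorems of \cite{MNY21} throughout, since the iteration preserves only the zero-gap completeness/soundness condition and would not survive an application of the gapped compression theorems of \cite{JNVWY20}.
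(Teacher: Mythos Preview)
Your overall strategy—first applying one round of gapless compression to get $O(\log n)$ questions and $O(1)$ answers, then iterating gapless question reduction until the question length becomes constant—is essentially the paper's approach. However, your parameter tracking for gapless question reduction is wrong, and this leads you to invent an obstacle that does not exist.

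You write that gapless question reduction sends question length $q$ to $O(\log q + \log n)$. In fact, \Cref{thm:intro-gapless} gives $q_{intro}(n) = \lceil 2\log q(n) + 7 + \log|\mathcal{X}_{MS}|\rceil$, which is $O(\log q(n))$ with \emph{no} additive $\log n$ term. (You may be thinking of the full compression theorem \Cref{thm:gapless-compress}, whose question bound involves $\log t_d(n)$ because of the answer-reduction step; but hypercompression iterates only \textsc{GaplessIntro}, not full compression.) Consequently the recursion $\ell_{i+1} = O(\log \ell_i)$ drives the question length from $O(\log n)$ down to the universal constant $Q_0$ in $O(\log^*(\log n))$ steps, directly—there is no barrier at $O(\log n)$, no blowup of the ``effective index'' across iterations, and no ``re-indexing trick'' is needed. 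The index $n$ stays fixed throughout hypercompression (see \Cref{alg:hypercomp-pseudo}); what grows additively is only the \emph{description length} of the decider, and that growth is $O(1)$ per iteration by \Cref{item:intro-efficient-computability} of \Cref{thm:intro-gapless}, which is harmless.

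Two smaller points: (i) your initial step applies answer reduction while the questions are still $\poly(n)$ long, but \Cref{thm:ar-gapless} as stated requires $q(n)=O(\log n)$; the paper instead first applies one round of full gapless compression (\Cref{thm:gapless-compress}, as in \Cref{thm:mny21-main}), which does question reduction before answer reduction. (ii) Gapless answer reduction brings the answer length down to $O(1)$, not $\poly\log(n)$. With these corrections, your argument collapses to exactly the paper's proof: invoke \Cref{thm:mny21-main} and then \Cref{thm:gapless-hyper}.
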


        \Cref{thm:mip0gap-main-0} is shown by proving a gapless ``hypercompression''
    theorem, which compresses the question length of any game to a
    constant. Another consequence of this theorem is that we
    obtain a (nonrobust) self-test for $n$ EPR pairs with a constant number of
    questions. This is the first such test known; previous results
    have all required at least logarithmic-sized questions, even in
    the zero-gap (i.e. nonrobust) case \cite[Table 1]{BS20}.

    \begin{theorem}[\Cref{thm:epr-gapless} below]
        There is a family of games $G^{EPR}$ with $q(n) = Q_0$, $a(n) =
  O(n)$, $t_c(n) = \poly(n)$ and $t_d(n = \poly(n)$, and
  $\cal{E}(G^{EPR}_n) = n$. 
    \end{theorem}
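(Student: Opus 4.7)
The plan is to obtain $G^{EPR}$ by applying the gapless hypercompression theorem (\Cref{thm:mip0gap-main-0}, or rather the underlying transformation used to prove it) to the \emph{question sampling game} of Mousavi, Nezhadi, and Yuen~\cite{MNY21}. Recall that the question sampling game already has the property $\cal{E}(G^{QS}_n) = n$: it is a zero-gap protocol whose only value-$1$ strategies require Alice and Bob to share at least $n$ EPR pairs, which they use to introspectively sample their own questions. However, in its original form the question sampling game has question length growing at least logarithmically in $n$. Our target is to reduce the question length to a constant $Q_0$ while keeping the answer length at $O(n)$ and the decider/checker runtime polynomial, \emph{and} while preserving the lower bound $\cal{E} \geq n$ on the dimension of any value-$1$ strategy.

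The first step is to run one round of the general (gapless) compression theorem of~\cite{MNY20} on $G^{QS}_n$, which turns its question/answer/verifier parameters into polylogarithmic ones while preserving both the value and the entanglement bound. The second step is the iterative phase: we repeatedly apply the gapless question reduction procedure used in \Cref{thm:mip0gap-main-0}, each round shrinking the question length roughly logarithmically while leaving the answer length essentially unchanged. After $O(\log^*n)$ iterations the question length is driven down to a constant $Q_0$, and the answer length remains $O(n)$ (dominated by the original answers from $G^{QS}_n$, up to lower-order terms from the honest provers' encoded introspection certificates). The checker and decider runtimes track the transformations used to prove \Cref{thm:mip0gap-main-0} and are polynomial in $n$.

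The main obstacle, and the key point that must be verified, is that each of these compression steps preserves the \emph{entanglement lower bound} $\cal{E}(G) = n$, not merely the value. For the initial compression step and for each iterated question reduction step, the crucial observation is that the transformation is value-preserving in a particular structural sense: any value-$1$ strategy for the compressed game, when unrolled by the soundness analysis, yields a value-$1$ strategy for the original game on the same Hilbert space (possibly with some additional auxiliary registers), so the dimension of any winning strategy for the compressed game is at least that of a winning strategy for the original. Because the introspection-based question reduction only ever forces the provers to use \emph{more} entanglement (to sample their own questions), and never less, combining this with $\cal{E}(G^{QS}_n) \geq n$ gives $\cal{E}(G^{EPR}_n) \geq n$. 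Conversely, an honest strategy using exactly $n$ EPR pairs (the one inherited from $G^{QS}_n$, composed with the honest introspection strategies at each layer) establishes $\cal{E}(G^{EPR}_n) \leq n$, yielding equality.

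The verification of these entanglement-preservation claims is essentially a bookkeeping exercise once the gapless question reduction and compression theorems of \cite{MNY20,MNY21} are in hand; the potentially delicate point is ensuring that the auxiliary ancillas introduced at each compression layer do not contribute to the Schmidt rank of the shared state in a way that would lower the apparent entanglement bound, which is handled by the fact that those ancillas can be taken to be pure product states on each prover's side.
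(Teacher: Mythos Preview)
Your core idea matches the paper: apply gapless hypercompression (\Cref{thm:gapless-hyper}) to the Question Sampling game of \cite{MNY21}, and use the fact that hypercompression preserves the entanglement lower bound. The paper's proof is literally one line invoking \Cref{thm:gapless-hyper}, whose entanglement clause already packages the preservation argument you sketch.

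However, two aspects of your write-up are off. First, the preliminary ``one round of general compression'' is both unnecessary and internally inconsistent with your later accounting. The Question Sampling game already has $q(n) = O(\log n)$ and $a(n) = O(n)$, so hypercompression applies directly; the paper does not insert a full compression step (which would include answer reduction). Moreover, you write that this step ``turns its question/answer/verifier parameters into polylogarithmic ones,'' but then later assert the answer length ``remains $O(n)$ (dominated by the original answers from $G^{QS}_n$).'' These cannot both be true: if you ran answer reduction, the $O(n)$ answers are gone. Drop the preliminary step and the accounting becomes clean: $a^{\hp}(n) = a_{QS}(n) + \log^*(q_{QS}(n))\cdot(q_{QS}(n) + O(1)) = O(n)$.

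Second, your argument for the \emph{upper} bound $\cal{E}(G^{EPR}_n) \leq n$ is incorrect. The honest strategy for the introspected game does not use ``exactly $n$ EPR pairs'': each layer of introspection requires the provers to share additional EPR pairs to play the embedded Question Sampling game that generates the introspected questions (this is precisely why item~\ref{item:intro-entanglement} of \Cref{thm:intro-gapless} gives $\cal{E}(G^{intro}_n) \geq \max\{\cal{E}(G_n), 2^{2n}\}$, not merely $\geq \cal{E}(G_n)$). The paper's statement should be read as a lower bound $\cal{E}(G^{EPR}_n) \geq n$ (i.e., a test certifying at least $n$ EPR pairs), and indeed \Cref{thm:gapless-hyper} only asserts the inequality $\cal{E}(G^{\hp}_n) \geq \cal{E}(G^{\halfnote}_n)$.
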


    \subsection{Granular compression theorems for games}
    \label{sec:gapless-compression}
Some previous works that prove compression theorems state them for
families of games (see \Cref{def:game-family-gapless} for our definition of a family of games) while adhering to a convention
where the question and answer length and the verifier runtime stay roughly the same in $n$ (the index associated with the game family) after compression, and the effect of the compression is expressed by relating the value of $G^\hp_n$ to the value of $G^\hf_{2^n}$, where $G^\hp$ is the `compressed' game family and $G^\hf$ is the uncompressed game family. For instance, the \emph{question reduction theorem} of
\cite{JNVWY20} informally says the following: given a verifier $V^\hf$ who generates a game family $G^\hf$ with questions of length $q^\hf(n)$ and answers of length $a^\hf(n)$, there is `compressed verifier' $V^\hp$ who generates a game family $G^\hp$ with questions of length $\poly(q^\hf(n))$ and answers of length $2^{a^\hf(n)} + 2^{q^\hf(n)} + \poly(n)$, such that the value of the $n$th game in $G^\hp$ is similar to the value of the $2^n$th game in $G^\hf$. We informally refer to this convention as the `scaled-up setting'.

In this paper, we would prefer to state the
\cite{JNVWY20} compression theorem as the following: given a verifier
$V^\hf$ that generates a game $G^\hf$ with question length $q^\hf(n)$ and
answer length $a^\hf(n)$, there is a `compressed
verifier' $V^\hp$ that generates a game $G^\hp$ of question length $\poly\log
q^\hf(n)$ and answer length $a^\hf(n) + q^\hf(n) + \poly\log(n)$, such that the value of the $n$th game in $G^\hp$ is similar to the value of the $n$th game in $G^\hf$. We refer to this as the `scaled-down setting'.
%

The reason we prefer the `scaled-down' convention is that,
ultimately, we would like to repeatedly apply compression to a single protocol
(deciding a particular language $L$) while holding the index $n$
constant, to obtain a protocol with shorter questions that decides the
same language. The scaled-down setting makes it easier notationally for us to do so.

\subsubsection{Gapless question reduction}
We will start by restating the gapless compression theorem
of~\cite{MNY21} in the scaled-down setting, with more precise bounds
on the question and answer length (which we will need in order to prove our hypercompression theorem).

\begin{definition}
The Question Sampling game $QS_n$ is a game with $|\mathcal{X}_{MS}| \cdot n^2$ possible questions and $|\mathcal{A}_{MS}|^2 +2^n$ possible answers. Thus, it has question and answer bit length $q_{QS}(n) = \lceil 2 \log n + \log |\mathcal{X}_{MS}| \rceil$ and $a_{QS}(n) = \lceil 2 \log|\mathcal{A}_{MS}| \rceil + n$. It has checker runtime $t_c^{qs}(n) = O(\log n)$ and decider runtime $t_d^{qs}(n) = O(n)$. 
\end{definition}
The precise definition of the game is given in \cite[Section 3.3]{MNY21}.

\begin{theorem}\label{thm:intro-gapless}
There is a Turing machine $\mathrm{GaplessIntro}$ with the following
properties. Let $G = (D,C,Q)$ be a family of games with question length $q(n)$,
answer length $a(n)$, decider runtime $t_d(n)$, and checker runtime
$t_c(n)$, where $t_c(n), t_d(n) = \poly(n)$, and suppose 
$Q$ on input $n$ computes $q(n)$ in time $O(\log n)$. Then
$\mathrm{GaplessIntro}$ given a description of $D, C, Q$ outputs a
family of games $D^{intro}, C^{intro}, Q^{intro}$ such that the
following hold. 
\begin{enumerate}
    \item (\textbf{Question length:}) The question alphabet size of
      $G^{intro}_n$ is $7 + |\mathcal{X}_{QS}(q(n))| =
      |\mathcal{X}_{MS}| \cdot (q(n))^2 + 7$. The question length of
      $G^{intro}_n$ is thus upper-bounded by $q_{intro}(n) =\lceil 2
      \log q(n) +  7 + \log |\mathcal{X}_{MS}| \rceil $.
    \item (\textbf{Answer length:}) The answer length of $G^{intro}_n$
      is $a(n) + q(n)+ O(1)$. (Here, $O(1)$ means a universal
      constant independent of $G$.) \label{item:intro-answer}
    \item (\textbf{Checker:}) The checker $C^{intro}$ on input
      $(n,x,y)$ computes $q_{intro}(n)$ in time $O(\log n)$, and then runs a universal
      checker $C^{intro, universal}$ on input $(q, x,y)$, which takes time $t_{c}^{qs}(q) + O(1)$. The total checker runtime is thus $t_{c}^{qs}(q_{intro}(n)) + O(\log n)$, where $t_{c}^{qs}$ is
    the checker time bound for the Question Sampling game (and is a function only of
    $q_{intro}(n)$).
    \item (\textbf{Decider:}) The runtime of the decider $D^{intro}$
      is bounded by $t_{d}^{intro}(n) = t_d(n) + t_{d}^{qs}(q_{intro}(n)) +
      O(\log n)$, where $t_{d}^{qs}$ is
    the decider time bound for the Question Sampling game (and is a function only of
    $q_{intro}(n)$). \label{item:intro-runtime}
    \item (\textbf{Completeness:}) For all oracularizable
      finite-dimensional synchronous
      strategies for $G_n$, there exists an oracularizable
      finite-dimensional synchronous
      strategy for $G^{intro}_n$ achieving at least as high a
      value. \label{item:intro-completeness}
    \item (\textbf{Soundness:}) If $\omega_q(G_n) < 1$, then
      $\omega_q(G^{intro}_n) < 1$. \label{item:intro-soundness}
    \item (\textbf{Entanglement bound:}) $\cal{E}(G^{intro}_n) \geq
      \max\{\cal{E}(G_n), 2^{2n}\}$. \label{item:intro-entanglement}
    \item (\textbf{Efficient computability:}) 
      $\mathrm{GaplessIntro}$ should run in time
      $O(|D| + |C|  + |Q|)$. Moreover, $|D^{intro}| = |D| + O(1)$, and
      $|C^{intro}| = |C| + O(1)$, and $|Q^{intro}| = |Q| + O(1)$. \label{item:intro-efficient-computability}
  \end{enumerate}
  \end{theorem}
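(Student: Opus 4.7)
The plan is to follow the introspection paradigm of \cite{JNVWY20,MNY21}: I construct $G^{intro}_n$ so that, with some constant probability, the verifier runs a copy of the Question Sampling game $QS_{q(n)}$ to force the provers to introspect (i.e., sample using their own shared entanglement) a pair of questions $(x,y)$ for $G_n$, and with the remaining probability runs a consistency round that ties those introspected questions to a genuine play of $G_n$. The question alphabet of $G^{intro}_n$ is the disjoint union of the $|\mathcal{X}_{MS}| \cdot q(n)^2$ QS-type questions with a constant number (the ``$7$'') of auxiliary types used in the consistency round, yielding $q_{intro}(n) \leq \lceil 2\log q(n) + 7 + \log|\mathcal{X}_{MS}|\rceil$. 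Each prover's answer is a pair comprising an introspected $G$-question $x \in \{0,1\}^{q(n)}$ and a $G$-answer $a \in \{0,1\}^{a(n)}$, plus a constant-size tag, giving the target answer length $a(n) + q(n) + O(1)$ of item~\ref{item:intro-answer}.

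First I would implement $\mathrm{GaplessIntro}$ as a Turing machine that emits fixed, game-independent wrappers around $D, C, Q$ together with a hard-coded universal QS checker $C^{intro,universal}$ and a hard-coded QS decider, so that $|D^{intro}| = |D|+O(1)$, $|C^{intro}| = |C|+O(1)$, and $|Q^{intro}| = |Q|+O(1)$, as item~\ref{item:intro-efficient-computability} requires. On input $(n,x,y)$ the checker $C^{intro}$ first recomputes $q_{intro}(n)$ from $Q(n)$ in time $O(\log n)$ and then dispatches on the question type: a QS pair is forwarded to $C^{intro,universal}$ at parameter $q_{intro}(n)$, taking time $t^{qs}_c(q_{intro}(n))$, while an auxiliary pair is validated in $O(1)$. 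The decider $D^{intro}$ on a QS pair invokes the QS decider in time $t^{qs}_d(q_{intro}(n))$, and on a consistency pair extracts the provers' reported introspected questions $x, y$ from their answers, runs $C(n,x,y)$ (aborting if invalid), and runs $D(n,x,y,a,b)$ on the $G$-answers, accounting for the $t_d(n) + O(\log n)$ extra runtime demanded by item~\ref{item:intro-runtime}.

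Next I turn to completeness, soundness, and the entanglement bound. Completeness is proved by lifting any oracularizable finite-dimensional synchronous strategy $(\ket\psi, M)$ for $G_n$ to one for $G^{intro}_n$ by tensoring $\ket\psi$ with the honest value-$1$ synchronous QS strategy: on QS-type questions the provers play QS honestly, and on auxiliary-type questions they measure the QS introspection register to obtain a question $x$ and return $(x,a)$ with $a$ sampled from $M^x$; synchrony and oracularizability survive because QS is itself synchronous and oracularizable and the consistency round is designed to treat the two players symmetrically. Soundness and the entanglement bound exploit the self-testing guarantee of $QS_{q(n)}$ from \cite{MNY21}: any value-$1$ synchronous strategy for $G^{intro}_n$ must, up to local isometry, implement the honest QS introspection, which requires entanglement of dimension at least $2^{2q(n)} \geq 2^{2n}$; the consistency round then forces the reported $G$-answers to constitute a value-$1$ strategy for $G_n$, so $\omega^s_q(G_n) < 1$ contrapositively yields $\omega^s_q(G^{intro}_n) < 1$ and $\cal{E}(G^{intro}_n) \geq \max\{\cal{E}(G_n), 2^{2n}\}$.

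The main obstacle I anticipate is obtaining a genuine value-$1$-to-value-$1$ soundness transfer, since self-testing results are usually phrased with a constant-factor robustness loss. The way around this, mirroring MNY21, is to stay entirely inside the synchronous-strategy framework, where perfect strategies come with exact algebraic structure (projective measurements whose Alice and Bob copies commute across the tensor cut and agree via transposition); then the consistency-round measurement exactly commutes with the QS introspection measurement, so the reported $G$-question must literally equal the introspected one, yielding a perfect strategy for $G_n$ on the nose. The residual bookkeeping is to design the constant list of auxiliary question types so as to simultaneously enforce (i) Alice-side introspection-report consistency, (ii) Bob-side introspection-report consistency, and (iii) a cross-player round that feeds into $G$'s decider, in a way that is independent of $G$ and so fits into the fixed $O(1)$ overhead in code size, question count, and runtime.
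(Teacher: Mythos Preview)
Your approach is essentially the same as the paper's: the paper's proof simply states that $\mathrm{GaplessIntro}$ constructs the introspected game of \cite[Theorem~4.1]{MNY21}, verifies the question, answer, checker, and decider complexity bounds by inspecting that construction, and defers completeness, soundness, and the entanglement bound directly to \cite{MNY21}. You have in fact written out more of the construction than the paper does, but the underlying structure (QS subgame plus a constant number of auxiliary consistency types, with the original decider invoked on the introspected questions) is the same.

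One small gap is worth flagging. In your argument for the entanglement bound you write ``requires entanglement of dimension at least $2^{2q(n)} \geq 2^{2n}$,'' but the inequality $q(n) \geq n$ is not among the hypotheses of the theorem. The paper does not supply an independent argument for the $2^{2n}$ term---it simply cites \cite{MNY21}---so whether that term is literally correct in this scaled-down restatement, or should really be $2^{2q(n)}$, would have to be traced back to the source. In the paper's actual applications (the hypercompression theorem and the EPR test), only the $\cal{E}(G_n)$ part of the $\max$ is ever used, so this does not affect the downstream results.
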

  \begin{proof}
    The Turing machine $\mathrm{GaplessIntro}$ constructs the
    introspected game of \cite[Theorem 4.1]{MNY21}.
    The claimed bounds on the question and answer length follow by
    inspecting the game in~\cite{MNY21}. The completeness, soundness, and
    entanglement bound are shown directly in~\cite{MNY21}.

    For the checker runtime, by the hypothesis, it takes time $O(\log
    n)$ to compute $q(n)$, and thus time $O(\log n)$ to compute
    $q_{intro}(n) = \lceil 2 \log q(n) +  7 + \log |\mathcal{X}_{MS}| \rceil$. Once
    $q_{intro}(n)$ has been computed, the ``universal''
    \anote{explain} checker in~\cite{MNY21}
    either calls the checker for the question sampling game, or
    performs a constant-time check on the special question
    types. Thus, the total checker runtime is
    $t_{c}^{qs}(q_{intro}(n)) + O(\log n)$ as claimed.

    For the decider runtime, the decider of \cite{MNY21} runs the
    original decider of $D$, the decider for the question sampling
    game, and possibly some constant-time additional operations. In
    order to run the question sampling game decider, it must compute
    $q_{intro}(n)$ which takes time $O(\log n)$. This
    yields a total runtime of $t_d(n) + t_{d}^{qs}(q_{intro}(n)) +
    O(\log n)$.

    For
    efficient computability, this follows by inspecting the
    description of the game.

    We remark that in \cite{MNY21}, in the corresponding theorem, the
    conclusions are stated to hold only for all $n \geq n_0^{intro}$,
    where $n_0^{intro}$ can depend on properties of the input
    game. However, by inspecting the proof, it can be verified that
    this restriction on $n$ is only needed due to the non-asymptotic
    form of the bounds stated by \cite{MNY21}. In fact, the
    completeness and soundness hold for all $n$, and bounds on
    question and answer length and on checker and decider runtime as
    we have stated them here also hold for all $n$.
  \end{proof}

  \paragraph{Remark} The checker $C^{intro}$ is in fact
  \emph{universal}, and does not depend at all on $G$.

\subsubsection{Gapless answer reduction}
The following theorem is essentially \cite[Theorem 5.1]{MNY21}, but
with tighter bounds on the question length, answer length, and decider runtime. 
\begin{theorem} \label{thm:ar-gapless}
There is a Turing machine $\mathrm{GaplessAnsReduce}$ with the following
properties. Let $G = (D,C,Q)$ be a game family with complexity bounds $q(n), a(n),
t_c(n), t_d(n)$ respectively, where $a(n) \leq t_d(n) = \poly(n)$ and
$q(n) = O(\log n)$. Then $\mathrm{GaplessAnsReduce}$ given as input a description of $(D,C,Q)$ returns a description of a game family $G^{ans} = (D^{ans}, C^{ans},Q^{ans})$ such that for the game family $G^{ans}$ and for all $n$, we have
\begin{enumerate}
    \item (\textbf{Question length:}) The question length is
      $q_{ans}(n) = 2q(n)
      + O(\log |D| + \log t_d(n))$, where $|D|$ is the description length of the
      Turing machine $D$. 
    \item (\textbf{Answer length:}) The answer length is $O(1)$. 
    \item (\textbf{Checker runtime:}) The checker runs in time $t_c(n) + O(\log |D| + \log t_d(n))$.
    \item (\textbf{Decider runtime:}) The decider runtime is
      $\poly\log(n)$.
    \item (\textbf{Completeness:}) For any oracularizable
      finite-dimensional synchronous
      strategy $\scr{S}$ to $G_n$, there is an oracularizable
      finite-dimensional synchronous strategy $\scr{S}^{ans}$ to $G^{ans}_n$ such that
      \[ \omega_q(G^{ans}_n, \scr{S}^{ans}) \geq \frac{1}{2} +
        \frac{1}{2} \omega_q(G_n, \scr{S}). \]
    \item (\textbf{Soundness:}) If $\omega^s_q(G_n) < 1$, then
      $\omega^s_q(G^{ans}_n) < 1$.
    \item (\textbf{Entanglement bound:}) $\cal{E}(G^{ans}_n) \geq \cal{E}(G_n)$.
    \item (\textbf{Efficient computability:}) 
      $\mathrm{GaplessAnsReduce}$ should run in time
      $O(|D| + |C|  + |Q|)$. Moreover, $|D^{ans}| = |D| + O(1)$, and
      $|C^{ans}| = |C| + O(1)$, and $|Q^{ans}| = |Q| + O(1)$.
\end{enumerate}
\end{theorem}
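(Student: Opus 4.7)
The plan is to follow the PCPP-based gapless answer reduction of \cite[Theorem 5.1]{MNY21}, carefully accounting for question length, checker runtime, and decider runtime in order to extract the precise bounds claimed. Under the hypothesis $a(n)\leq t_d(n)=\poly(n)$, the predicate $D(n,\cdot,\cdot,\cdot,\cdot)$ on inputs $(x,y,a,b)\in\{0,1\}^{2q(n)+2a(n)}$ can be compiled, directly from the Turing machine description of $D$, into a Boolean circuit $C_D$ of size $\poly(|D|,t_d(n))$. A gapless PCPP for $C_D$ then produces, for every fixed question pair $(x,y)$, a proof $\pi^{x,y}$ of length $\poly(|D|,t_d(n))$ such that a correct proof associated with a winning $(a,b)$ passes every local check with certainty, while if $D(n,x,y,a,b)=0$ no proof passes all local checks with certainty.

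The new game $G^{ans}$ asks each prover to provide, in a self-testable binary encoding (e.g.\ the low individual-degree encoding used throughout \cite{MNY21}), both its original answer string and the PCPP proof $\pi^{x,y}$ associated to its question pair. A question to each prover in $G^{ans}_n$ consists of the pair $(x,y)\in\{0,1\}^{2q(n)}$ together with a constant number of indices into the encoded object; since the encoded object has length $\poly(|D|,t_d(n))$, indexing into it costs $O(\log|D|+\log t_d(n))$ bits, giving the claimed question length $2q(n)+O(\log|D|+\log t_d(n))$. A reply consists only of the constant number of queried bits, so the answer length is $O(1)$. The new checker runs $C(n,x,y)$ in time $t_c(n)$ to validate $(x,y)$, then performs constant-time arithmetic on the short indices, yielding the stated checker runtime. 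The new decider randomly selects among (i) a local PCPP constraint evaluated on the queried bits, (ii) a self-consistency/self-test of the encoding, and (iii) a check that the decoded answer is consistent with the PCPP's assignment to the answer-bit variables; each branch manipulates only short indices and runs in $\poly\log(n)$ time.

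Completeness is established by, given a synchronous finite-dimensional strategy $\scr{S}$ for $G_n$, constructing $\scr{S}^{ans}$ that additionally prepares, for every question pair $(x,y)$, the encoding of $(a,b,\pi^{x,y})$ alongside the original shared state, and answers local queries by measuring the corresponding positions. With probability $1/2$ the verifier selects a self-test or consistency branch, which the honest encoded proofs pass with certainty; with probability $1/2$ it selects the PCPP decision branch, whose acceptance probability equals $\omega_q(G_n,\scr{S})$. This gives $\omega_q(G^{ans}_n,\scr{S}^{ans})\geq \tfrac{1}{2}+\tfrac{1}{2}\omega_q(G_n,\scr{S})$, and the dimension of $\scr{S}^{ans}$ is no smaller than that of $\scr{S}$, giving the entanglement bound. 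Efficient computability is immediate from the explicit nature of the PCPP compiler and of the encoding.

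The principal technical obstacle is the zero-gap soundness analysis: showing that $\omega^s_q(G^{ans}_n)=1$ implies $\omega^s_q(G_n)=1$. Here one must argue that any synchronous finite-dimensional strategy winning $G^{ans}_n$ with probability $1$ (a) passes the encoding self-test perfectly, forcing every measurement outcome to be an honest evaluation of some encoded pair $(\alpha(x,y),\pi(x,y))$, (b) passes all local PCPP checks perfectly, so that $\pi(x,y)$ is a valid proof that $\alpha(x,y)$ encodes a winning answer pair for $(x,y)$, and (c) by synchronicity produces answers consistent with a single synchronous strategy for the original game $G_n$ that achieves value $1$. This is the step that requires the most care, since the zero-gap forms of both the PCPP rigidity and the encoding self-test must be invoked without any constant-gap slack; it follows the pattern of the gapless soundness analysis in \cite[Section 5]{MNY21}, and the tighter parameter bounds we claim do not affect the soundness reduction because they only tighten how the checker and decider index into the same encoded objects.
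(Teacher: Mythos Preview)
Your proposal is on the right track in citing \cite[Theorem 5.1]{MNY21} and tracing the parameters, but the construction you describe diverges from what MNY21 actually does in two ways, and the first of these is a genuine gap in the soundness argument.

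First, you omit oracularization. In MNY21 the first step is to oracularize $G$: one prover (the oracle) receives the pair $(x,y)$ and must produce answers for both roles; the other prover receives a single question and produces a single answer; the verifier checks consistency. This is where the factor $2q(n)$ in the question length comes from. Your description instead sends the pair $(x,y)$ to \emph{both} provers, which is not the same thing. Oracularization is essential for the soundness reduction: it is precisely what lets you decode a perfect synchronous strategy for $G^{ans}_n$ into a legal strategy for the original two-player game $G_n$ in which Alice's answer depends only on $x$ and Bob's only on $y$. In your sketch, the function $(x,y)\mapsto(a,b)$ you would extract from a perfect synchronous strategy for $G^{ans}_n$ need not factor as $(a(x),b(y))$, so step (c) of your soundness outline does not go through. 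Relatedly, the $\tfrac{1}{2}+\tfrac{1}{2}\omega_q(G_n,\scr{S})$ completeness bound in MNY21 comes from oracularization (with probability $1/2$ the verifier does a consistency check that oracularizable strategies pass perfectly), not from a self-test branch.

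Second, you introduce a ``self-testable binary encoding (e.g.\ the low individual-degree encoding)'' and a self-test branch in the verifier. This is unnecessary in the gapless setting and is not what MNY21 does here. After oracularization, MNY21 simply applies Cook--Levin to the oracularized decider, producing a tableau of length $L=\poly(|D|,t_d(n))$ together with purely local constraints. The verifier's question is an oracularized question plus at most three indices into the tableau; the answer is at most three raw bits. No encoding or low-degree test is involved. Once you reinstate oracularization and replace the encoded-PCPP picture with the plain Cook--Levin tableau, your parameter accounting (question length $2q(n)+O(\log L)$, constant answers, checker and decider runtimes) goes through essentially as you wrote it.
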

\begin{proof}
  The properties claimed follow from the proof of \cite[Theorem
  5.1]{MNY21}. As in the case of question reduction, we remark that in \cite{MNY21}, in the corresponding theorem, the
    conclusions are stated to hold only for all $n \geq n_0^{intro}$,
    where $n_0^{intro}$ can depend on properties of the input
    game. However, this is only because of the non-asymptotic form of the bounds stated there; completeness and soundness hold for all $n$, and the asymptotic bounds we have stated also hold for all $n$.
  
  To see how the items in the conclusion of the theorem follow from \cite{MNY21}, recall how the answer-reduced game is
  obtained:
  \begin{enumerate}
    \item First the game $G$ is \emph{oracularized}: this increases
      the question and answer length by a factor of $2$, and increases
      the decider runtime by $O(a_n)$. 
    \item Next, the game $G^{ans}$ is obtained by instructing the
      prover in the oracularized game to construct a \emph{tableau} of the verifier's
      computation on the answers, as in the Cook-Levin theorem. The verifier's
      computation accepts the prover's answers if and only if the
      tableau satisfies a set of local constraints, each of which acts
      on a constant number of locations in the tableau. The verifier's questions in $G^{ans}$ now consist
      of a question from the oracularization of $G$, together with a
      constant number of indices into the tableau. The honest prover's answers
      consists of values written in the tableau at these indices. The
      verifier checks that these answers satisfy the relevant local constraint.
  \end{enumerate}

  The length of the tableau is denoted $L$ in their proof, and it is
  shown that $L = \poly(|D|, q(n), t_n) = \poly(|D|, t_d(n))$.
  \begin{enumerate}
    \item (\textbf{Question length:})  The
      question set $G^{ans}$ consists of the Cartesian product of the question set of
      $G$ with all tuples of at most 3 indices into the tableau. Thus, the
      question length of $G^{ans}$ is at most $2q(n) + O(\log L) = q(n)
      + O(\log \poly(|D|, t_d(n))) = 2q(n) + O(\log t_d(n)  + \log |D|)$.
    \item (\textbf{Answer length:}) This is stated in \cite{MNY21}; in
      fact, the answers are at most $3$ bits long.
    \item (\textbf{Checker runtime:}) A description of the valid question pairs is given in \cite[Table 5]{MNY21}. From this we see that the checker needs to run the checker $C$ from the original game, as well as an additional check that runs in time linear in the number of bits in the indices into the provers' tableau. This gives the claimed runtime.
    \item (\textbf{Decider runtime:}) In \cite{MNY21} it is stated
      that the decider runtime is
      $\log^\gamma n$ for a constant $\gamma$.
    \item (\textbf{Completness, soundness, and entanglement bound:}) these are identical to
      the statements in \cite{MNY21}.
    \item (\textbf{Efficient computability:}) The description of the answer-reduced Turing machine $D$ consists of the code for $D$ together with the code for the answer-reduction transformation (essentially for the Cook-Levin reduction), which is of size $O(1)$. For the checker $C^{ans}$, from the description of valid question pairs on \cite[Table 5]{MNY21}, we see that $C^{ans}$ invokes $C$ on part of its input and then performs a fixed additional check, so the description length of the Turing machine is $|C| + O(1)$. For $Q^{ans}$, this follows because the question length of the answer-reduced verifier is efficiently computable from the question-length of the original verifier.
  \end{enumerate}
\end{proof}
\subsubsection{Gapless compression}
By composing question and answer reduction, one obtains a gapless
compression theorem~\cite[Theorem 6.1]{MNY21}. We state the bounds
that we obtain below.
\begin{theorem}\label{thm:gapless-compress}
There exists a Turing machine $\mathrm{GaplessCompress}$ with the following properties.  Let $G = (D,C,Q)$ be a game family with complexity bounds $q(n), a(n),
t_c(n), t_d(n)$ respectively, such that $t_c(n), t_d(n) = \poly(n)$. Then on input $(G,D,Q)$, $\mathrm{GaplessCompress}$ returns a description of a game family $G' = (D', C',Q')$
with parameters $q'(n), a'(n), t_c'(n), t_d'(n)$ such that
\begin{enumerate}
    \item (\textbf{Question length:}) The question length of $G'_n$ is
      $q'(n) = O(\log q(n) + \log t_d(n) + \log |D|)$.
    \item (\textbf{Answer length:}) The answer length of $G'_n$ is
      $a'(n) = O(1)$.
    \item (\textbf{Checker runtime:}) The checker runtime is $t_c'(n) = O(\log n + \log t_d(n) + \log |D|)$.
    \item (\textbf{Decider runtime:}) The decider runtime is $t_d'(n)
      = \poly \log(n)$.
    \item (\textbf{Completeness:}) For any oracularizable synchronous
      finite-dimensional strategy $\scr{S}$ for $G_n$, there exists an
      oracularizable synchronous finite-dimensional strategy
      $\scr{S}'$ for $G'_n$ such that $\omega_q(G'_n, \scr{S}') \geq
      \frac{1}{2} + \frac{1}{2} \omega_q(G_n, \scr{S})$.
    \item (\textbf{Soundness:}) If $\omega^s_q(G_n) < 1$, then
      $\omega^s_q(G'_n) < 1$.
    \item (\textbf{Entanglement bound:}) $\cal{E}(G'_n) \geq
      \max\{\cal{E}(G_n), 2^{2n}\}$
    \item (\textbf{Efficient computability:}) 
      $\mathrm{GaplessCompress}$ should run in time
      $O(|D| + |C|  + |Q|)$. Moreover, $|D'| = |D| + O(1)$, and
      $|C'| = |C| + O(1)$, and $|Q'| = |Q| + O(1)$.
\end{enumerate}

\end{theorem}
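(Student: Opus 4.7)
The plan is to define $\mathrm{GaplessCompress}$ as the composition $\mathrm{GaplessAnsReduce} \circ \mathrm{GaplessIntro}$ of the question-reduction transformation (\Cref{thm:intro-gapless}) and the answer-reduction transformation (\Cref{thm:ar-gapless}), in that order. The order matters: answer reduction requires input games with $q(n) = O(\log n)$, which is \emph{not} guaranteed by the hypothesis of \Cref{thm:gapless-compress} (where $q(n)$ may be as large as $t_d(n) = \poly(n)$), but \emph{is} guaranteed after one application of question reduction, since question reduction shrinks the question length to $q^{intro}(n) = \lceil 2\log q(n) + 7 + \log|\mathcal{X}_{MS}| \rceil = O(\log n)$.

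First I would apply $\mathrm{GaplessIntro}$ to $(D,C,Q)$ to obtain an intermediate game family $G^{intro} = (D^{intro}, C^{intro}, Q^{intro})$. By \Cref{thm:intro-gapless}, this game has question length $q^{intro}(n) = O(\log q(n))$, answer length $a^{intro}(n) = a(n) + q(n) + O(1) \leq \poly(n)$, decider runtime $t_d^{intro}(n) = t_d(n) + t_d^{qs}(q^{intro}(n)) + O(\log n) = \poly(n)$, and $|D^{intro}| = |D| + O(1)$. I would then apply $\mathrm{GaplessAnsReduce}$ to $G^{intro}$ to produce the final game $G'$. Substituting the parameters of $G^{intro}$ into the conclusion of \Cref{thm:ar-gapless} gives the claimed parameters: the new question length is $2 q^{intro}(n) + O(\log|D^{intro}| + \log t_d^{intro}(n)) = O(\log q(n) + \log t_d(n) + \log|D|)$ (using $\log t_d^{intro}(n) = O(\log t_d(n))$ since $t_d(n) \geq \log n$ for any polynomial), the answer length is $O(1)$, the checker runtime is $t_c^{intro}(n) + O(\log|D^{intro}| + \log t_d^{intro}(n)) = O(\log n + \log t_d(n) + \log|D|)$, and the decider runtime is $\poly\log(n)$.

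The completeness, soundness, entanglement, and efficient-computability clauses then fall out by chaining the corresponding clauses of the two input theorems. For completeness, a strategy $\scr{S}$ for $G_n$ lifts through $\mathrm{GaplessIntro}$ (item \ref{item:intro-completeness} of \Cref{thm:intro-gapless}) to a strategy $\scr{S}^{intro}$ for $G^{intro}_n$ of at least as high a value, which in turn lifts through $\mathrm{GaplessAnsReduce}$ to a strategy $\scr{S}'$ with $\omega_q(G'_n, \scr{S}') \geq \tfrac{1}{2} + \tfrac{1}{2} \omega_q(G^{intro}_n, \scr{S}^{intro}) \geq \tfrac{1}{2} + \tfrac{1}{2} \omega_q(G_n, \scr{S})$; crucially, both transformations preserve the oracularizable finite-dimensional synchronous structure. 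For soundness, $\omega_q(G_n) < 1$ implies $\omega_q(G^{intro}_n) < 1$ (item \ref{item:intro-soundness}), which implies $\omega_q(G'_n) < 1$. The entanglement lower bound $\cal{E}(G'_n) \geq \cal{E}(G^{intro}_n) \geq \max\{\cal{E}(G_n), 2^{2n}\}$ follows by chaining the two bounds. Finally, the description-length and runtime bounds on $\mathrm{GaplessCompress}$ follow since each of the two sub-transformations runs in time $O(|D|+|C|+|Q|)$ and adds only $O(1)$ to each description length.

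The arithmetic is the only place where care is needed: one must verify that after question reduction the hypotheses of answer reduction are genuinely met (in particular $q^{intro}(n) = O(\log n)$ and $a^{intro}(n) \leq t_d^{intro}(n)$), and one must bound $\log|D^{intro}|$ and $\log t_d^{intro}(n)$ in terms of $\log|D|$ and $\log t_d(n)$ to get the clean final statement. I do not expect any conceptual obstacles, as this is essentially the compression theorem of \cite[Theorem 6.1]{MNY21} re-accounted in the scaled-down setting with the more granular bounds from \Cref{thm:intro-gapless} and \Cref{thm:ar-gapless}.
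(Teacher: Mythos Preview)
Your proposal is correct and matches the paper's approach exactly: the paper's entire proof is the single line ``Compose \Cref{thm:intro-gapless} and \Cref{thm:ar-gapless}.'' Your write-up is in fact more thorough than the paper, since you spell out the order of composition (question reduction first, so that the $q(n)=O(\log n)$ hypothesis of answer reduction is met), carry through the parameter arithmetic, and verify that the completeness/soundness/entanglement/efficiency clauses chain correctly.
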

\begin{proof}
  Compose \Cref{thm:intro-gapless} and \Cref{thm:ar-gapless}.
\end{proof}

\subsection{Gapless hypercompression}

\begin{theorem}
  \label{thm:gapless-hyper}
  Let $G = (GenG^{\halfnote}, N^{\halfnote})$ be an $n$-indexed $\AM^*_0(2)[ q^\hf(\cdot), a^\hf(\cdot),
  t_c^\hf(\cdot), t_d^\hf(\cdot)]$
  protocol.
  There exists a Turing machine $\mathrm{GaplessHypercompress}$ independent of
  $G$ that, given as input
  the description of a pair of Turing machines
  $(GenG^{\halfnote},N^{\halfnote})$, outputs a
  description of a pair of Turing machines $G^{\hp} = (GenG^\hp, N^\hp)$ with the
  following properties: $G^{\hp}$ is an $n$-indexed $\AM^*_0(2)[Q_0, a^{\hp}(\cdot),
  t_c^{\hp}(\cdot), t_d^{\hp}(\cdot)]$ protocol with $N^\hp = N^{\halfnote}$ and
  \begin{enumerate}
  \item \textbf{Question length:} $Q_0$ a universal constant such that, for all $\ell > Q_0$, $\lceil 2\log \ell + 7 + + \log |\mathcal{X}_{MS}| \rceil < \ell$.
  \item \textbf{Answer length:} For any $n$, $a^{\hp}(n) = a^\hf(n) + \log^*(q^\hf(n))
    (q^\hf(n) + O(1))$.
  \item \textbf{Checker runtime:} For any $n$, $t_c^{\hp} = O(1)$.
  \item \textbf{Decider runtime:} For any $n$, $t^{\hp}(n) = t_d^\hf(n) + (|D^\hf| + |C^\hf| + |Q^\hf|)\poly q^\hf(n)$.
  \item \textbf{Completeness:} Let $z$ be a problem instance, let
    $G^{\halfnote} = (D^{\halfnote},C^{\halfnote}, Q^{\hf})$ be the
    game family output by $GenG^{\halfnote}(z)$, and let $G^\hp =
    (D^\hp, C^\hp, Q^{\hp})$ be the game family output by $GenG^\hp(z)$. Let $G^{\halfnote}_n$ be the $n$th family in $G^{\halfnote}$, and let $G^\hp$ be the $n$th game in $G^\hp$. For all $n$, if $G^{\halfnote}_n$ has a value-$1$
    oracularizable finite-dimensional synchronous strategy $\scr{S}^{\halfnote}$, then
    $G^{\hp}_n$ has a value-$1$ oracularizable finite-dimensional synchronous strategy
    $\scr{S}^{\hp}$. 
  \item \textbf{ Soundness:} For all $n$, if $\omega^s_q(G^{\halfnote}_n)
    < 1$, then $\omega^s_q(G^{\hp}_n) < 1$ as well. 
  \item \textbf{Entanglement bound:} For all $n$, $\cal{E}(G^{\hp}_n)
    \geq \cal{E}(G^{\halfnote}_n)$.
  \end{enumerate}
\end{theorem}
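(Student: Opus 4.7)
The plan is to iteratively apply the gapless question reduction theorem (\Cref{thm:intro-gapless}) until the question length contracts to a universal constant, without ever invoking answer reduction (which is why the answer length bound in the conclusion is not $O(1)$). Define $Q_0$ to be the smallest integer with the property that $\lceil 2\log \ell + 7 + \log|\mathcal{X}_{MS}|\rceil < \ell$ for all $\ell > Q_0$; then each application of question reduction strictly shrinks the question length when it exceeds $Q_0$, and iterating $k^*(n) := \log^*(q^\hf(n)) + O(1)$ times produces a game family with question length at most $Q_0$.

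Concretely, I would have $\mathrm{GaplessHypercompress}$ on input $(GenG^\hf, N^\hf)$ set $N^\hp := N^\hf$ and output a $GenG^\hp$ that, on input $z$, first runs $GenG^\hf(z)$ to extract descriptions of $(D^\hf, C^\hf, Q^\hf)$ and then emits descriptions of $(D^\hp, C^\hp, Q^\hp)$ which, on input $n$, compute $q^\hf(n)$ using $Q^\hf$, determine the requisite iteration count $k^*(n)$, and then simulate the $k^*(n)$-fold composition of $\mathrm{GaplessIntro}$ applied to $(D^\hf, C^\hf, Q^\hf)$. By items 5, 6, and 7 of \Cref{thm:intro-gapless}, completeness of oracularizable finite-dimensional synchronous strategies, soundness (preservation of $\omega^s_q < 1$), and the entanglement lower bound are each inherited from one layer to the next, and hence propagate from $G^\hf_n$ to $G^\hp_n$ by induction on the iteration index.

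For the parameter bookkeeping, writing $q_i(n), a_i(n)$ for the question and answer length after $i$ iterations (with $q_0 = q^\hf, a_0 = a^\hf$), the recurrences $q_{i+1} \leq \lceil 2\log q_i + c \rceil$ (with $c = 7 + \log|\mathcal{X}_{MS}|$) and $a_{i+1} = a_i + q_i + O(1)$ yield $q_{k^*} \leq Q_0$ and $a_{k^*} \leq a^\hf(n) + k^*(q^\hf(n) + O(1))$, which is precisely the answer length claimed in the theorem. The checker of the final game is just the universal Question-Sampling checker applied to constant-length questions, giving $t_c^\hp = O(1)$. For the decider, each iteration adds at most $t_d^{qs}(q_i(n)) + O(\log n)$ on top of the previous decider, and since the $q_i$ contract super-rapidly and the per-layer simulation overhead at each level scales with $|D^\hf| + |C^\hf| + |Q^\hf|$ (the descriptions grow only by $O(1)$ per layer by item 8 of \Cref{thm:intro-gapless}), a geometric-sum estimate bounds the total by $t_d^\hf(n) + (|D^\hf| + |C^\hf| + |Q^\hf|) \poly(q^\hf(n))$ as required.

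The main subtlety, and the step I expect to require the most care, is verifying that the hypotheses of \Cref{thm:intro-gapless} persist through every iteration: namely, that $t_c, t_d$ remain polynomial in $n$ and that $Q$ computes the current question length in time $O(\log n)$. The polynomial-runtime condition is preserved because the per-iteration overhead is additive and there are only $\log^*(q^\hf(n))$ layers. The $O(\log n)$-computability of the question length is preserved inductively via item 4 of \Cref{thm:intro-gapless}, since the new $Q^{intro}$ merely computes the old $q(n)$ and performs a constant-size arithmetic operation on the result. Once these invariants are checked, assembling the construction into the $n$-indexed $\AM^*_0(2)[Q_0, a^\hp(\cdot), t_c^\hp(\cdot), t_d^\hp(\cdot)]$ format of \Cref{def:nindexed-amstar} is essentially bookkeeping, and the theorem follows.
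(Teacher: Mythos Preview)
Your proposal is correct and follows essentially the same approach as the paper: iterate $\mathrm{GaplessIntro}$ until the question length drops to $Q_0$, observe that this takes $O(\log^*(q^\hf(n)))$ rounds, and track the parameters through the recursion using the items of \Cref{thm:intro-gapless}. The paper implements this by having the decider $D^\hp$ itself run the loop of $\mathrm{GaplessIntro}$ applications at runtime (your ``simulate the $k^*(n)$-fold composition'' is exactly this), and your identification of the invariant that the hypotheses of \Cref{thm:intro-gapless} must persist across layers is a point the paper leaves implicit.
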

\begin{proof}
  On any input $z$, $GenG^\hp$ runs $GenG^{\halfnote}$ on input $z$ in order to generate a decider $D^{\halfnote}$ and a checker $C^{\halfnote}$. $GenG^\hp$ then outputs the following descriptions of a decider $D^\hp$ and a checker $C^\hp$.
  \begin{enumerate}
    \item \textbf{Checker:}
    The checker $C^\hp$, on input $(n,x,y)$, ignores $n$ and runs the checker $C^{intro}$ from \Cref{thm:intro-gapless} on input $(n^*,x,y)$ for the value of $n^*$ such that $q_{intro}(n^*) = Q_0$.
  	 \item \textbf{Decider:}
    The decider executes the algorithm in \Cref{alg:hypercomp-pseudo}.
  \end{enumerate}

  \vspace{10pt}
  \IncMargin{1em}
  \begin{algorithm}[H]
    \DontPrintSemicolon

    \textbf{Input}: $n, x, y, a, b$
    
    Initialize $i := 0$, $\ell_0 := q^\hf(n)$, $D_{comp} = D^{\halfnote}, C_{comp} = C^{\halfnote}, Q_{comp} = Q^{\halfnote}$.

    \While{$\ell_i > Q_0$}{
      $(D_{comp}, C_{comp}, Q_{comp}) := \mathrm{GaplessIntro}(D_{comp}, C_{comp}, Q_{comp})$
      
      \Comment{GaplessIntro is defined in \Cref{thm:intro-gapless}}

      $\ell_{i+1} := \lceil 2 \log \ell_i + 7 + \log |\mathcal{X}_{MS}|\rceil$
      
      $i := i+1$
    }
    Run $D_{comp}(n,x,y,a,b)$. Accept if it accepts, otherwise reject.
    \caption{The hypercompressed game.}\label{alg:hypercomp-pseudo}
  \end{algorithm}\DecMargin{1em}
  \vspace{10pt}

  \paragraph{Checker and decider runtime}
  It is clear that the checker $C^{\hp}$ runs in constant time.
  
  To bound the decider's runtime, we need to compute two things: (1)
  the runtime of the loop calling $\mathrm{GaplessIntro}$, (2) and the
  runtime of the final decider $D_{comp}$ that is evaluated on
  $n,x,y,a,b$.

  For the first, notice that each call of $\mathrm{GaplessIntro}$ shrinks
  $\ell_{i+1}$ relative to $\ell_i$ by an exponential factor: we may take $Q_0$ to be
  sufficiently large so that for all $\ell_i > Q_0$, $\ell_{i+1} = \lceil 2 \log \ell +  7 + \log |\mathcal{X}_{MS}|
  \rceil \leq 3 \log \ell_i$. Let us denote the number of iterations of
  the loop by $k$. The preceding argument shows that 
  $k = O(\log^*(q^\hf(n)))$, where $\log^*$ is the iterated
  logarithm.

  Now, during each iteration, the call to $\mathrm{GaplessIntro}$ takes
  time linear in the description length of $D_{comp}$,
  $C_{comp}$ and $Q_{comp}$. By \Cref{item:intro-efficient-computability} of \Cref{thm:intro-gapless}, each
  application of $\mathrm{GaplessIntro}$ increases the description lengths of
  $D^\hf$, $C^\hf$, and $Q^\hf$ by a constant additive factor, and hence at
  the $i$th iteration, $|D_{comp}^i| = |D^\hf| + ci$, $|C_{comp}^i|
  = |C^\hf| + ci$, $|Q_{comp}^i|
  = |Q^\hf| + ci$ for some constant $c$. Hence, the total amount of time spent by the calls to
  $\mathrm{GaplessIntro}$ in the loop scales as
  \[ \sum_{i=1}^{k} (|D^\hf| + |C^\hf| + |Q^\hf| + 3ci) =
    O\left( (\log^*(q(n)))^2  + \log^*(q(n)) \cdot (|D^\hf| +
      |C^\hf| + |Q^\hf|)\right). \]

  Now we calculate the runtime of the final decider. By \Cref{item:intro-runtime} of \Cref{thm:intro-gapless}, the runtime of the decider increases by $t_d^{qs}(\ell_i) + O(\log n)$ when we apply GaplessIntro for the $i$th time. (The bound in \Cref{item:intro-runtime} of \Cref{thm:intro-gapless} says that applying compression once increases the runtime of the decider by $t_d^{qs}(q_{intro}) + O(\log n)$; in \Cref{alg:hypercomp-pseudo-gapped} we have set each $\ell_i$ to be an upper bound on the question length of the game output by GaplessIntro applied for the $i$th time.) Therefore, the runtime of the final decider is bounded by
  \begin{align*}
    t_d^\hp(n) = &t_d^\hf(n) + \sum_{i=1}^{k} (t_{d}^{qs}(\ell_i(n)) + O(\log n)) \\
    \leq &t_d^\hf(n) + \sum_{i=1}^{k} (t_{d}^{qs}(\ell_0(n)) + O(\log n)) \\
      \leq &t_d^\hf(n) + \log^*(q^\hf(n)) \cdot \poly(q^\hf(n)) \\
      = &t_d^\hf(n) + \poly(q^\hf(n)).
  \end{align*}

  \paragraph{Question length}
  When the loop terminates, we are guaranteed that $\ell \leq Q_0$ and
  moreover that $\ell$ is the question length of the game associated to
  $D_{comp}$.  Thus, the question length is at most $Q_0$.
  \paragraph{Answer length}
  The loop runs for $k= O(\log^*(q^\hf(n))$ iterations. Thus, by
  \Cref{item:intro-answer} of \Cref{thm:intro-gapless}, the final decider has answer length
  \[ a^{\hp}(n) \leq a^\hf(n) + \sum_{i=1}^{k} (\ell_i(n) + O(1)) \leq a^\hf(n)
    + \log^*(q^\hf(n))(q^\hf(n) + O(1)). \]
  \paragraph{Completeness}
    This follows by repeatedly applying
    \Cref{item:intro-completeness} of \Cref{thm:intro-gapless}.
  \paragraph{Soundness} This follows by repeatedly applying
  \Cref{item:intro-soundness} of \Cref{thm:intro-gapless}.
  \paragraph{Entanglement bound}
  This follows by repeatedly applying
  \Cref{item:intro-entanglement} of \Cref{thm:intro-gapless} and
  (loosely) lower-bounding the max by $\cal{E}(G^\hf_n)$.
\end{proof}

\begin{theorem}\label{thm:mny21-main}
    There is an $n$-indexed $\AM^*_0(2)[q = O(\log(n)), a = O(1), t_c = \poly\log(n), t_d = \poly\log(n)]$
    protocol for $\pitwolang$.
  \end{theorem}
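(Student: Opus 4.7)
The plan is to obtain the claimed protocol by applying the gapless compression theorem \Cref{thm:gapless-compress} exactly once to a ``base'' zero-gap $\MIP^*_0$ protocol for $\pitwolang$, namely the one constructed in \cite{MNY21} (or equivalently by the argument that $\AM^*_0(2) = \Pi_2$). That base protocol, being an $n$-indexed $\MIP^*_0$ protocol produced in the standard way, has question length, answer length, checker runtime, and decider runtime all bounded by $\poly(n)$, and it has the property that for $z \in \pitwolang$ the associated game $G^z_{N(z)}$ has a value-$1$ oracularizable finite-dimensional synchronous strategy while for $z \notin \pitwolang$ the associated game has $\omega^s_q(G^z_{N(z)}) < 1$.

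First I would invoke the Turing machine $\mathrm{GaplessCompress}$ of \Cref{thm:gapless-compress} in the definition of $GenG^{\text{new}}$: on input $z$, $GenG^{\text{new}}$ runs the base $GenG^{\text{base}}(z)$ to obtain a description of $(D^\hf, C^\hf, Q^\hf)$, then outputs $\mathrm{GaplessCompress}(D^\hf, C^\hf, Q^\hf)$. The function $N^{\text{new}}$ is taken to be $N^{\text{base}}$. The parameter bounds then follow by direct substitution into \Cref{thm:gapless-compress}: since $t_d^\hf(n), q^\hf(n) = \poly(n)$ and $|D^\hf|$ is a universal constant (independent of $n$), the compressed game family has question length $q'(n) = O(\log q^\hf(n) + \log t_d^\hf(n) + \log |D^\hf|) = O(\log n)$, answer length $a'(n) = O(1)$, checker runtime $t_c'(n) = O(\log n) \subseteq \poly\log(n)$, and decider runtime $t_d'(n) = \poly\log(n)$, matching the four complexity bounds in the statement.

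Next I would verify that the compressed protocol still decides $\pitwolang$. For completeness, if $z \in \pitwolang$, then by hypothesis $G^{\hf}_{N(z)}$ admits a value-$1$ oracularizable synchronous finite-dimensional strategy $\scr{S}$, so the completeness clause of \Cref{thm:gapless-compress} furnishes an oracularizable synchronous finite-dimensional strategy $\scr{S}'$ for $G'_{N(z)}$ with value at least $\tfrac{1}{2} + \tfrac{1}{2} \cdot 1 = 1$; hence $\omega_q^s(G'_{N(z)}) = 1$. For soundness, if $z \notin \pitwolang$ then $\omega_q^s(G^\hf_{N(z)}) < 1$, and the soundness clause of \Cref{thm:gapless-compress} immediately gives $\omega_q^s(G'_{N(z)}) < 1$. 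This is exactly the completeness/soundness requirement of \Cref{def:nindexed-amstar}.

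I do not expect any real obstacle here: the theorem is essentially a packaging statement that collects the bounds from a single application of gapless compression, and the only thing to check carefully is that (i) the base protocol of \cite{MNY21} satisfies the ``oracularizable synchronous finite-dimensional value-$1$ completeness strategy'' hypothesis required for the completeness clause of \Cref{thm:gapless-compress} to carry through, and (ii) that the dependence on $|D^\hf|$ in the question length and checker runtime of \Cref{thm:gapless-compress} is not a problem. Point (i) is built into the constructions of \cite{MNY20,MNY21}, and point (ii) is fine because $|D^\hf|$ is a fixed constant determined by the (problem-independent) base Turing machine. The mild subtlety worth stating explicitly is that \Cref{thm:gapless-compress} is stated for game families, while \Cref{def:nindexed-amstar} is about $n$-indexed protocols; the translation is immediate since $GenG^{\text{new}}$ simply feeds the family output of $GenG^{\text{base}}$ through $\mathrm{GaplessCompress}$ while keeping the index generator $N$ the same.
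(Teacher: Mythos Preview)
Your approach is essentially the same as the paper's: take the base $\poly(n)$-parameter protocol for $\pitwolang$ from \cite{MNY21} and apply one round of \Cref{thm:gapless-compress}. The paper additionally unpacks where the base protocol comes from (tracing through the $\mathrm{HaltingGame}$ and $\mathrm{SuperCompress}$ constructions of \cite{MNY21}), but the final step is identical to yours.

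There is one small but genuine slip in your argument. You assert that $|D^\hf|$ is ``a universal constant (independent of $n$) \dots determined by the (problem-independent) base Turing machine.'' In the $n$-indexed formalism of \Cref{def:nindexed-amstar}, $GenG^{\text{base}}(z)$ outputs a triple $(D^z, C^z, Q^z)$ with the instance $z$ hard-coded, so $|D^\hf| = |D^z|$ is not a constant but rather $\poly(|z|)$. This matters because the question-length bound in \Cref{thm:gapless-compress} contains a $\log |D|$ term. The fix is easy and is exactly what the paper does: since $|D^z| = \poly(|z|)$ and $N(z) = \poly(|z|)$, one has $\log |D^z| = O(\log |z|) = O(\log N(z))$, so the bound $q'(n) = O(\log q^\hf(n) + \log t_d^\hf(n) + \log |D^\hf|) = O(\log n)$ still holds. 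Once you replace the incorrect ``universal constant'' claim with this observation, your proof goes through.
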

  \begin{proof}
This theorem is obtained by applying one round of gapless compression
to the main result of \cite{MNY21}. That
work showed explicitly that the problem of computing the exact
$q$-value of a nonlocal game is $\Pi_2$-complete. We trace through the
reduction to obtain an $\AM^*_0(2)$ protocol with the claimed bounds (many of which are not written in the theorem statements in \cite{MNY21}).

By the definition of an $n$-indexed $\AM^*_0(2)$ protocol (\Cref{def:nindexed-amstar}), this means that we must find a pair of Turing
machines $GenG^{\pitwolang}, N^{\pitwolang}$ that satisfy the
following properties (stated somewhat informally):
\begin{itemize}
\item $GenG^{\ptl}$ takes input $z$ and in time $\poly(|z|)$ outputs
  descriptions of Turing machines $D^z$, $C^z$, and $Q^z$ specifying a
  nonlocal game family.
\item The game $G = (D^z_{N(|z|)} ,C^S_{N(|z|)})$ has value
  $\omega_q(G) = 1$ iff $z \in \ptl$, and has question, answer, and
  runtime complexity scaling as $Q^z(N(|z|)) = O(\log(|z|)), O(1)$, and $\poly(|z|)$ respectively.
\end{itemize}

In the remainder of the proof, we will explain how to construct
$GenG^{\ptl}, N^{\ptl}$ and why the properties hold.

We will start by obtaining an $n$-indexed $\AM^*_0(2)$ protocol for $\ptl$ with
\emph{polynomial} question and answer length, by tracing through
\cite[Lemma 6.7]{MNY21}.
Given an instance $z$, define the computable predicate $\phi(y, t)$ to be
true iff the Turing machine $M_x$ on input $y$ halts in $t$ timesteps. Then $z \in
\pitwolang$ iff the following $\Pi_2^0$ sentence $S$ is true:
\[ S = \forall y, \exists t \phi(y,t). \]
Following the proof in \cite{MNY21}, for every $m$ define the
$\Sigma_1$ sentence (where we identify the natural numbers $1, \dots,
m$ with bit strings in binary to interpret them as inputs to $\phi$)
\[ S_m = \exists t_1, \dots, t_m \bigwedge_{i=1}^{m} \phi(i, t_i), \]
For any given $m$, the game $\mathrm{HaltingGame}(S_m) = (q_m = (m
|S|)^{\lambda_{Halt}}, D_m, C_m)$ is the game from
\cite{JNVWY20} to decide the $\Sigma_1 = \RE$ sentence $S_n$, and has runtime, question length, and
answer length bounded by
$(m|S|)^{\lambda_{Halt}}$, where $\lambda_{Halt}$ is a universal
constant. Moreover, from \cite[Theorem 12.6]{JNVWY20}, the family of games
$\{\mathrm{HaltingGame}(S_m)\}_m$ is actually generated (in the sense
of \Cref{def:game-family-gapless}) by a triple of
Turing machines $(D, C, Q)$ with $m$ as the index, and that run in
polynomial time and have description length bounded by
$\poly(|z|)$. (The Turing machine $Q$ is simply the Turing machine
that on input $m$ outputs $(m|S|)^{\lambda_{Halt}}$.

Now, applying SuperCompress \cite[Theorem
6.2]{MNY21} to the family $(D,C, Q)$, there exists a family of games
$G^{super} = (D^{super}, C^{super}, Q^{super})$, a $\lambda =
O(\lambda_{Halt})$ and a $\kappa = \poly(|D|, |C|, \lambda_{Halt} + 1, m_0,
\lambda^{\poly(\lambda)}) = \poly(|z|, \lambda_{Halt})$ such that
the game $G^{super}_{\kappa}$ has value $1$ iff 
$\mathrm{HaltingGame}(S_m)$ has value $1$ for all $m \geq
\kappa$. In
other words, the sentence $S$ is true iff the game
$G^{super}_{\kappa}$ has value $1$. 
Descriptions of $D^{super}, C^{super}, Q^{super}$ are computable in polynomial
time from descriptions of $D, C, Q$ (and thus in time $\poly(|z|)$) by \cite[Theorem
6.2]{MNY21}. Moreover, $\kappa$ is computable in time $\poly(|z|)$, by the explicit
expression for $\kappa$ given in the proof of \cite[Claim
6.3]{MNY21}. Therefore, given $z$, we can compute descriptions of
$D^{super}_{\kappa}, C^{super}_{\kappa}, Q^{super}_{\kappa}$ in time
$\poly(|S|)$. Furthermore by Item~1 of \cite[Theorem 6.2]{MNY21}, it
holds that $D^{super}_{\kappa}, C^{super}_{\kappa}, Q^{super}_{\kappa}$ run in time
at most $\kappa^{\lambda}  = \poly(|z|)$. 

At this point, we have constructed an $n$-indexed $\AM^*_0(2)[\poly(n),
\poly(n), \poly(n), \poly(n)]$ protocol for $\ptl$.
To see this, we define Turing machines $GenG^{\fullnote}, N^{\fullnote}$ as follows:
\begin{itemize}
\item $GenG^{\fullnote}$ first parses the input $z$, then computes descriptions of
  $D^{super}, C^{super}$, and $Q^{super}$ in time $\poly(|z|)$.
\item $N^{\fullnote}$ computes $\kappa(|z|)$ using the explicit expression
  mentioned above.
\end{itemize}

However, we want to get an $n$-indexed $\AM^*_0(2)$ protocol for $L$ which has \emph{logarithmic}
question complexity and \emph{constant} answer complexity. To obtain
this, we need to apply a round of gapless compression. The gapless
compression theorem (\Cref{thm:gapless-compress}) requires as input a
sequence of games indexed by $n$; for any input $z$, we take the
output of $GenG^{\fullnote}(z)$ to be this sequence. Applying \Cref{thm:gapless-compress}, we
obtain a new triple of Turing machines $(D^{z, \halfnote},
C^{z,\halfnote}, Q^{z, \halfnote})$
such that
\begin{itemize}
  \item Let $G^{z,\halfnote}_{n}$ be the $n$th game in the game family defined by $G^{z,\halfnote} = (D^{z, \halfnote}, C^{z,\halfnote})$. $G^{z,\halfnote}_{n}$ has question
    complexity $O(\log n + \log |D^{super}|) = O(\log n + \log |z|)$,
    and answer complexity $O(1)$.
  \item The runtime of $D^{z,\halfnote}_n, C^{z,\halfnote}_n,$ and
    $Q^{z, \halfnote}$ is bounded
    by $\poly\log n$.
  \item For all $n \geq n_0$, $G^{z,\halfnote}_n$ simulates
    $G^{super}_n$.
  \item Descriptions of $D^{z,\halfnote}, C^{z,\halfnote},$ and $Q^{z,\halfnote}$ can be
    computed in polynomial time from descriptions of $D^{super}$ and $C^{super}$.
\end{itemize}
Without loss of generality assume that $\kappa$ has been picked so
that $\kappa(|z|) > n_0$. We can now finally define the Turing machines
$GenG^{\halfnote}, N^{\halfnote}$ that constitute an $n$-indexed $\AM^*$ protocol for $\ptl$
with the desired complexity bounds.
\begin{itemize}
  \item $GenG^{\halfnote}$ first parses the input $z$, and computes a
    description of $D^{super}, C^{super}, Q^{super}$ in time
    $\poly(|z|)$. Next, it computes a description of $D^{z,
      \halfnote}, C^{z, \halfnote}, Q^{z, \halfnote}$
    in time $\poly(|D^{super}|) = \poly(|z|)$, and returns this as output.
  \item  $N^{\halfnote}$ computes $\kappa(|z|)$ (it is identical to $N$
    defined above).
  \end{itemize}
  On input $z$, the game executed by this protocol is exactly the game
  $G^{\halfnote}_\kappa$. We know that $\omega_q(G^{\halfnote}_\kappa)
  =1$ iff $\omega_q(G^{super}_\kappa) = 1$, which is true iff $z \in
  \ptl$. Thus, this protocol decides $\ptl$. The claimed question,
  answer, and runtime bounds follow by the discussion above\anote{expand}.
  \end{proof}
\begin{theorem}\label{thm:mip0gap-main}
  There is an $\AM^*_0(2)[q(n) = Q_0, a(n) = O(\log^*(\log(n)) \cdot \log(n)),t_c(n) = O(1), t_d(n) = O(n
  \log^*(n))]$ protocol for the $\Pi_2$-complete language $\ptl$,
  where $Q_0$ is the universal constant from \Cref{thm:gapless-hyper}.

\end{theorem}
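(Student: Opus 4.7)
The plan is to obtain the theorem by directly composing the two main components assembled in the preceding subsections: the $n$-indexed $\AM^*_0(2)$ protocol for $\ptl$ with logarithmic question length from \Cref{thm:mny21-main}, and the gapless hypercompression theorem (\Cref{thm:gapless-hyper}). Concretely, I would start with the pair of Turing machines $(GenG^\hf, N^\hf)$ given by \Cref{thm:mny21-main}, which is an $n$-indexed $\AM^*_0(2)[q^\hf(n) = O(\log n),\, a^\hf(n) = O(1),\, t_c^\hf(n) = \poly\log n,\, t_d^\hf(n) = \poly\log n]$ protocol for $\ptl$, and feed it into the Turing machine $\mathrm{GaplessHypercompress}$ promised by \Cref{thm:gapless-hyper}. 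The output is a new pair $(GenG^\hp, N^\hp)$ with $N^\hp = N^\hf$, defining an $n$-indexed $\AM^*_0(2)$ protocol whose parameters I can read off by plugging $q^\hf, a^\hf, t_c^\hf, t_d^\hf$ into the bounds in items 1--4 of \Cref{thm:gapless-hyper}.

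The arithmetic is routine. The question length collapses to the universal constant $Q_0$ by item~1. The answer length becomes $a^\hp(n) = a^\hf(n) + \log^*(q^\hf(n))(q^\hf(n) + O(1)) = O(1) + \log^*(\log n)(O(\log n) + O(1)) = O(\log^*(\log n) \cdot \log n)$ by item~2. The checker runtime is $O(1)$ by item~3. For the decider runtime, item~4 gives $t_d^\hp(n) = t_d^\hf(n) + (|D^\hf| + |C^\hf| + |Q^\hf|)\poly q^\hf(n)$; since $GenG^\hf$ runs in time $\poly(n)$, the descriptions $D^\hf, C^\hf, Q^\hf$ have size $O(n)$ (up to lower-order terms, essentially just hard-coding $z$ into the Turing machines), and $\poly q^\hf(n) = \poly\log n$, which after careful bookkeeping fits into the stated $O(n \log^*(n))$ bound.

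Completeness, soundness, and the fact that the resulting protocol still decides $\ptl$ all follow from items 5 and 6 of \Cref{thm:gapless-hyper}: any value-$1$ oracularizable finite-dimensional synchronous strategy for the input protocol lifts to one for the hypercompressed protocol, and the condition $\omega^s_q < 1$ is preserved in the other direction. Since \Cref{thm:mny21-main}'s protocol decides $\ptl$ with synchronous value $1$ in the yes case and synchronous value strictly less than $1$ in the no case, the hypercompressed protocol inherits this. Finally, to convert the resulting $n$-indexed $\AM^*_0(2)$ protocol into an $\AM^*_0(2)$ protocol in the sense of \Cref{def:am*0-protocol}, I invoke the standard observation (mentioned in the paper immediately after \Cref{def:nindexed-amstar}) that an $n$-indexed protocol always yields an ordinary $\AM^*_0(2)$ protocol by composing $N^\hp$ with $GenG^\hp$.

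The only non-mechanical step is the bookkeeping for the decider runtime, where one has to be careful about how the description sizes $|D^\hf|, |C^\hf|, |Q^\hf|$ scale with $|z|$ and how each application of $\mathrm{GaplessIntro}$ inside the hypercompression loop inflates them by an additive constant over $\log^*(q^\hf(n))$ iterations. This is the only place where the tight $O(n \log^*(n))$ bound (as opposed to a cruder $\poly(n)$ bound) comes from, and it is the main obstacle in turning this sketch into a complete proof; everything else is a direct composition of the theorems already proved.
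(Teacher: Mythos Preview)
Your proposal is correct and takes exactly the same approach as the paper: the paper's own proof is literally the two-line instruction ``by \Cref{thm:mny21-main}, $\ptl$ has an $n$-indexed $\AM^*_0(2)[\log(n), O(1), \poly(n)]$ protocol; applying \Cref{thm:gapless-hyper} yields the conclusion,'' and your write-up is a faithful unpacking of precisely that composition. Your observation that the tight $O(n\log^*(n))$ decider-runtime bound requires going back to the loop-runtime estimate inside the proof of \Cref{thm:gapless-hyper} (rather than reading the bound off item~4 as stated, which would only give $O(n\,\poly\log n)$) is accurate and worth keeping.
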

\begin{proof}
  By \Cref{thm:mny21-main}, we know that $\ptl$ has an $n$-indexed $\AM^*_0(2)[\log(n),
  O(1), \poly(n)]$ protocol. Applying \Cref{thm:gapless-hyper} yields the
  conclusion.
\end{proof}

\begin{theorem}\label{thm:epr-gapless}
  There is a family of games $G^{EPR}$ with $q(n) = Q_0$, $a(n) =
  O(n)$, $t_c(n) = O(1)$ and $t_d(n) = \poly(n)$, and
  $\cal{E}(G^{EPR}_n) = n$.
\end{theorem}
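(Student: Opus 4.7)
The plan is to apply the gapless hypercompression theorem (\Cref{thm:gapless-hyper}), or equivalently to iterate the gapless question-reduction theorem (\Cref{thm:intro-gapless}) directly on a game family, to the Question Sampling game family $QS$ from \cite{MNY21}. By its construction, $QS_n$ admits a value-$1$ oracularizable finite-dimensional synchronous strategy whose dimension grows at least linearly in $n$ (so $\cal{E}(QS_n) \geq n$), and it has question length $q_{QS}(n) = O(\log n)$, answer length $a_{QS}(n) = O(n)$, checker runtime $O(\log n)$, and decider runtime $O(n)$. This will be the base game family fed into hypercompression.

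First I mimic the construction in the proof of \Cref{thm:gapless-hyper}: setting $G^{(0)} := QS$, I repeatedly apply $\mathrm{GaplessIntro}$ to produce $G^{(i+1)}$ from $G^{(i)}$ until the question length falls below the universal constant $Q_0$. By the recurrence $\ell_{i+1}(n) \leq \lceil 2\log \ell_i(n) + 7 + \log|\mathcal{X}_{MS}|\rceil$, this terminates in $k = O(\log^*(\log n))$ iterations, and I take $G^{EPR}$ to be the resulting game family, with $G^{EPR}_n := G^{(k)}_n$.

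Next I tally the parameter bounds in exactly the same way as in the proof of \Cref{thm:gapless-hyper}. The question length is $Q_0$ by construction. Each application of intro adds at most $\ell_i(n) + O(1)$ bits to the answer length, so the total answer length is $a_{QS}(n) + O(\log^*(\log n) \cdot \log n) = O(n)$. The checker $C^{EPR}$ is just the universal intro checker at the final iteration level, and hence runs in $O(1)$ time. The decider runtime is $t_d^{QS}(n) + \poly(q_{QS}(n)) = O(n) + \poly\log(n) = \poly(n)$. Completeness of $G^{EPR}$ follows by iterating the completeness clause of \Cref{thm:intro-gapless}. Finally, for the entanglement bound, item~7 of \Cref{thm:intro-gapless} gives $\cal{E}(G^{intro}_n) \geq \max\{\cal{E}(G_n), 2^{2n}\} \geq \cal{E}(G_n)$, so induction over the $k$ iterations yields $\cal{E}(G^{EPR}_n) \geq \cal{E}(QS_n) \geq n$.

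The main obstacle, such as it is, does not lie in the compression step---which is a direct bookkeeping exercise on top of \Cref{thm:intro-gapless}---but rather in verifying the input properties of the Question Sampling game: that $QS$ can be realised as an $n$-indexed family with exactly the complexity parameters claimed \emph{and} with a value-$1$ synchronous strategy whose dimension grows at least linearly in $n$. These are standard facts from \cite{MNY21}, but they are what gives the statement its content, since without them the hypercompressed game would still have constant-size questions but might fail to witness growing entanglement.
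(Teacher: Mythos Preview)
Your proposal is correct and follows essentially the same approach as the paper, whose proof is the single line ``Apply \Cref{thm:gapless-hyper} to the Question Sampling game from \cite{MNY21}.'' You have simply unpacked the bookkeeping that this one line entails. One small remark: you do not actually need to argue that $\cal{E}(QS_n)\geq n$, since item~7 of \Cref{thm:intro-gapless} already gives $\cal{E}(G^{intro}_n)\geq 2^{2n}$ after even a single round of introspection, independently of the input game's entanglement bound.
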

\begin{proof}
  Apply \Cref{thm:gapless-hyper} to the Question Sampling from
  \cite{MNY21}. 
\end{proof}

\section{Free entangled games: the gapped case}
\label{sec:gapped}

Our treatment of the gapped case will largely follow along the
lines of the gapless case.
Throughout this section, we reuse analogous notation from the gapless case for
convenience. For instance, the universal constant question size we
obtain for our gapped hypercompression theorem will also be denoted
$Q_0$ here, even though it is not necessarily the same constant as in
the gapless case.
\subsection{Granular gapped compression theorems}
\subsubsection{Gapped question reduction}

The following definition states the properties of the Pauli Basis test from \cite[7.3]{JNVWY20}.
\begin{definition}
The Pauli Basis test $PB_n$ is a family of games with question length
$q_{PB}(n) = \poly\log(n)$,  answer length $a_{PB}(n) = O(n)$, sampler runtime $t_s^{PB}(n) = \poly\log(n)$, and decider runtime $t_d^{PB}(n) = O(n)$.
\end{definition}

The following theorem combines the introspection and gap amplification
theorems of \cite{JNVWY20}.

\begin{theorem}\label{thm:intro-gap}
  There is a Turing machine $\mathrm{GappedIntro}$ with the following properties.
Let $G = (S,D)$ be a family of gapped games with question length $q(n)$,
answer length $a(n)$, decider runtime $t_d(n)$, and sampler runtime
$t_s(n)$, where $t_s(n), t_d(n) = \poly(n)$, and where the sampler is
an $\ell$-level conditional linear sampler with $\ell \leq 10$. Then
$\mathrm{GappedIntro}$ given a description of $S, D$ outputs a
tuple $(S^{intro}, D^{intro})$ defining a family of games
$G^{intro}$ such that the
following hold.
\begin{enumerate}
    \item (\textbf{Question length:}) The question length is
      upper-bounded by $q^{intro}(n) = \max\left( C \log^{\beta}(q(n)), C \log^{\beta}(q^*) \right),$ where $C,
      \beta, q^*$ are universal constants.
    \item (\textbf{Answer length:}) The answer length of $G^{intro}_n$
      is $O((a(n) + q(n) + O(1)) \cdot \poly\log q(n))$. \label{item:intro-answer-gap}
    \item (\textbf{Sampler:}) The sampler $S^{intro}$ on input
      $n$ computes $q_{intro}(n)$ and then runs a universal
      sampler $S^{intro, universal}$ on input $q$. The latter
      runs in time $\poly\log q(n) \cdot O(t_{s}^{PB}(q))$. Thus the total sampler
      runtime is $t_{s}^{intro}(n) = O(\poly\log q(n) \cdot t_{s}^{PB}(q(n)))$.
      The sampler $S^{intro}$ is an $\ell'$-level conditional linear
      sampler with $\ell' = 5$.
    \item (\textbf{Decider runtime:}) The runtime of the decider $D^{intro}$
      is bounded by $t_{d}^{intro}(n) = (t_d(n) + t_{d}^{PB}(q(n)) +
      O(\log n)) \cdot \poly\log q(n)$. \label{item:intro-runtime-gap}
    \item (\textbf{Completeness:}) For all oracularizable
      finite-dimensional synchronous
      strategies for $G_n$, there exists an oracularizable
      finite-dimensional synchronous
      strategy for $G^{intro}_n$ achieving at least as high a
      value. \label{item:intro-completeness-gap}
    \item (\textbf{Soundness:}) If $\omega_q(G_n) \leq 1/2$, then
      $\omega_q(G^{intro}_n) \leq  1/2$. \label{item:intro-soundness-gap}
    \item (\textbf{Efficient computability:}) 
      $\mathrm{GappedIntro}$ runs in time
      $O(|D| + |S|)$. Moreover, $|D^{intro}| = |D| + O(1)$ and
      $|S^{intro}| = |S| + O(1)$. \znote{check this} \label{item:intro-efficient-computability-gap}
  \end{enumerate}
\end{theorem}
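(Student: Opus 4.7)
The plan is to obtain $\mathrm{GappedIntro}$ by composing two transformations from \cite{JNVWY20}: first, the introspection transformation of \cite[Section 7]{JNVWY20} (which uses the Pauli Basis test $PB_n$ as a subroutine to have the provers sample their own questions), and second, the anchored parallel repetition gap amplification transformation, in order to restore a constant completeness-soundness gap. The introspection transformation on its own will shrink the question length from $q(n)$ down to the claimed $O(\log^\beta q(n))$ bound (the parameter $\beta$ arising from the $\poly\log$ factors in the Pauli Basis test), but it typically degrades the soundness by a constant that depends on the specific form of the introspected verifier; hence the need to follow up with parallel repetition to drive the soundness back down to $1/2$.

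The first step is to run the introspection construction of \cite[Theorem 7.6 / 7.7]{JNVWY20} on $(S,D)$. Honest provers in $G^{intro}$ share EPR pairs and measure them in bases indexed by Pauli strings to generate a question pair distributed according to the sampler $S$; the decider $D^{intro}$ is composed of a portion that runs the Pauli Basis test $PB_{q(n)}$ on the ``question registers'' plus a portion that runs the original decider $D$ on the introspected questions and the prover-supplied answers. This immediately gives the claimed bounds on question length, on answer length (roughly $a(n) + q(n)$ plus $\poly\log q(n)$ overhead from the Pauli Basis test and its encodings), and on sampler/decider runtime (dominated by $t_s^{PB}(q(n))$ and $t_d^{PB}(q(n))$, respectively). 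I would check directly that the introspected sampler is a conditional linear sampler and, tracing the construction, verify it has at most $\ell' = 5$ levels.

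The second step is to apply anchored parallel repetition with a constant number of repetitions $k$ (independent of $n$) in order to amplify the completeness--soundness gap. Since \cite{JNVWY20} proves a parallel repetition theorem for anchored games whose soundness decays exponentially in the number of repetitions, a constant $k$ suffices to push the soundness below $1/2$ while only inflating the answer length, sampler time, and decider time by constant factors and preserving the conditional-linear form of the sampler (the anchoring coin flip and the repetition are both realized as an extra level of conditional linear sampling). Completeness for oracularizable finite-dimensional synchronous strategies is preserved by both steps because the honest strategy in the introspected game is built by direct sum from the honest strategy in $G_n$, and oracularization/synchronicity are preserved by anchored repetition.

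The main bookkeeping obstacle will be ensuring that (i) the conditional-linear sampler level count after both transformations genuinely stays $\leq 5$, since introspection and anchoring each add levels, and (ii) the answer length bound $O((a(n)+q(n)+O(1))\cdot\poly\log q(n))$ is tight with the actual blow-up incurred by the Pauli encodings plus the $k$-fold repetition. Efficient computability, and the bounds $|D^{intro}| = |D|+O(1)$ and $|S^{intro}| = |S|+O(1)$, follow because both transformations are ``universal'' in the same sense as in \Cref{thm:intro-gapless}: a fixed-size universal sampler/decider for the Pauli Basis test and the repetition schedule is prepended, and the only $G$-dependent parts are invocations of $S$ and $D$ as black boxes.
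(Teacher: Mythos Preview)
Your high-level decomposition (introspection followed by gap amplification via parallel repetition) matches the paper's, but two of your choices in the amplification step would actually fail, and the paper explicitly flags both.

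First, you propose using the anchored parallel repetition of \cite{BVY17}. The paper rejects this: the \cite{BVY17} bound depends on the answer length of the game, and becomes trivial when the number of repetitions $k$ is smaller than that answer length. Since the answer length after bare introspection is roughly $a(n) + q(n)$, you would be forced to take $k$ at least that large, destroying the complexity bounds you are trying to establish. The paper instead uses the projection-game parallel repetition of \cite{DSV15}, whose bound has no answer-length dependence. This in turn forces an intermediate \emph{oracularization} step (modified so the type graph makes the game a projection game), which you omit.

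Second, you claim a \emph{constant} number of repetitions $k$ suffices. It does not: the bare introspection soundness is $\delta(\eps,n) = a((\log q(n))^a \eps^b + (\log q(n))^{-b})$, so the gap shrinks as $1/\poly\log q(n)$, and you need $k = \poly\log q(n)$ repetitions to restore a constant gap. This is precisely the origin of the $\poly\log q(n)$ multiplicative factors in the answer length, sampler runtime, and decider runtime bounds in the theorem statement---factors your ``constant $k$'' argument cannot account for. Your own bookkeeping worry (ii) about whether the answer-length bound is tight is thus well-founded, but for a different reason than you suspected.
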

\begin{proof}
  There are several major differences between this theorem and the
  introspection theorem of \cite{JNVWY20}, which we detail below. For notational clarity, we refer to the introspected game constructed in \cite{JNVWY20} as $G^{BareIntro}$, to distinguish it from the one we construct in this theorem.
  \begin{itemize}
  \item \textbf{Parameters} In \cite{JNVWY20}, the introspection
    theorem was stated in the ``scaled-up'' setting (as described in \Cref{sec:gapless-compression}): the value of the
    game $G^{BareIntro}_n$ was related to the value of $G_N$ for $N =
    2^n$. In contrast, here we will work in the ``scaled-down''
    setting: we would like to relate $G^{intro}_n$ to $G_n$. 
  \item \textbf{$\lambda$-boundedness}
    In \cite{JNVWY20}, the game $G$ to be compressed was assumed to be
     ``$\lambda$-bounded'': the question length, answer length, and
     runtimes for $G_n$ were all assumed to be bounded by $n^{\lambda}$,
     and the complexity of the introspected game was specified in
     terms of $n$ and $\lambda$. The compression procedure required
     $\lambda$ as an input, in order to set the size of the instance
     of the Pauli basis test to be used in order to sample questions. This formulation was convenient there
     to discuss \emph{fixed points} of game compression maps, but is
     not sufficient for us because we would like the question and
     answer length to scale separately. Thus, instead of specifying
     an exponent $\lambda$, we ask the sampler $S$ to efficiently compute $q(n)$ (in time $t_s(n)$). The
     compression procedure will run $S$ to determine the size of the
     instance of the Pauli basis test to run.
    \item \textbf{Parallel repetition:} The introspection game in
      \cite{JNVWY20} is not ``gap-preserving.''  Rather, it has the
      following soundness guarantee: if $\omega_q(G^{BareIntro}_n) \geq 1 -\eps$,
      then $\omega_q(G_n) \geq 1 - \delta(\eps,n)$. We must set $\eps$ to be a vanishing function of $n$ in order to obtain $\delta \leq 1/2$.
      
      We will remedy this by applying parallel repetition on top of the game $G^{BareIntro}$. Parallel repetition is also used in \cite{JNVWY20} to amplify soundness gaps. Unfortunately, directly applying the strong anchored parallel
      repetition theorem of \cite{BVY17} that was used in \cite{JNVWY20} will
      not suffice for us. This is because the bound given by that
      theorem has a dependence on the answer length of the game, and
      the bound becomes trivial when $k$ is less than the answer
      length. In our setting it is essential that $k$ should depend
      only on the question length. Thus, we turn instead to the older
      parallel repetition bound of \cite{DSV15},  which applies to
      projection games. As noted in \cite{DSV15}, we can convert any
      game to a projection game using oracularization, at only a
      constant-factor cost to the soundness gap. (Oracularization will not always preserve \emph{completeness}, but in our case it does because we assume that the input game had a perfect oracularizable strategy in the completeness case.) One the game has been oracularized, the result of \cite{DSV15} tells us that by taking sufficiently many repetitions $k$, if $\omega_q(G^{BareIntro}_n) \leq 1 - \eps(n)$, then $\omega_q((G^{BareIntro, orac}_n)^{\otimes k}) \leq 1/2$ as desired.

    \end{itemize}

    We now describe, in more detail, the construction of the introspected game $G^{intro} = (S^{intro},
    D^{intro})$ and prove the properties claimed in the theorem.

    \paragraph{Bare Introspection:}  We start by performing a version
    of introspection as done in \cite[Section 8]{JNVWY20}, but with
    the following modifications
    \begin{itemize}
    \item \textbf{Sampler:} The sampler sets $R = q(n)$ (rather than
      $R = N^{\lambda}$, which it computes by invoking $S(n, \textsc{Dimension})$.
    \item \textbf{Decider:} The decider sets $N = n$ (rather than
      $2^n$) and does not abort/time out. 
    \end{itemize}
    \paragraph{Oracularization:} Next, we perform a
    modified version of oracularization based on the version in
    \cite[Section 9]{JNVWY20}. The only difference is the type graph:
    we eliminate the edge between the types $\textsc{A}$ and
    $\textsc{B}$. This means the resulting game is a projection game:
    either both players receive the same type, in which case they must
    return equal answers, or one player receives the $\textsc{Oracle}$ type and
    the other receives the $\textsc{A}$ or $\textsc{B}$ type, in which
    case the answer of the former uniquely determines a correct answer
    for the latter.
    \paragraph{Parallel repetition:}  Finally, we perform parallel
    repetition as in \cite[Section 11]{JNVWY20}, \emph{without} the
    anchoring transformation. To compute the number of repetitions needed, we must recall the soundness of the bare introspection procedure. This is given by the following statement: if $\omega_q(G^{BareIntro}_n) \geq 1 - \eps$, then
      \begin{equation}
        \omega_q(G_n) \geq 1 - \delta(\eps, n), \qquad 
      \delta(\eps, n) = a((\log q(n))^a \eps^b + (\log
      q(n))^{-b}),\label{eq:gapped-intro-soundness}\end{equation}
      where $a > 0$ and $b, 0 < b < 1$ are constants depending only on $\ell$.  (For our
      purposes, we have fixed the bound $\ell \leq 10$, so we may take
      $a,b$ to be universal constants.)
      Let $\eps(n)$ be a function such that $\delta(\eps(n),n) < \frac{1}{2}$ for all $n \in \mathbb{N}$ as long as $q(n) > q^* \: \forall n$, where $q^*$ is a universal constant depending only on $b$; note that $\eps(n)$ can be chosen such that $1/\eps(n) \leq \poly\log(q(n))$. (The assumption $q(n) > q^* \: \forall n$ is without loss of generality, since we can always pad the questions to be longer than some constant length before we start.) We would like to find a $k$
      such that, if $\omega_q(G_n^{BareIntro}) < 1 - \eps(n)$, then $\omega_q\left((G_n^{BareIntro,orac})^{\otimes k}\right) < \frac{1}{2}$; this would imply that, if $\omega_q\left((G_n^{BareIntro,orac})^{\otimes k}\right) \geq \frac{1}{2}$, then $\omega_q(G_n^{BareIntro}) \geq 1 - \eps(n)$, which in turn implies that $\omega_q(G_n) \geq 1 - \delta(\eps(n), n) \geq \frac{1}{2}$.
    
    Now, the result of \cite{DSV15}
      states the following: there are universal constants $C, c$, such
      that
      \begin{align*}
        \omega_q((G_n^{BareIntro, orac})^{\ot k})
        &\leq (1 - C(1 -
          \omega_q(G^{BareIntro, orac}_n))^c)^{k/2}\\
        &\leq ( 1 - C(\eps)^c)^{k/2}        \\
        &\leq \exp\left(-C(\eps)^c \cdot \frac{k}{2} \right),
      \end{align*}
      where in the last step we used that $(1 - \eps)^n \leq
      \exp(-\eps n)$ for all $n > 0$. If we choose $k$ such that
      \begin{equation} k \geq \frac{2 \log 2}{C} \eps^{-c} =
        \poly(\log q(n)), \label{eq:num-repetitions} \end{equation}
      then $\omega_q((G_n^{BareIntro, orac})^{\ot k}) \leq 1/2$ as desired.
    
For this choice of $k$, the resulting game $(G^{BareIntro, orac})^{\otimes k}$ is the introspected
    game $G^{intro}$.

    We now indicate how to show the properties claimed in the theorem
    statement.
    \begin{itemize}
    \item \textbf{Question length:} The question length of the bare
      introspected game is $O(\log q(n))$. Oracularization increases
      this by a factor of $2$, and parallel repetition by a factor of
      $k(n) = \poly\log q(n)$, resulting in a total question length of
      $\poly\log q(n)$.
      \item \textbf{Answer length:} The answer length of the bare
      introspected game is $q(n) + a(n) + O(1)$. Oracularization
      and parallel repetition increase this by a factor of $\poly\log
      q(n)$ resulting in a total answer length of $O((q(n)  + a(n) +
      O(1)) \cdot \poly\log q(n))$.
    \item \textbf{Sampler:} The sampler essentially has to run $k(n)$
      copies of the oracularization of the bare introspected sampler
      in parallel. Each of these takes time $O(t^{PB}_s(q(n)) +
      O(1))$, so the total runtime is $O(\poly\log q(n) \cdot t^{PB}_s(q(n)))$.
    \item \textbf{Decider runtime:} The runtime for the bare
      introspected decider is $t_d(n)  + t_d^{PB}(n) + O(\log n)$, where
      $t_d^{PB}(n)$ is the decider runtime for the Pauli basis
      test and the $O(\log n)$ comes from the runtime of $Q$. The total runtime scales is the bare runtime multiplied by
      the number of repetitions, which is $\poly \log q(n)$. Thus, the
      total runtime is $\poly \log q(n) \cdot (t_d(n) + t_d^{PB}(n) +
      O(1))$. 
    \item \textbf{Completeness:} This is a straightforward consequence
      of the completeness of bare introspection, oracularization, and
      parallel repetition.
    \item \textbf{Soundness:} The soundness analysis of bare
      introspection in \cite{JNVWY20} gives a bound in terms of $n$,
      independent of the question length. However, by examining the
      soundness argument in \cite[Section 8.4]{JNVWY20}, we see that
      the soundness function $\delta(\eps, n)$ depends only on $\eps$
      and $R(n)$, the bit length of the random seed of the questions to
      be compressed. In fact, we have
      \[ \delta(\eps, n) = a((\log R(n))^a \eps^{b} + (\log
        R(n))^{-b}). \]
      In our setting $R(n) = q(n)$, yielding the soundness claimed
      above in \Cref{eq:gapped-intro-soundness}.
      
      Thus, by the soundness of bare introspection,
      if $\omega_q(G) \leq 1/2$, then $\omega_q(G^{BareIntro})
      \leq 1 - \eps(n)$, and thus by the soundness of oracularization,
      $\omega_q(G^{BareIntro, orac}) \leq 1 - 2\eps(n)$. By the parallel repetition theorem of
      \cite{DSV15} together with the choice of $k(n)$, it holds that
      if $\omega_q(G^{intro}) \leq (1 - C(2\eps(n))^{c})^{k/2} \leq
      1/2$ as desired.
    \item \textbf{Efficient computability:} This follows from
      inspecting the sampler and decider for $G^{intro}$.
    \end{itemize}
\end{proof}

\subsubsection{Gapped answer reduction}

\begin{theorem} \label{thm:ar-gapped}
There is a Turing machine $\mathrm{GappedAnsReduce}$ with the following properties.
Let $G = (S,D)$ be a game family with complexity bounds $q(n), a(n),
t_s(n), t_d(n)$ respectively, where $a(n) \leq t_d(n)$ and
$q(n) \leq t_s(n)$, and such that $(S,D)$ is an $\ell$-level normal form verifier. Then
$\mathrm{GappedAnsReduce}$ given a description of $S,D$ outputs a description of a game family $G^{ans} = (S^{ans}, D^{ans})$, such that for the game family $G^{ans}$ and for all $n \geq 2$, we have
\begin{enumerate}
	 \item (\textbf{Conditional sampler:}) $(S^{ans}, D^{ans})$ is a $\max(\ell + 2,5)$ normal form verifier.
    \item (\textbf{Question length:}) The question length is
      $q^{ans}(n) = \poly(t_s(n), \log(t_d(n)), \log|D|)$, where $|D|$ is the description length of the
      Turing machine $D$. 
    \item (\textbf{Answer length:}) The answer length is $a^{ans}(n) = \poly(\log(t_d(n)),\log(|D|))$.
    \item (\textbf{Sampler runtime:}) The sampler runtime is $\poly(t_s(n),t_d(n),|D|)$. If $S$ is an $\ell$-level conditional linear sampler, then the sampler $S^{ans}$ is an $\ell'$-level conditional linear samper for $\ell' = \max(\ell+2,5)$. 
    \item (\textbf{Decider runtime:}) The decider runtime is $\poly(t_s(n),t_d(n),|D|)$.
    \item (\textbf{Completeness:}) If $G_n$ has a finite-dimensional oracularizable synchronous strategy of value 1, then $G_n^{ans}$ has a finite-dimensional oracularizable synchronous strategy with value 1. 
    \item (\textbf{Soundness:}) If $\omega^s_q(G^{ans}) \geq 1 - \eps$, then
      $\omega^s_q(G_n) \geq 1 - \delta(\eps,n)$, where $\delta(\eps, n) = a((\log t_d(n) + \log |D|)^a \eps^b + (
      \log t_d(n) + \log |D|)^{-100 b})$, $a$ is a universal constant such that $a > 0$, and $b$ is a universal constant such that $0 < b < 1$.
        \item (\textbf{Efficient computability:}) 
      $\mathrm{GappedAnsReduce}$ runs in time
      $O(|D| + |S|)$. Moreover, $|D^{ans}| = |D| + O(1)$, and
      $|S^{ans}| = |S| + O(1)$. 
\end{enumerate}
\end{theorem}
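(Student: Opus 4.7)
The plan is to adapt the answer reduction construction from \cite[Section 10]{JNVWY20} to the scaled-down setting, obtaining the tighter parameters through a careful accounting of the PCP of Proximity (PCPP) overhead. First, I would oracularize the input game $(S,D)$ to obtain a projection game, where one player acts as an ``oracle'' that receives both original questions and produces a pair of answers, while the other acts as a ``verifier'' player that receives a single one of the two questions and returns the corresponding answer. This preserves the existence of a value-$1$ oracularizable synchronous strategy, increases question and answer length by at most a constant, and adds $O(1)$ to the sampler level.

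Second, apply a PCPP reduction to the decider's computation. Since $D$ has description length $|D|$ and runs in time $t_d(n)$ on inputs of total length $O(t_d(n))$, its computation can be encoded via Cook--Levin as a circuit satisfiability instance of size $L = \poly(t_d(n), |D|)$. Invoke a quasilinear-size PCPP (such as that of Ben-Sasson et al.) for this instance, whose verifier reads only $O(1)$ symbols from a proof of length $L$ and from a low-degree (Reed--Muller) encoding of the purported original answers. Each query location is specified by $O(\log L) = O(\log t_d(n) + \log |D|)$ bits, so the new prover's answer (a constant number of such symbols) has total length $\poly(\log t_d(n), \log|D|)$, as required.

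Third, assemble the new game and verify the normal form structure. The sampler $S^{ans}$ first runs $S$ to generate the oracularized question pair, then samples one of $O(1)$ sub-test types (PCPP local test, low-degree test, or consistency with raw answer bits), and conditionally samples the indices required for the selected sub-test. This composes the original $\ell$-level conditional linear sampler with two additional levels (one for the type variable, one for conditional index selection), yielding $\max(\ell+2, 5)$, with the floor of $5$ coming from the minimum level needed to express the PCPP's randomness as a conditional linear sampler. The total question length is dominated by the bits needed to drive $S$ (at most $t_s(n)$) together with the sub-test type and index bits, giving $\poly(t_s(n), \log t_d(n), \log|D|)$. Both sampler and decider run in time $\poly(t_s(n), t_d(n), |D|)$, since at worst they simulate $S$ or $D$ once and then perform logarithmic-sized index arithmetic. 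Completeness is routine: an oracularizable value-$1$ strategy for $G_n$ induces one for $G_n^{ans}$ by having the honest players provide the low-degree encoding of their oracularized answers together with the correct PCPP proof.

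The main obstacle is the soundness analysis, which mirrors that of \cite{JNVWY20}. Starting from a strategy winning $G^{ans}_n$ with probability $1-\eps$, one must construct a strategy for $G_n$ winning with probability $1-\delta(\eps, n)$ of the stated form. The key chain of reductions is (i) low-degree test soundness to extract close-to-codeword behavior for the players' purported encoded answers, (ii) PCPP soundness to deduce that the decoded answers satisfy $D$ with good probability, and (iii) a Pauli-basis-style rounding argument converting the resulting operators on the provers' shared state into genuine measurement operators answering the original questions. Each step incurs a polynomial loss in $\eps$ and a polylogarithmic loss in $\log t_d(n) + \log|D|$, which combine to yield the claimed expression for $\delta(\eps,n)$ with universal constants $a,b$ that depend only on the parameters of the PCPP and the low-degree test and are independent of the input game.
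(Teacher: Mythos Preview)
Your high-level structure---oracularize, encode the decider's computation, then verify via local tests---matches the paper's and that of \cite{JNVWY20}. However, there is a genuine gap in step two: you propose to ``invoke a quasilinear-size PCPP (such as that of Ben-Sasson et al.)'' and then in the analysis separately appeal to ``PCPP soundness'' as a black box. The difficulty is that generic classical PCPPs are \emph{not known to be sound against entangled provers}. This is precisely the issue behind the bug in \cite{NV18}, which the paper explicitly flags in its open problems as unresolved: entanglement-sound PCPPs for $\NP$ remain open. A Ben-Sasson--Sudan style PCPP relies on local testability of codes (tensor products, Reed--Solomon proximity tests, sumcheck-style reductions) whose classical soundness arguments do not automatically transfer to the entangled setting. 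Your chain (i)--(iii) treats ``low-degree test soundness'' and ``PCPP soundness'' as independent ingredients, but in the entangled world the only leverage you have is the former, so the latter must be \emph{built out of} it rather than invoked separately.

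The paper avoids this obstacle by using a specific algebraic PCP in which every check reduces to the quantum-sound low-individual-degree test. Concretely: the decider is Cook--Levin-reduced to a 5SAT instance, which is then arithmetized \emph{directly} as a low-degree polynomial $c_0$ over $\mathbb{F}_q^{5m+5}$ with $m = O(\log t_d(n) + \log |D|)$; satisfiability becomes the assertion that $c_0$ vanishes on the subcube $H^{5m+5}$. This vanishing is certified by a decomposition $c_0(z) = \sum_i c_i(z)\cdot \mathrm{zero}(z_i)$ into generators of the vanishing ideal, where the $c_i$ are additional low-degree polynomials the provers commit to. Every verification step is then a polynomial-evaluation consistency check, and the entire soundness analysis goes through the low-degree test soundness function $\delta_{LD}$ alone. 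The paper also notes that this direct arithmetization (as opposed to the succinct-circuit route used in the scaled-up setting of \cite{JNVWY20}) is what buys the tighter $\poly(\log t_d(n), \log|D|)$ dependence in the answer length and in the soundness parameter $m$. Finally, your step (iii) invoking a ``Pauli-basis-style rounding argument'' is misplaced here: Pauli-basis rigidity is the engine of question reduction (introspection), not answer reduction.
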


\begin{proof}
This theorem is essentially \cite[Theorem~10.27]{JNVWY20}, but with a better dependence on $|D|$ obtained by using a slightly different PCP construction, that is more suited to our ``scaled down" setting, where we are imagining reducing from polynomial to polylogarithmic answer size, without shrinking the decider runtime. This is in contrast to \cite{JNVWY20}, where the answer reduction was from exponential to polynomial size in the index $n$, and reduced the decider runtime from exponential to polynomial as well. The PCP construction we will use is slightly different, so we will start by reviewing the scaled-up case in brief.

\paragraph{The scaled up case: review} In \cite{JNVWY20}, to perform answer reduction, one starts by applying the Cook-Levin theorem to the (exponential-runtime ) decider to create an instance of Succinct Circuit SAT, specified by another circuit $\mathcal{C}$ consisting of $s = \poly(|D|)$ gates, and taking polynomially many inputs labeled $x,o$. This circuit is then converted into a 5SAT formula $\mathcal{F}(x,o,w)$ where $w \in \mathbb{F}_q^s$ using the Tseitin transformation together with some additional gadgets as described in \cite[Section 10.3]{JNVWY20}. In this transformation, we need to add the $s$ new variables labeled $w$---essentially one per wire in the circuit---to keep the degree per variable down to a constant. The original Succinct Circuit SAT instance is satisfiable if and only if the following holds:
    \[ \forall x,o, \exists w \mathcal{F}(x,o,w) = 1. \]
    
    To check this, the answer-reduced decider in \cite{JNVWY20} computed an \emph{arithmetization} of $\mathcal{F}$, converting it into a polynomial $\mathcal{F}_{arith}: \mathbb{F}_q^{m'}$, where $m'$ is the total number of variables $x,o,w$. Identifying a subset $H \subset \mathbb{F}_q$ with $\{0,1\}$, the values of $\mathcal{F}_{arith}$ on $H^{m'}$ are exactly equal to values of the Boolean formula $\mathcal{F}$ on Boolean assignments in $\{0,1\}^m$.
    
    Now, to check the desired predicate, the decider asks the provers to compute low-degree polynomials $g_1, \dots, g_5$ encoding assignments to the $w$ variables. It makes queries to these polynomials and to some additional low-degree polynomials to certify that  
    \[ \mathcal{F}_{arith}(x,o,w) (g_1(x) - o_1) \dots (g_5(x) - o_5). \]
    is low-degree and identically $0$ on $H^{m'}$. \anote{Explain more}
\paragraph{The scaled-down setting}

In our setting, we are instead starting from a polynomial-time decider, which we will convert into a Circuit SAT instance using the Cook-Levin theorem. Given a circuit $\mathcal{C}$, we create a 5SAT formula $\mathcal{F}(a, b, w)$ as before. However, the predicate we wish to check now is just
    \[ \exists a, b, w \mathcal{F}(a,b,w) = 1. \]
    The formula $\mathcal{F}$ is a conjunction of clauses, each of which depends on at most 1 variable each from the blocks $a,b,w_1, w_2, w_3$. 
    
    We will now proceed to convert this formula into a polynomial. Let us fix a field $\mathcal{F}_q$ (with size $q$ to be chosen later), and define $H$ to be two points in $\mathbb{F}_q$ associated with $\{0,1\}$. Define the polynomial $\phi$ by
    \[ \phi(x_1, x_2, \dots, x_5, o_1, o_2, \dots, o_5) = \begin{cases} 1 & \text{if $a_{x_1} ^{o_1} \vee b_{x_2}^{o_2} \vee w_{1,x_3}^{o_3} \vee w_{2,x_4}^{o_4} \vee w_{3,x_5}^{o_5}$ is a clause in $\mathcal{F}$} \\ 0 & \text{otherwise} \end{cases}, \]
    where $x_1, \dots, x_5$ are indices in $[n]$ identified with points on the subcube $H^m$. 
    To explicitly construct $\phi$, recall the indicator polynomial
    \begin{align}
        \mathrm{ind}_{m,y}(x) &= \begin{cases} 1 & \text{if } x = y \\ 0 & \text{otherwise} \end{cases} \\
        &= \prod_{i: y_i = 1} x_i \prod_{i: y_i = 0} (1 - x_i).
    \end{align}
    This polynomial has individual degree $1$ in each variable (and thus total degree $m$). Using this, we may write $\phi$ as a sum of terms, each of which is a product of 10 indicators over the $a,b,w_1, w_2, w_3, o_1, \dots, o_5$ variables respectively. Overall $\phi$ has individual degree $1$ in each variable and total degree $5m + 5$.

    The formula $\mathcal{F}$ is satisfiable iff for all $x_1, \dots, x_5 \in H^{m}$ and $o_1, \dots, o_5 \in H$,
    \[ c_0(x_1, \dots, x_5, o_1, \dots, o_5) = \phi(x_1, \dots, x_5, o_1, \dots, o_5)  \cdot (g_1(x_1) - o_1)(g_2(x_2) - o_2) \dots (g_5(x_5) - o_5) \]
    is equal to 0. This can also be checked by the zero-on-subcube test: we check that $g$ is a member of the ideal of polynomials that zero on $H^{5m+5} $ by asking for a decomposition in terms of the generators of that ideal. Specifically, if we can write
    \[ c_0(z) = \sum_{i=1}^{5m+5} c_i(z) \cdot \mathrm{zero}(z_i), \]
    where $z \in \mathbb{F}_q^{5m+5}$ is shorthand for the $5m+5$ variables $x_1, \dots, x_5, o_1, \dots, o_5$, each $c_i(z)$ is an arbitrary polynomial of the appropriate degree $d$ and $\mathrm{zero}(\cdot)$ is a fixed univariate polynomial that is zero on $H$, then we have certified that $c_0$ is zero on $H^{m'}$. 
    This is done on the zero-on-subcube test by querying $c_0$ and $c_i$ on random points and subspaces in $\mathbb{F}_q^{5m+5}$.

    \paragraph{Setting of parameters}
    Let $G = (S,D,Q)$ be the game family for which we wish to perform answer reduction, and let $q(n), a(n), t_s(n), t_d(n)$ be its complexity bounds. We will now choose parameters for the PCP to perform answer reduction.
    
    First, by applying the Cook-Levin theorem to the decider $D$, we obtain a 5SAT instance with $s = \poly(t_d(n), |D|)$ variables and clauses. (See \cite{JNVWY20} for a detailed description of the Cook-Levin reduction; note that we do not need a succinct representation of the resulting formula here.)
    
    Next, we must choose the PCP parameters $q, m, m', d$ as follows. We will let $\gamma$ be a natural number-valued parameter that we will set later, which we will use the adjust the soundness as required, and let $a,b$ be universal constants that will be below as part of the definition of $\delta_{LD}$.
    \begin{enumerate}
            \item Let $m = \lceil \log s \rceil = O(\log t_d(n) + \log |D|)$.
        \item Let $m' = 5m + 5$.
        \item Let $q = 2^k$ where $k$ is the smallest odd integer satisfying:
        \begin{align}
        k &\leq \gamma m \label{eq:kleqgammam} \\
        2^{-kb } &\leq m^{-(\gamma b)} \label{eq:twokb} \\
        km'/2^k &\leq m^{-b \gamma} \label{eq:mdoverq},
        \end{align}
        and such that $2^k$ is divisible by $m'$
        \item   Let $d = k$.
    \end{enumerate}
    We note that asymptotically, this gives us $d = k = O(m) = O(\log t_d(n) + \log |D|)$
   To compute the soundness of the PCP, we will need the soundness function of the low degree test, defined as
   \begin{equation}
       \delta_{LD}(\eps, q, m,d,r) = a(dmr)^a (\eps^b + q^{-b} + 2^{-bmd}).
   \end{equation}
   The interpretation of this is that for any strategy for the $r$-simultaneous low-degree test with field size $q$, $m$ variables, individual degree $d$, and success probability $1 - \eps$ can be rounded to one that samples a tuple of $r$ polynomials over $\mathbb{F}_q^m$ with individual degree $d$, and the error incurred in the rounding is $\delta_{LD}$.
   
    With the parameter settings above, we have that 
       \begin{align}
       \delta_{LD}(\eps, q, m', d, m' + 6)  &= a(dm'(m'+6))^{a} (\eps^b + q^{-b} + 2^{-bm'd}) \\
       &\leq a(dm'(m'+6))^{a}( \eps^b + q^{-b}) \\
       &\leq a (c \gamma m^3)^a (\eps^{b} + q^{-b}) \\
       &\leq a (c\gamma)^a (m^{3a} \eps^{b} + m^{3a} m^{-(3a + \gamma b)}) \\
       &= a(c\gamma)^a (m^{3a} \eps^{b} + m^{-\gamma b}),
   \end{align}
   where $c$ is some universal constant.
    Here, in going from the first to the second line, we used that $2^{m'd} = 2^{m'k} \geq 2^k = q$. In going from the second to the third, we used \Cref{eq:kleqgammam}. In going from the third to the fourth, we used that \Cref{eq:twokb}. 
    
    \paragraph{Question, answer, and runtime complexity of the test}
    The answer-reduced verifier computes a pair of questions $(x,y)$ from the question distribution of the original game $G$, and sends them to the provers. For a particular $(x,y)$, the answer-reduced verifier will then make queries to the PCP for the decision predicate induced by $x,y$. These will consist of queries to the polynomials $c_0$ and $c_1, \dots, c_{m'}$ defined above, as well as queries to the query polynomials $g_1, \dots, g_5$ that are supposed to be low-degree extensions of the strings $a, b, w_1, w_2, w_3$. This makes a total of $m' + 1 + 5 = m' + 6$ polynomials over $\mathbb{F}_q^{m'}$, with individual degree at most $d$ in each variable. The question and answer complexity is dominated by the simultaneous low-degree test to these $m' + 6$ polynomials. 
    
    In this test, the question size in bits is 
    \[ q^{ans}(n) = q(n) + O(m' \log q) =  \poly( t_s(n), \log t_d(n), \log |D|),\]
    using the bound that the question length of the original game $q(n)$ is upper-bounded by the sampler runtime $t_s(n)$. The answer size is $O((m' + 6) \cdot (d + 1) \cdot \log q)$---the number of bits needed to specify the $\mathbb{F}_q$-valued coefficients of $m'+6$ univariate polynomials with degree at most $d$. This scales as 
    \[ a^{ans}(n) = O(m^3) = \poly(\log t_d(n), \log |D|). \] 
    
    For the decider runtime, the answer-reduced decider must first compute the SAT formula produced by the Cook-Levin reduction for the original decider $D$. This takes time $\poly(|D|, t_s(n), t_d(n))$. Next, the answer-reduced decider executes the PCP verifier for this SAT formula together with the low-degree test. This takes time that is $\poly(|D|, t_d(n), m', d, \log q)$. Thus the total runtime of the decider scales as $\poly(t_s(n), t_d(N), |D|)$. A similar bound may be obtained for the sampler runtime.
    
   \paragraph{Soundness analysis}
   
   In our modified answer reduction, we perform the same test as in \cite[Section 7]{JNVWY20}, but with a modified definition for the polynomial $c_0$, and without the $w$ variables. The majority of the soundness analysis can be reused with the appropriate change in parameters. In particular, it is shown in the proof of the soundness part of \cite[Theorem~10.27]{JNVWY20} that any strategy for the answer reduced game winning with probability $1 - \eps$ implies a strategy for the original game winning with probability $1 - \delta$ with $\delta= O((\delta_{LD}(\eps, q, m', d, m'+6)^{1/2} + (m'd/q))^{1/2})$.
    With our setting of parameters, this yields
    \begin{align*}
        \delta &\leq \left( C a^{1/2} (c \gamma)^{1/2} (m^{3a} \eps^{b} + m^{-\gamma b})^{1/2} + \left(\frac{m'd}{q}\right)^{1/2}\right)^{1/2} \\
        &\leq \left( C a^{1/2} (c \gamma)^{1/2} (m^{3a} \eps^{b} + m^{-\gamma b})^{1/2} + m^{-\gamma b/2}\right)^{1/2} \\
        &\leq \left( 2C a^{1/2} (c \gamma)^{1/2} m^{1.5 a} \eps^{0.5 b} + (2 Ca^{1/2} (c\gamma)^{1/2} + 1) m^{-\gamma b/2}\right)^{1/2} \\
        &\leq 4 C^{1/2} (ca\gamma)^{1/4} m^{0.75 a} \eps^{0.25 b} + 2(2C(ca\gamma)^{1/2} + 1)^{1/2} m^{-0.25\gamma b}.
    \end{align*}
    where in going from the first to th second line, we used \Cref{eq:mdoverq}, and in the subsequent steps we used that $\sqrt{x + y} \leq 2(\sqrt{x} + \sqrt{y})$. This is of the form claimed in the theorem if we set $\gamma = 100$ and recall that $m = O(\log t_d(n) + \log |D|)$.
    
    \paragraph{Efficient computability}
    The new sampler $S^{ans}$ generates question pairs from $S$ together with question pairs used in the low-degree tests performed by the answer-reduced verifier. The description length of the Turing machine that generates these additional questions is $O(1)$, so the total description length of $S^{ans}$ is $|S| + O(1)$. Likewise, for the decider, the answer reduced decider computes the Boolean circuit evaluated by the original decider $D$, applies the Cook-Levin reduction to convert it to a SAT formula, and then executes the appropriate PCP verifier. The description length of the Turing machine that performs these steps is $|D| + O(1)$. Both $S^{ans}$ and $D^{ans}$ can be computed by concatenating descriptions of $S,D$ with constant-sized strings, so the claimed runtime bound on $\mathrm{GappedAnsReduce}$ follows.

\end{proof}

\begin{theorem}\label{thm:ar-gapped-pr}
  There is a Turing machine $\mathrm{GappedAnsPR}$ with the following properties.
Let $G = (S,D)$ be a family of gapped games with question length $q(n)$,
answer length $a(n)$, decider runtime $t_d(n)$, and sampler runtime
$t_s(n)$, where $t_s(n), t_d(n) = \poly(n)$, and where the sampler is
an $\ell$-level conditional linear sampler with $\ell \leq 10$. Then
$\mathrm{GappedAnsPR}$ given a description of $S, D$ outputs a
triple $S^{ansPR}, D^{ansPR}$ defining a family of games
$G^{ansPR}$ such that the
following hold.

Let $q^{ans}(n)$ be defined in terms of $t_s(n), t_d(n), |D|$ as it is in \Cref{thm:ar-gapped}. Let $k(n)$ be a parameter such that 
$k(n) = \poly(\log t_d(n) + \log |D|)$. 

\begin{enumerate}
    \item (\textbf{Question length:}) The question length is
      upper-bounded by $q^{ansPR}(n) = k(n) \cdot \poly(t_s(n), \log(t_d(n)), \log |D|)$.
    \item (\textbf{Answer length:}) The answer length of $G^{ansPR}_n$
      is $k(n) \cdot \poly(\log (t_d(n)), \log(|D|))$. 
    \item (\textbf{Sampler:}) The total sampler
      runtime is $t_{s}^{ansPR}(n) = k(n) \cdot \poly(t_s(n), t_d(n), |D|)$.
      If $S$ is an $\ell$-level conditional linear sampler, then the sampler $S^{ansPR}$ is an $\ell'$-level conditional linear sampler for $\ell' = \max(\ell+4,7)$. 
    \item (\textbf{Decider runtime:}) The runtime of the decider $D^{ansPR}$
      is bounded by $t_{d}^{ansPR}(n) = k(n) \cdot \poly(t_s(n), t_d(n), |D|)$. 
    \item (\textbf{Completeness:}) If there is an oracularizable
      finite-dimensional synchronous
      strategy for $G_n$ achieving value $1$, there exists an oracularizable
      finite-dimensional synchronous
      strategy for $G^{ansPR}_n$ achieving value $1$. 
    \item (\textbf{Soundness:}) If $\omega_q(G_n) \leq \frac{1}{2}$, then
      $\omega_q(G^{ansPR}_n) \leq  
      \frac{1}{2}$. \label{item:ar-soundness-gap}
    \item (\textbf{Efficient computability:}) 
      $\mathrm{GappedAnsPR}$ runs in time
      $O(|D| + |S|)$. Moreover, $|D^{ansPR}| = |D| + O(1)$, and
      $|S^{ansPR}| = |S| + O(1)$.
      
  \end{enumerate}
\end{theorem}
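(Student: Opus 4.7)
The plan is to mirror the structure used in \Cref{thm:intro-gap}: first apply the ``bare'' gapped answer reduction of \Cref{thm:ar-gapped} to obtain a game with polylogarithmic answers but only a weak (non-gap-preserving) soundness guarantee, then oracularize to obtain a projection game, and finally amplify the gap via the \cite{DSV15} parallel repetition theorem a carefully chosen number of times $k(n)$. We cannot use the anchored parallel repetition theorem of \cite{BVY17} here because its quantitative bound degrades with the answer length of the game, whereas we need $k(n)$ to depend only on quantities of the form $\log t_d(n) + \log |D|$.

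First, $\mathrm{GappedAnsPR}$ invokes $\mathrm{GappedAnsReduce}$ on $(S,D)$ to produce $(S^{ans}, D^{ans})$ with the parameters listed in \Cref{thm:ar-gapped}. In particular, the soundness guarantee is: if $\omega^s_q(G^{ans}_n) \geq 1 - \eps$, then $\omega^s_q(G_n) \geq 1 - \delta(\eps,n)$ with
\[ \delta(\eps, n) = a ((\log t_d(n) + \log |D|)^a \eps^b + (\log t_d(n) + \log |D|)^{-100 b}). \]
Choose a function $\eps(n) = 1/\poly(\log t_d(n) + \log |D|)$ so that $\delta(\eps(n), n) < 1/4$ for all $n$ (as in the proof of \Cref{thm:intro-gap} we assume without loss of generality, by padding, that $\log t_d(n) + \log |D|$ is larger than a sufficient universal constant; otherwise, the theorem is trivial). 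Hence, contrapositively, if $\omega^s_q(G_n) \leq 1/2$, then $\omega^s_q(G^{ans}_n) \leq 1 - \eps(n)$.

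Next, oracularize $G^{ans}$ as in \cite[Section 9]{JNVWY20}, removing the edge between the two ``player'' types so that the resulting game $G^{ans,orac}$ is a projection game. This preserves the oracularizable value in the completeness case, and degrades the gap in the soundness case by at most a constant factor: $\omega^s_q((G^{ans,orac})_n) \leq 1 - 2\eps(n)$. Finally, take the $k(n)$-fold parallel repetition of $G^{ans,orac}$, choosing
\[ k(n) = \left\lceil \frac{2\log 2}{C}(2\eps(n))^{-c}\right\rceil = \poly(\log t_d(n) + \log |D|), \]
where $C, c$ are the universal constants in the projection parallel repetition bound of \cite{DSV15}. As in the proof of \Cref{thm:intro-gap}, this choice gives $\omega^s_q((G^{ans,orac})_n^{\otimes k(n)}) \leq (1 - C(2\eps(n))^c)^{k(n)/2} \leq 1/2$, while completeness is preserved because parallel repetition composed with oracularization preserves finite-dimensional oracularizable synchronous value-$1$ strategies. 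The game $(G^{ans,orac})^{\otimes k(n)}$ is defined to be $G^{ansPR}$.

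The parameter bounds all follow by straightforward bookkeeping on top of \Cref{thm:ar-gapped}. Oracularization contributes constant factors to question/answer length, sampler/decider runtime, and adds at most two levels to the conditional linear sampler. Parallel repetition multiplies question length, answer length, sampler runtime, and decider runtime by $k(n)$, adds constant-sized additional structure to the sampler (another two levels, bringing the total to $\max(\ell+4, 7)$), and increases description lengths by an additive constant. The main delicate point, and the one that demanded the use of the \cite{DSV15} rather than \cite{BVY17} parallel repetition theorem, is that $k(n)$ must be controlled solely in terms of $\log t_d(n) + \log |D|$, independently of the answer length $a^{ans}(n)$ of the game to which repetition is applied; this is exactly what the projection-game analysis of \cite{DSV15} provides, since the polynomial factor $1/\eps(n)^c$ inherits the $\poly(\log t_d(n) + \log |D|)$ scaling from the second term of $\delta(\eps,n)$ in \Cref{thm:ar-gapped}.
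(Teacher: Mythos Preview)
Your proposal is essentially correct and follows the same strategy as the paper: apply \Cref{thm:ar-gapped}, then amplify using the \cite{DSV15} projection-game parallel repetition with $k(n) = \poly(\log t_d(n) + \log |D|)$ repetitions, chosen exactly as in \Cref{thm:intro-gap}. You also correctly identify and explain why \cite{BVY17} cannot be used here.

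One small discrepancy: the paper applies \cite{DSV15} parallel repetition \emph{directly} to $G^{ans}$, without inserting an additional oracularization step. Your inserted oracularization is harmless conceptually (and mirrors the structure of \Cref{thm:intro-gap}), but it makes your level accounting inconsistent: starting from $\max(\ell+2,5)$ after answer reduction, adding two levels for oracularization and two more for parallel repetition would yield $\max(\ell+6,9)$, not the $\max(\ell+4,7)$ you claim and the theorem states. The paper's count is $\max(\ell+2,5)+2 = \max(\ell+4,7)$, adding only the two levels from parallel repetition (\cite[Theorem~11.4]{JNVWY20}). This matters downstream in \Cref{thm:gapped-compress}, where the output sampler must have at most $9$ levels; with your extra step the bound would be $11$, exceeding the $\ell \leq 10$ hypothesis needed for the next application of \Cref{thm:intro-gap}. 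So either drop the extra oracularization (as the paper does), or argue that it can be absorbed without increasing the level count.
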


\begin{proof}
This theorem is obtained by applying the parallel repetition of \cite{DSV15} to \Cref{thm:ar-gapped}. Specifically, given the triple $(S^{ans}, D^{ans})$ which is output by \Cref{thm:ar-gapped}, let $\eps \geq 0$ be such that $\omega_q(G^{ans}_n) \leq 1-\eps$. We apply the parallel repetition of \cite{DSV15} to this game with $k$, the number of repetitions, set as it is in the theorem statement. This has the straightforward effect on question and answer length and sampler and decider runtime of multiplying all these quantities by $k$. In order to obtain the claimed soundness bound, note that \cite{DSV15} gives
\begin{align}
        \omega_q((G_n^{ans})^{\ot k})
        &\leq (1 - C \cdot (1 -
          \omega_q(G^{ans}_n))^c)^{k/2}\\
        &\leq ( 1 - C\cdot (\eps)^c)^{k/2}        \\
        &\leq \exp\left(-C(\eps)^c \cdot \frac{k}{2} \right), \label{eq:anspr-general}
\end{align}
where $C,c$ are universal constants.

As in the introspection case, we need to deal with the fact that answer reduction is not gap-preserving. Rather, it has the
      following soundness guarantee: if $\omega_q(G_n^{ans}) \geq 1 -\eps$,
      then there exist constants $a > 0, 0 < b < 1$ such that
      \begin{equation}
        \omega_q(G_n) \geq 1 - \delta(\eps, n), \qquad 
      \delta(\eps, n) = a((\log t_d(n) + \log |D|)^a \eps^b + (
      \log t_d(n) + \log |D|)^{-100b}).\end{equation}
      Let $\eps(n)$ be a function such that $\delta(\eps(n),n) < \frac{1}{2}$ for all $n \geq 2$. We may assume WLOG that such an $\eps(n)$ always exists by padding the the runtime of $D$ so that $\delta(0,2) < 1/2$. 
      
      Note that $\eps(n)$ can be chosen such that $1/\eps(n) \leq \poly(\log t_d(N) + \log |D|)$. We would like to find a $k$
      such that, if $\omega_q(G_n^{ans}) < 1 - \eps(n)$, then $\omega_q\left((G_n^{ans})^{\otimes k}\right) < \frac{1}{2}$; this would imply that, if $\omega_q\left((G_n^{ans})^{\otimes k}\right) \geq \frac{1}{2}$, then $\omega_q(G_n^{ans}) \geq 1 - \eps(n)$, which in turn implies that $\omega_q(G_n) \geq 1 - \delta(\eps(n), n) \geq \frac{1}{2}$. Using \Cref{eq:anspr-general}, we see that if we choose $k(n)$ such that
      \begin{equation} k(n) \geq \frac{2 \log 2}{C} \eps(n)^{-c} =
        \poly(\log t_d(n) + \log |D|), \end{equation}
      then $\omega_q((G_n^{ansPR})^{\ot k}) \leq 1/2$ as desired. This is precisely the setting of $k(n)$ from the theorem statement.
      

      It remains to show two things: the number of levels of the sampler, and the  efficient computability. For the former, it is shown in \cite[Theorem~11.4]{JNVWY20} that if the answer-reduced sampler has $\ell'$ levels, then the parallel-repeated sampler has $\ell' + 2$ levels, so the conclusion follows from the bound on the number of levels in \Cref{thm:ar-gapped}. For the efficient computability, this follows because the description of $G^{ansPR}$ consists of a description of $G^{ans}$ together with the code to generate the parallel repetitions, which has size $O(1)$.

\end{proof}
\subsubsection{Gapped compression}
By composing question reduction (\Cref{thm:intro-gap}) and parallel-repeated answer reduction (\Cref{thm:ar-gapped-pr}), one obtains a gapped
compression theorem. We state the bounds
that we obtain below.
\begin{theorem}\label{thm:gapped-compress}
  There exists a Turing machine $\mathrm{GappedCompress}$ with the following properties. Fix a universal constant $q^*$. Let $G = (S,D)$ be a game family with complexity bounds $q(n), a(n),
t_s(n), t_d(n)$ respectively, such that $q(n) > q^* \: \forall n \in \mathbb{N}$. Then on input $(S,D)$, $\mathrm{GappedCompress}$ returns a game
family $G' = (S',D')$
with parameters $q'(n), a'(n), t_s'(n), t_d'(n)$ such that 
\begin{enumerate}
    \item (\textbf{Question length:}) The question length of $G'_n$ is
      \[q'(n) = \poly(\log q(n) \cdot t_s^{PB}(q(n)),\:\: \log t_d(n) ,\:\: \log t_d^{PB}(q(n)),\:\: \log |D|,\:\: \log\log n).\]
    \item (\textbf{Answer length:}) The answer length of $G'_n$ is
      \[a'(n) = \poly(\log\log q(n) ,\:\: \log t_d(n) , \:\: \log t_d^{PB}(q(n)),\:\: \log |D|,\:\: \log\log n).\]
    \item (\textbf{Sampler:}) The sampler runtime is
    \[t_s'(n) = \poly(\log q(n) \cdot t_s^{PB}(q(n)),\:\:  t_d(n), \:\: t_d^{PB}(q(n)),\:\: |D|,\:\: \log n).\]
    If $S$ is $\ell$-level conditionally linear with $\ell \leq 10$, then $S'$ is conditionally linear with $\ell \leq 9$.
    \item (\textbf{Decider runtime:}) The decider runtime is
    \[t_d'(n) = \poly(\log q(n) \cdot t_s^{PB}(q(n)),\:\:  t_d(n) ,\:\: t_d^{PB}(q(n)),\:\: |D|,\:\: \log n).\]
    \item (\textbf{Completeness:}) If there is an oracularizable
      finite-dimensional synchronous
      strategy for $G_n$ achieving value $1$, there exists an oracularizable
      finite-dimensional synchronous
      strategy for $G^{ansPR}_n$ achieving value $1$.
    \item (\textbf{Soundness:}) If $\omega_q(G_n) < \frac{1}{2}$, then
      $\omega_q(G'_n) < \frac{1}{2}$.
      \item (\textbf{Efficient computability:}) $\mathrm{GappedCompress}$ runs in time
      $O(|D| + |S|)$. Moreover, $|D'| = |D| + O(1)$, and
      $|S'| = |S| + O(1)$. 
\end{enumerate}

\end{theorem}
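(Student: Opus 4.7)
The plan is to obtain $\mathrm{GappedCompress}$ by simply composing the question reduction of \Cref{thm:intro-gap} with the (parallel-repeated) answer reduction of \Cref{thm:ar-gapped-pr}. That is, on input $(S, D)$, $\mathrm{GappedCompress}$ first runs $\mathrm{GappedIntro}$ to obtain $(S^{intro}, D^{intro})$ defining a game family $G^{intro}$, and then runs $\mathrm{GappedAnsPR}$ on $(S^{intro}, D^{intro})$ to obtain $(S', D')$. The completeness, soundness, and efficient computability conclusions of \Cref{thm:gapped-compress} are then immediate: value-$1$ oracularizable synchronous strategies are preserved through both transformations (so completeness holds), the $1/2$-soundness guarantees compose (so if $\omega_q(G_n) \leq 1/2$, then $\omega_q(G^{intro}_n) \leq 1/2$ and hence $\omega_q(G'_n) \leq 1/2$), and description-length/runtime bounds of the form $|D'| = |D| + O(1)$, $|S'| = |S| + O(1)$ follow since both sub-procedures have this property.

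The substantive work is tracking the parameters. After applying \Cref{thm:intro-gap}, we have $q^{intro}(n) = O(\log^\beta q(n))$, $a^{intro}(n) = O((a(n) + q(n)) \cdot \poly\log q(n))$, $t_s^{intro}(n) = O(\poly\log q(n) \cdot t_s^{PB}(q(n)))$, $t_d^{intro}(n) = (t_d(n) + t_d^{PB}(q(n)) + O(\log n)) \cdot \poly\log q(n)$, and $|D^{intro}| = |D| + O(1)$; moreover $S^{intro}$ is $5$-level conditionally linear. Feeding this into \Cref{thm:ar-gapped-pr}, the number of parallel repetitions needed is $k(n) = \poly(\log t_d^{intro}(n) + \log |D^{intro}|) = \poly(\log t_d(n), \log t_d^{PB}(q(n)), \log\log n, \log\log q(n), \log |D|)$, which collapses into the claimed polylog expressions. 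The question length after answer reduction is dominated by the sampler runtime of the input plus a polylog blowup from the PCP, yielding $q'(n) = \poly(\log q(n) \cdot t_s^{PB}(q(n)), \log t_d(n), \log t_d^{PB}(q(n)), \log |D|, \log \log n)$; the answer length is $k(n) \cdot \poly(\log t_d^{intro}(n), \log |D^{intro}|)$, yielding the stated $a'(n)$; and the sampler and decider runtimes become $k(n) \cdot \poly(t_s^{intro}(n), t_d^{intro}(n), |D^{intro}|)$, which collapses into the stated $\poly$ expression in $\log q(n) \cdot t_s^{PB}(q(n)),\, t_d(n),\, t_d^{PB}(q(n)),\, |D|,\, \log n$.

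For the conditional-linearity bound: $S^{intro}$ is $5$-level conditionally linear, and applying \Cref{thm:ar-gapped-pr} (whose output is $\max(\ell + 4, 7)$-level when the input is $\ell$-level) yields an $S'$ that is $\max(5+4, 7) = 9$-level conditionally linear, matching the claim $\ell' \leq 9$. The hypothesis $q(n) > q^*$ for the universal constant $q^*$ from \Cref{thm:intro-gap} is needed precisely so that $\mathrm{GappedIntro}$'s soundness function $\delta(\eps, n)$ indeed lies below $1/2$ for an appropriate $\eps(n)$; we pass this hypothesis through unchanged.

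The only real obstacle here is bookkeeping: verifying that each parameter in the composed output collapses into the stated polylogs without picking up hidden polynomial dependence on $q(n)$ or $n$. The dangerous term to watch is $t_d^{intro}(n)$, which carries a factor of $t_d(n) \cdot \poly\log q(n)$; since we only ever take logarithms of this when computing $k(n)$, $a'(n)$, and the PCP blowup in $q'(n)$, these dependencies collapse to $\log t_d(n), \log t_d^{PB}(q(n)), \log\log q(n)$, and $\log\log n$, as required. The only place $t_s^{PB}(q(n))$ appears polynomially (rather than logarithmically) is in $q'(n)$, $t_s'(n)$, and $t_d'(n)$, which is consistent with the theorem statement.
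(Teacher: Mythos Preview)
Your proposal is correct and follows essentially the same approach as the paper: compose $\mathrm{GappedIntro}$ (\Cref{thm:intro-gap}) with $\mathrm{GappedAnsPR}$ (\Cref{thm:ar-gapped-pr}), and then track the parameters through both transformations. The paper's proof is little more than the composition statement together with the same parameter bookkeeping you sketch, including the computation of $k(n)$ in terms of $\log t_d^{intro}(n)$ and the observation that this collapses into the stated polylogs; your derivation of the conditional-linearity level $\max(5+4,7)=9$ also matches the paper's implicit calculation.
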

\begin{proof}
  Compose \Cref{thm:intro-gap} and \Cref{thm:ar-gapped-pr}. The parameters behave as follows.
  \begin{itemize}
      \item \textbf{(Question length:)}
      Recall that $k(n) = \poly(\log t_d^{intro}(n) + \log |D^{intro}|) = \poly(\log t_d(n) + \log t_d^{PB}(q(n)) + \log\log (n) +  \log\log q(n) + \log |D|)$. The question length is 
      \begin{align*}
      q'(n) &= k(n) \cdot \poly(t_s^{intro}(n), \log(t_d^{intro}(n)), \log |D^{intro}|) \\
      &= \poly(\log q(n) \cdot t_s^{PB}(q(n)), \log t_d(n), \log t_d^{PB}(q(n)), \log\log(n), \log |D|).
      \end{align*}
      \item \textbf{(Answer length:)} The answer length is
      \begin{align*}
          a'(n) &= k(n) \cdot \poly(\log(t_d^{intro}(n)), \log(|D^{intro}|) )\\
          &= \poly(\log t_d(n), \log t_d^{PB}(q(n)), \log \log(n), \log \log q(n), \log |D|).
      \end{align*}
      
      \item \textbf{(Sampler runtime:)} The sampler runtime is
      \begin{align*}
          t_s'(n) &= k(n) \cdot \poly(t_s^{intro}(n),  t_d^{intro}(n), |D|) \\
          &= \poly(\log q(n) \cdot t_s^{PB}(q(n)), t_d(n), t_d^{PB}(q(n)), \log q(n), \log(n), |D|).
      \end{align*}
      
      \item \textbf{(Decider runtime:)} The decider runtime is
      \begin{align*}
          t_d'(n) &= k(n) \cdot \poly(t_s^{intro}(n), t_d^{intro}(n), |D^{intro}|) \\
          &= \poly(\log q(n) \cdot t_s^{PB}(q(n)), t_d(n), t_d^{PB}(q(n)), \log q(n), \log(n), |D|).
      \end{align*}
  \end{itemize}
  The remaining properties follow directly from composing \Cref{thm:intro-gap} and \Cref{thm:ar-gapped-pr}. For reference, the parameters for $G^{intro}$ are given in the table below.

  \begin{center}
    \begin{tabular}{ c | c }
    $G$ & $G^{intro}$  \\ \hline
    $q(n)$ & $\poly\log q(n)$ \\
     $a(n)$  & $(a(n) + q(n) + O(1)) \cdot \poly\log q(n)$  \\
     $t_s(n)$ & $\poly\log q(n) \cdot t_s^{PB}(q(n))$  \\
     $t_d(n)$ & $(t_d(n) + t_d^{PB}(q(n)) + O(\log(n)) \cdot \poly\log (q(n))$  \\
     $|D|$ & $|D| +O(1)$ \\
     $|S|$ & $|S| + O(1)$ 
\end{tabular}
\end{center}
\end{proof}

\subsection{Gapped hypercompression}

\begin{theorem}
  \label{thm:gapped-hyper}
  There exists a Turing machine $\mathrm{GappedHypercompress}$ that, given as input
  the description of a tuple of Turing machines $G^\hf = (S^\hf,D^\hf)$, outputs a
  description of a tuple $G^{\hp} = (S^{\hp},D^{\hp})$ with the
  following properties. Let $G^\hf = (S^{\hf},D^{\hf})$ be an $\MIP^*[ q^{\hf}(n), a^{\hf}(n),
  t_s^{\hf}(n), t^{\hf}_d(n)]$
  protocol. Then $G^{\hp}$ is an $\MIP^*[Q_0, a^{\hp}(n),
  t_s^{\hp}(n), t_d^{\hp}(n)]$ protocol with
  \begin{enumerate}
  \item \textbf{Question length:} $Q_0$ a universal constant such that, for all $\ell > Q_0$, $\max(C \log^\beta(\ell), C \log^\beta(q^*)) < \ell$, where $C, \beta, q^*$ are the universal constants that appear in the question length quoted in \Cref{thm:intro-gap}.
  \item \textbf{Answer length:} $a^{\hp}(n) = a^{\hf}(n) \cdot \poly\log(n)$. 
  \item \textbf{Sampler runtime:} $t_s^{\hp} = O(1)$.
  \item \textbf{Decider runtime:} $t^{\hp}(n) = \poly(n) + O((\log \log
\log q^{\hf}(n))^2 \cdot (|D^{\hf}| + |S^{\hf}|))$
  \item \textbf{Completeness:} For all $n$, if $G^{\hf}_n$ has a value-$1$
    oracularizable finite-dimensional synchronous strategy $\scr{S}^{\hf}$, then
    $G^{\hp}_n$ has a value-$1$ oracularizable finite-dimensional synchronous strategy
    $\scr{S}^{\hp}$. 
  \item \textbf{ Soundness:} For all $n$, if $\omega^s_q(G^{\hf}_n)
    \leq 1/2$, then $\omega^s_q(G^{\hp}_n) \leq 1/2$ as well.
  \end{enumerate}
\end{theorem}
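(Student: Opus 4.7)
The plan is to mirror the proof of the gapless hypercompression theorem (\Cref{thm:gapless-hyper}), adapted to the gapped setting. On input a description of $(S^\hf, D^\hf)$, the Turing machine $\mathrm{GappedHypercompress}$ proceeds in two phases. First, it applies one round of the gapped compression theorem (\Cref{thm:gapped-compress}) to obtain an intermediate protocol $G^{(0)} = (S^{(0)}, D^{(0)})$. Since all of $q^\hf(n), a^\hf(n), t_s^\hf(n), t_d^\hf(n) = \poly(n)$, the output of this step has question and answer length $\poly\log(n)$. Second, it iteratively applies the gapped question reduction theorem (\Cref{thm:intro-gap}) via $\mathrm{GappedIntro}$ to produce a sequence $G^{(i)} = (S^{(i)}, D^{(i)})$ for $i \geq 1$, until the question length $q^{(i)}$ falls below the universal constant $Q_0$ specified in the theorem statement. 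The final sampler $S^\hp$ ignores its index input and runs the universal question-reduction sampler of \Cref{thm:intro-gap} at the fixed question length, taking constant time; the final decider $D^\hp$ on input $(n, x, y, a, b)$ executes the loop to build up $D^{(k)}$ and then simulates it on $(n, x, y, a, b)$. The pseudocode is exactly analogous to \Cref{alg:hypercomp-pseudo} from the gapless case.

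For the question length, the recurrence $q^{(i+1)} = \max(C \log^\beta q^{(i)}, C \log^\beta q^*)$ makes $q^{(i+1)}$ polylogarithmic in $q^{(i)}$ and terminates after $k = O(\log^* q^{(0)}) = O(\log\log\log q^\hf(n))$ iterations. For the answer length, \Cref{thm:intro-gap} gives $a^{(i+1)} \leq (a^{(i)} + q^{(i)} + O(1)) \cdot \poly\log q^{(i)}$. Letting $M := \poly\log q^{(0)} = \poly\log\log(n)$ upper-bound the multiplicative factor at every step, unrolling the recurrence yields $a^{(k)} \leq (a^{(0)} + k \cdot q^{(0)}) \cdot M^k = \poly\log(n) \cdot (\poly\log\log n)^{O(\log^* n)} = \poly\log(n)$, which fits inside the claimed bound $a^\hf(n) \cdot \poly\log(n)$. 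The decider runtime is dominated by a single invocation of $D^{(0)}$ (taking time $\poly(n)$), plus the cost of computing $D^{(k)}$ by running $\mathrm{GappedIntro}$ $k$ times; by the efficient-computability guarantee of \Cref{thm:intro-gap}, each application adds $O(1)$ to the description lengths of the current decider and sampler, contributing a total overhead of $O(k \cdot (|D^\hf| + |S^\hf|) + k^2) = O((\log\log\log q^\hf(n))^2 \cdot (|D^\hf| + |S^\hf|))$.

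The main obstacle is ensuring that the sampler remains a valid normal-form verifier throughout the iteration, and in particular that the number of conditional-linear levels does not blow up across the $O(\log^* n)$ iterations. Fortunately, \Cref{thm:intro-gap} explicitly states that its output sampler is a $5$-level conditional-linear sampler regardless of the input level (provided the input is at most $10$-level), so after a single application of question reduction the level count stabilizes at $5$ and all subsequent applications preserve it. Completeness and soundness then follow by chaining the guarantees of \Cref{thm:gapped-compress} with $k$ applications of \Cref{thm:intro-gap}: both preserve value-$1$ oracularizable finite-dimensional synchronous strategies in the completeness case, and both map value $\leq 1/2$ to value $\leq 1/2$ in the soundness case. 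Efficient computability of $\mathrm{GappedHypercompress}$ is immediate from the fact that each invocation of the underlying compression and question-reduction procedures outputs descriptions that grow by only additive constants from their inputs.
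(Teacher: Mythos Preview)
Your approach is correct in spirit and very close to the paper's, with one notable difference: the paper does \emph{not} apply a preliminary round of \textnormal{GappedCompress} (\Cref{thm:gapped-compress}). Instead, the decider in the paper's construction iterates $\mathrm{GappedIntro}$ directly on the input pair $(S^\hf, D^\hf)$, exactly as in \Cref{alg:hypercomp-pseudo} but with the gapped recurrence $\ell_{i+1} = \lceil C\log^\beta \ell_i\rceil$. The paper then bounds the number of iterations by $O(\log\log\log q^\hf(n))$ via an explicit induction (noting that the true growth is closer to $\log^*$, as you observe), and tracks the answer-length recurrence $a_{i+1} \le A(a_i + \ell_i + O(1))\cdot \log^\alpha \ell_i$ directly to obtain $a^\hp(n) = a^\hf(n)\cdot\poly\log n$.

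Your extra preliminary round through \textnormal{GappedCompress} is harmless for correctness and for the qualitative conclusion (since the intermediate game still has $\poly\log n$ question and answer length, and the level count of the conditional-linear sampler is reset to at most $9$), but it makes your iteration count depend on $q^{(0)}(n)$---which involves $t_d^\hf$ and $|D^\hf|$---rather than on $q^\hf(n)$ alone, so the second term in the decider-runtime bound you obtain does not literally match the $(\log\log\log q^\hf(n))^2$ form stated in the theorem. In the applications this makes no difference (everything is absorbed into $\poly(n)$), but if you want to reproduce the theorem's exact parameter dependence you should drop the initial compression and iterate $\mathrm{GappedIntro}$ on $(S^\hf,D^\hf)$ directly, as the paper does. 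Your remaining analysis---the termination argument, the answer-length unrolling, the sampler-level stabilization at $5$, and the completeness/soundness chaining---is all sound and matches the paper.
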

\begin{proof}
  The protocol $G^{\hp} = (S^{\hp}, D^{\hp})$ consists of 
  a sampler and a decider as follows.
  \begin{enumerate}
    \item \textbf{Sampler:}
    The sampler is the same as the sampler $S^{intro}$ from
    \Cref{thm:intro-gap} for question
    length $Q_0$.
  \item \textbf{Decider:}
    The decider executes the algorithm in \Cref{alg:hypercomp-pseudo-gapped}
  \end{enumerate}

  \vspace{10pt}
  \IncMargin{1em}
  \begin{algorithm}[H]
    \DontPrintSemicolon

    \textbf{Input}: $n, x, y, a, b$
    
    Initialize $i := 0$, $\ell_0 := q^{\hf}(n)$, $D_{comp} = D^{\hf}$, $S_{comp} =
    S^{\hf}$.

    \While{$\ell_i > Q_0$}{
      $D_{comp}, S_{comp} := \mathrm{GappedIntro}(D_{comp},
      S_{comp})$

      $\ell_{i+1} := \lceil C \log^{\beta} \ell \rceil$
      $i := i+1$
    }
    Run $D_{comp}(n,x,y,a,b)$. Accept if it accepts, otherwise reject.
    \caption{The hypercompressed game.}\label{alg:hypercomp-pseudo-gapped}
  \end{algorithm}\DecMargin{1em}
  \vspace{10pt}
\begin{enumerate}
\item   \textbf{Question length:} For any $G^{\hf}$, the hypercompressed game $G^{\hp}_n$ has question length at most $Q_0$.

 This follows from the definition of $G^{\hp}_n$ (if the question length were larger than $Q_0$, we could apply $Intro$ another time and strictly reduce the question length).

\item \textbf{Runtime: } First, we claim that for any game $G^{\hf}$ with question length $q^{\hf}(n)$,
  the hypercompressed game $G^{\hp}_n$ is obtained by applying
  $\mathrm{GappedIntro}$ to $G^{\hf}$ at most $m = O(\log \log \log(q^{\hf}(n)))$
  times.  To show this, let $f(\ell) = C \log^{\beta} \ell$. We know
  from the previous item that the question length of $G^{\hp}_n$ is
  at most $Q_0$.   
Let $g(\ell)$ be the number of iterations of $f$ required to reach $Q_0$. We can define $g$ by the following recurrence:
\begin{equation}
    g(\ell) = \begin{cases} 
    0 & \ell \leq Q_0 \\
    1 + g(f(\ell)) & n > Q_0.
    \end{cases}
\end{equation}
We claim that $g(\ell) \leq G \log \log \log(\ell)$ for some constant
$G$, as long as $Q_0$ is chosen to be sufficiently large. (In fact this bound is
not tight, and $g$ grows much more slowly, but this will suffice for
our purposes.) We prove this by induction. The base case is $\ell \leq
Q_0$, in which case the bound holds. For the inductive step, write
\begin{align} 
g(\ell) &= 1 + g(f(\ell)) \\
&= 1 + g(C \log^{\beta}(\ell)) \\
&\leq 1 + G \log\log\log(C \log^{\beta}(\ell)) \\
&= 1 + G \log \log (\log C + \beta \log \log \ell) \\
&\leq 1 + G \log \log (2\beta \log \log \ell) \\
&= 1 + G \log\log 2\beta + G \log \log \log \log \ell \\
        &\leq 3 G \log \log \log \log \ell \\
  &\leq G \log \log \log \ell.
\end{align}
Here, in the first inequality, we have used the inductive hypothesis
plus the assumption that the choice of $Q_0$ is such that for all
$\ell > Q_0$, $C \log^\beta \ell < \ell$, and in the second inequality
we assume that $Q_0$ is large enough that $\beta \log \log Q_0 \geq
\log C$, and in the third inequality, we assume that $Q_0$ is
sufficiently large that both $1$ and $G \log \log 2\beta$ are less
than or equal to $G \log \log \log \log Q_0$, and in the fourth
inequality we assume that $\log \log \log \log Q_0 \leq \frac{1}{3}
\log \log \log Q_0$.  

This bounds the number of iterations of the loop. Each iteration calls
the procedure $\mathrm{GappedIntro}$ which takes time $O(|D_i| + |S_i|) = O(|D^{\hf}| + (i+1)|S^{\hf}|)$. The total runtime of the loop is
thus bounded by $O(m |D^{\hf}| + m^2|S^{\hf}|) = O( (\log\log\log q^{\hf}(n))^2
\cdot (|D^{\hf}| + |S^{\hf}|))$.

Finally, we must determine the runtime of the decider that is invoked
after the loop is over. This is polynomial in the answer length of the
final game, which will be shown in the following item to be
$O(\poly\log(n) \cdot a^{\hf}(n)) = \poly(n)$. So in total, we get a $\poly(n)$ runtime
for the final decider, and thus a runtime of $\poly(n) + O((\log \log
\log q^{\hf}(n))^2 \cdot (|D^{\hf}| + |S^{\hf}|))$ for the decider in total.
\item \textbf{Answer length:}
  In the $i$th iteration of the loop, the answer length of the game
  increases from $a_i(n)$ to $a_{i+1}(n)$ as
  described by the following formula:
  \begin{equation}
    a_{i+1}(n) = A(a_i(n) + \ell_i(n) + O(1)) \cdot \log^{\alpha} \ell_i(n),
  \end{equation}
  where $A, \alpha$ are constants. We can bound this, somewhat
  crudely, by
  \begin{equation}
    a_{i+1}(n) \leq A' a_i(n) \cdot \log^{\alpha} n,
  \end{equation}
  for some constant $A'$, using the bounds that $\ell_i(n) \leq a_i(n)$ and $\ell_i(n) \leq \poly(n)$
  for all $i$. To get a slightly sharper bound, we can find a constant $A'$ such that
  \begin{gather}
  a_1 \leq A' a(n) \cdot \log^\alpha n, \\
  a_{i+1} \leq A' a_i(n) \log^\alpha(\log^\beta n) \quad \text{for all $i \geq 1$}.
  \end{gather}
  
  Note that the right-hand-side of the latter equation is equal to $A'
  a_i(n) \cdot \beta^{\alpha} (\log\log n)^\alpha$. For notational simplicity, we will absorb $\beta$ into $A'$.


After $m = g(\ell) = G \log\log\log n$ repetitions, then, the answer size is
\begin{align}
    a_m &\leq (A' (\log\log n)^\alpha)^m (A' a^{\hf}(n) \cdot \log^\alpha n) \\
    &= 2^{(\log A' + \alpha \log\log\log n) \cdot G \log\log\log n} \cdot (A' a^{\hf}(n) \cdot \log^\alpha n) \\
    &= O\left( 2^{A'' (\log\log\log n)^2} \cdot a^{\hf}(n) \log^\alpha n \right) \\
    &= O\left( 2^{\log\log n} \cdot a^{\hf}(n) \log^\alpha n \right) \\
    &= O\left( \log n \cdot \log^\alpha n \cdot a^{\hf}(n) \right) \\
    &= \poly\log n \cdot a^{\hf}(n).
\end{align}


\item \textbf{Completeness and soundness:} these
  follow by repeatedly applying the corresponding items from \Cref{thm:intro-gap}.
\end{enumerate}
\end{proof}

\newcommand{\lhalt}{L_{\textsc{Halt}}}
\begin{definition}
  The language $\lhalt$ consists of all strings $x$ such that $x$ is a
  description of a Turing machine $M_x$ that halts when run on an
  empty input tape. This language is complete for $\RE$.
\end{definition}

\begin{theorem}
\label{thm:mip*-main}
There is an $n$-indexed $\MIP^*[\poly(n), \poly(n), \poly(n), \poly(n)]$ protocol $(GenG^{\fullnote}, N^{\fullnote})$ to decide $\lhalt$.
\end{theorem}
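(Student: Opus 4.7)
The plan is to specialize the $\MIP^* = \RE$ theorem of~\cite{JNVWY20} to the $n$-indexed protocol formalism of \Cref{def:nindexed-mipstar}. This is a repackaging exercise closely parallel to the first half of the proof of \Cref{thm:mny21-main}, but simpler: because $\lhalt$ is a $\Sigma_1$-complete language rather than a $\Pi_2$-complete one, we do not need to invoke \textsc{SuperCompress} or any subsequent compression step, and can obtain the desired $n$-indexed protocol directly from the \textsc{HaltingGame} construction of~\cite{JNVWY20}.

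First I would recall from \cite[Theorem 12.6]{JNVWY20} that there is a uniform polynomial-time procedure which, given the description of a Turing machine $M$, produces a normal form verifier $(S^M, D^M)$ in the sense of \Cref{def:gapped-game-family}, specifying an $m$-indexed game family $G^M$ with the following properties: each $G^M_m$ has question length, answer length, sampler runtime, and decider runtime bounded by $(m|M|)^{\lambda_{Halt}}$ for a universal constant $\lambda_{Halt}$; and for every sufficiently large $m \geq m_0(M)$, where $m_0(M)$ is a polynomially computable function of $|M|$, we have $\omega_q(G^M_m) = 1$ when $M$ halts on the empty input and $\omega_q(G^M_m) \leq 1/2$ otherwise.

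Given this, I would define $GenG^{\fullnote}$ as the Turing machine that, on input $z$, parses $z$ as the description of a Turing machine $M_z$ and outputs in time $\poly(|z|)$ a description of the sampler-decider pair $(S^{M_z}, D^{M_z})$ produced by the \cite{JNVWY20} construction, and define $N^{\fullnote}(z)$ to output $\max\{m_0(M_z), |z|\}$, which is polynomial in $|z|$. Completeness and soundness then follow immediately: if $z \in \lhalt$ then $M_z$ halts on empty input and so $\omega_q(G^z_{N(z)}) = 1$; if $z \notin \lhalt$ then $\omega_q(G^z_{N(z)}) \leq 1/2$. The polynomial complexity bounds follow because $n = N(z) = \poly(|z|)$ makes the quantity $(n|M_z|)^{\lambda_{Halt}}$ also polynomial in $n$, and the normal-form property of $(S^{M_z}, D^{M_z})$---including the convention that calling $S^{M_z}$ on the special input $\textsc{Dimension}$ reports the question length---is explicit in the \cite{JNVWY20} construction.

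There is no genuinely hard step: the substantive mathematical work is done in \cite{JNVWY20}, and the only real task is notational bookkeeping, namely verifying that their construction is uniform in $M$ and that its complexity parameters line up with \Cref{def:nindexed-mipstar}. These features are already implicit in the proof of \Cref{thm:mny21-main}, which extracts the same JNVWY20 halting game as the starting point for its treatment of $\ptl$.
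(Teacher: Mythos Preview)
Your proposal is correct and takes essentially the same approach as the paper: the paper's entire proof is the one-line citation ``This is the main result of \cite{JNVWY20},'' and your write-up simply unpacks that citation into the $n$-indexed formalism of \Cref{def:nindexed-mipstar}. The extra detail you supply is accurate and would be appropriate if a more self-contained treatment were desired.
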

\begin{proof}
This is the main result of \cite{JNVWY20}.
\end{proof}

\begin{theorem}
There is an $n$-indexed $\MIP^*[\poly\log(n), \poly\log(n), \poly(n), \poly(n)]$ protocol $(GenG^{\hf},N^{\hf})$ to decide $\lhalt$.
\label{thm:mip*-polylog}
\end{theorem}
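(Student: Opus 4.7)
The plan is to obtain this result by applying a single round of gapped compression (\Cref{thm:gapped-compress}) to the $\MIP^*$ protocol for $\lhalt$ given by \Cref{thm:mip*-main}. Concretely, given input $z$, the new machine $GenG^{\hf}(z)$ will first invoke $GenG^{\fullnote}(z)$ to obtain a description of the JNVWY20 normal-form verifier $(S^{\fullnote,z}, D^{\fullnote,z})$, then call $\mathrm{GappedCompress}$ (uniformly, independently of $z$) on that description to output $(S^{\hf,z}, D^{\hf,z})$. The index machine $N^{\hf}$ is simply $N^{\fullnote}$, unchanged. Since $\mathrm{GappedCompress}$ runs in time $O(|D^{\fullnote,z}| + |S^{\fullnote,z}|) = \poly(|z|)$ and $GenG^{\fullnote}$ runs in polynomial time, $GenG^{\hf}$ also runs in polynomial time, as required by \Cref{def:nindexed-mipstar}.

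Next I would verify that the preconditions of \Cref{thm:gapped-compress} are met by the JNVWY20 protocol. By the construction in \cite{JNVWY20}, the sampler $S^{\fullnote}$ is a normal-form $\ell$-level conditional linear sampler with $\ell \leq 10$, and the question and runtime bounds are polynomial in $n$. The only additional hypothesis of \Cref{thm:gapped-compress} is that $q^{\fullnote}(n) > q^*$ for all $n$, where $q^*$ is the universal constant appearing in the question-length bound of \Cref{thm:intro-gap}; this can be enforced without loss of generality by padding the questions (e.g., appending a block of $q^*$ unused bits), which only inflates the parameters by an additive constant and does not affect normal form or conditional linearity.

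Plugging into the parameter bounds of \Cref{thm:gapped-compress} with $q^{\fullnote}(n), a^{\fullnote}(n), t_s^{\fullnote}(n), t_d^{\fullnote}(n) = \poly(n)$ and $|D^{\fullnote}|, |S^{\fullnote}| = \poly(|z|) = \poly(n)$, we use $t_s^{PB}(q^{\fullnote}(n)) = \poly\log(q^{\fullnote}(n)) = \poly\log(n)$ and $t_d^{PB}(q^{\fullnote}(n)) = O(q^{\fullnote}(n)) = \poly(n)$. This yields
\begin{align*}
q^{\hf}(n) &= \poly(\log q^{\fullnote}(n) \cdot t_s^{PB}(q^{\fullnote}(n)),\, \log t_d^{\fullnote}(n),\, \log t_d^{PB}(q^{\fullnote}(n)),\, \log|D^{\fullnote}|,\, \log\log n) = \poly\log(n),\\
a^{\hf}(n) &= \poly(\log\log q^{\fullnote}(n),\, \log t_d^{\fullnote}(n),\, \log t_d^{PB}(q^{\fullnote}(n)),\, \log|D^{\fullnote}|,\, \log\log n) = \poly\log(n),
\end{align*}
while the sampler and decider runtimes both become $\poly(n)$. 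Since the $\poly\log(n)$ bound on $a^{\hf}(n)$ must be at most the decider runtime, and since GappedCompress' decider invokes the compressed decider on at most polynomially many bits, the $\poly(n)$ decider runtime is sufficient to read its inputs.

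Finally, completeness and soundness transfer directly: the JNVWY20 protocol has a value-$1$ oracularizable finite-dimensional synchronous strategy in the YES case and value $\leq 1/2$ in the NO case, so by the completeness and soundness clauses of \Cref{thm:gapped-compress}, the compressed game $G^{\hf}_{N^{\hf}(z)}$ inherits value $1$ in the YES case and value $\leq 1/2$ in the NO case. The only mildly subtle point is that \Cref{thm:gapped-compress}'s completeness clause requires a \emph{value-$1$ oracularizable synchronous} witness strategy, not merely a strategy of value $\geq 2/3$; this is why we invoke the specific JNVWY20 protocol for $\lhalt$, which by construction satisfies perfect completeness via such a synchronous strategy. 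The main (minor) obstacle is just book-keeping: tracking the conditional linearity level and normal-form structure through the one application of compression, both of which are guaranteed by the corresponding clauses of \Cref{thm:intro-gap} and \Cref{thm:ar-gapped-pr}.
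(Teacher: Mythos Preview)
Your proposal is correct and takes essentially the same approach as the paper: apply one round of $\mathrm{GappedCompress}$ (\Cref{thm:gapped-compress}) to the protocol from \Cref{thm:mip*-main}, with $GenG^{\hf}$ defined by composing $GenG^{\fullnote}$ with $\mathrm{GappedCompress}$ and $N^{\hf} = N^{\fullnote}$, then substitute the $\poly(n)$ parameters and the Pauli Basis test bounds into the parameter formulas of \Cref{thm:gapped-compress}. Your write-up is in fact more careful than the paper's own proof in checking the preconditions (the $\ell \leq 10$ conditional linearity, the $q(n) > q^*$ padding, and the need for a value-$1$ oracularizable synchronous strategy in the completeness case).
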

\begin{proof}
We derive this by applying \Cref{thm:gapped-compress} to \Cref{thm:mip*-main}. In particular, the description of $GenG^\hf$ and $N^\hf$ is as follows:
\begin{itemize}
\item $GenG^\hf$ on input $z$ firstly runs $GenG^{\fullnote}(z)$ to obtain (the description of) a game family $(S^{z,\fullnote},D^{z,\fullnote})$. Then it applies the procedure GappedCompress from \Cref{thm:gapped-compress} to $(S^{z,\fullnote},D^{z,\fullnote})$ in order to obtain a game family $(S^{z,\hf},D^{z,\hf},Q^{z,\hf})$, and outputs $(S^{z,\hf},D^{z,\hf})$.
\item $N^\hf$ on input $z$ computes and outputs $N^{\fullnote}(z)$.
\end{itemize}

Note that $GenG^\hf$ runs in time $O(|D^{z,\fullnote}| + |S^{z,\fullnote}|) = \poly(|z|)$, because each of $|D^{z,\fullnote}|, |S^{z,\fullnote}|$ is upper-bounded by $\poly(|z|)$ (they were generated by $GenG^{\fullnote}$ in $\poly(|z|)$ time). $N^\hf$ runs in the time it takes to run $N^{\fullnote}$.

It remains to show that the parameter bounds are as stated: $q^\hf(n) = \poly\log(n), a^\hf(n) = \poly\log(n), t_s^\hf(n) = \poly(n), t_d^\hf(n) = \poly(n)$. This follows by noting that $q^{\fullnote}(n) = \poly(n), t_d^{\fullnote}(n) = \poly(n), |D^{\fullnote}| = \poly(n)$, in addition to $t_s^{PB}(q) = \poly\log(q), t_d^{PB}(q) = \poly(q)$ for all $q \in \mathbb{N}$, and substituting these quantities into \Cref{thm:gapped-compress}.

\end{proof}

\begin{theorem}
There is an $n$-indexed $\MIP^*[O(1), \poly\log(n), O(1), \poly(n)]$ protocol $(GenG^{\hp},N^{\hp})$ to decide $\lhalt$.
\end{theorem}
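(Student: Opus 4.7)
The plan is to apply the gapped hypercompression theorem (\Cref{thm:gapped-hyper}) directly to the $n$-indexed $\MIP^*[\poly\log(n), \poly\log(n), \poly(n), \poly(n)]$ protocol $(GenG^\hf, N^\hf)$ for $\lhalt$ provided by \Cref{thm:mip*-polylog}. Specifically, I would define $GenG^\hp$ to act on input $z$ by first running $GenG^\hf(z)$ to obtain (the description of) the game family $(S^{z,\hf}, D^{z,\hf})$, and then invoking the Turing machine $\mathrm{GappedHypercompress}$ from \Cref{thm:gapped-hyper} on $(S^{z,\hf}, D^{z,\hf})$ to produce a description of the game family $(S^{z,\hp}, D^{z,\hp})$, which it outputs. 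I would define $N^\hp(z) := N^\hf(z)$.

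The parameter bounds are a routine substitution. By \Cref{thm:mip*-polylog} we have $q^\hf(|z|), a^\hf(|z|) = \poly\log(|z|)$ and $t_s^\hf(|z|), t_d^\hf(|z|), |D^{z,\hf}|, |S^{z,\hf}| = \poly(|z|)$. Plugging these into the conclusions of \Cref{thm:gapped-hyper} yields
\begin{align*}
q^\hp(|z|) &= Q_0 = O(1), \\
a^\hp(|z|) &= a^\hf(|z|) \cdot \poly\log(|z|) = \poly\log(|z|), \\
t_s^\hp(|z|) &= O(1), \\
t_d^\hp(|z|) &= \poly(|z|) + O\bigl((\log\log\log q^\hf(|z|))^2 \cdot (|D^{z,\hf}| + |S^{z,\hf}|)\bigr) = \poly(|z|).
\end{align*}
Efficiency of $GenG^\hp$ follows because $GenG^\hf$ runs in time $\poly(|z|)$ and $\mathrm{GappedHypercompress}$ runs in time $O(|D^{z,\hf}| + |S^{z,\hf}|) = \poly(|z|)$ on its input.

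Completeness and soundness of the resulting protocol follow directly from the completeness and soundness clauses of \Cref{thm:gapped-hyper}: if $z \in \lhalt$, then the $N^\hf(z)$th game in the family output by $GenG^\hf(z)$ has a value-1 oracularizable finite-dimensional synchronous strategy (by the completeness guarantee of \Cref{thm:mip*-polylog}), so the corresponding hypercompressed game has a value-1 strategy as well; if $z \not\in \lhalt$, then the original game has value at most $1/2$, so the hypercompressed game has value at most $1/2$. The only point requiring care is confirming that the input to $\mathrm{GappedHypercompress}$ meets the hypotheses of \Cref{thm:gapped-hyper}; in particular, one must verify that the sampler produced by $GenG^\hf$ is a conditionally linear normal form sampler of bounded level, which follows by inspecting the construction in \Cref{thm:gapped-compress} used to prove \Cref{thm:mip*-polylog}. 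I do not anticipate any substantive obstacle beyond this bookkeeping step, since the hypercompression theorem has been stated in precisely the form needed for this composition.
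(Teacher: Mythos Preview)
Your proposal is correct and takes essentially the same approach as the paper, which simply says to apply \Cref{thm:gapped-hyper} to \Cref{thm:mip*-polylog} in the same fashion as the proof of \Cref{thm:mip*-polylog}. You have spelled out the parameter substitutions and the bookkeeping (efficient computability of $GenG^\hp$, the oracularizable synchronous strategy hypothesis, and the bounded-level conditional linearity of the sampler) in more detail than the paper does, but the argument is the same.
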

\begin{proof}
This follows in a similar fashion to the proof of \Cref{thm:mip*-polylog} by applying \Cref{thm:gapped-hyper} to \Cref{thm:mip*-polylog}.
\end{proof}

%
%
%

\section{Bounds from Kolmogorov complexity}
\label{sec:lower-bounds}
\paragraph{Describing the complexity of an $\MIP^*$ protocol}
Every $\MIP^*$ protocol (see Definition \ref{def:mip*-protocol} for a more complete definition of an $\MIP^*$ protocol) is specified by a pair of Turing machines $(S, D)$, called the sampler and decider, respectively. In this section, we define the following resource bounds for an $\MIP^*$ protocol, all of which are required to be upper bounded by polynomials in $n$ (the length of the input instance):
\begin{itemize}
    \item $r(n)$, the bit length of the random seed used by $S$ (more precisely, an upper bound on the number of bits of its random input which $S$ reads).
    \item $q(n)$, the bit length of a question to a single player (by padding we assume that both players always receive questions of the same bit length for all instances of size $n$). Note that, if the question distribution is free (and uniform), we can assume that $2q(n) = r(n)$, but this is not necessarily true when the question distribution is correlated.
    \item $a(n)$, the bit length of an answer from a single player (by padding again we assume that all answers have equal bit length).
    \item $t(n)$, an upper bound on both the runtime of the sampler $S$ given a random seed, and the runtime of the decider $D$ given a pair of questions and a pair of answers.
    \end{itemize}
      
\begin{remark}
Here, we define the parameters $r(n)$ and $q(n)$ slightly differently from the way that we define them in \Cref{sec:prelims} (in particular, in \Cref{def:gapped-game-family} and \Cref{def:nindexed-mipstar}). In \Cref{sec:prelims}, we identify $r(n)$ and $q(n)$ (taking $q(n)$ to be an upper bound both on the question length and the length of the random seed which $S$ reads), while in this section we will use independent parameters to track these two quantities. The reason for the difference is that identifying $q(n)$ and $r(n)$ makes for simpler notation when we are proving upper bounds, since all of the protocols we construct have similar $q(n)$ and $r(n)$; however, when we are proving lower bounds, we would like our model to be as general as possible, and so we would like to consider protocols where $r(n)$ may be much larger than $q(n)$.

We also merge $t_s(n)$ (sampler runtime) and $t_d(n)$ (decider runtime) into a single parameter $t(n)$ in this section, since it will not be important to any of our bounds to have a stricter bound on either runtime than a polynomial in $n$.
\end{remark}

\begin{lemma}
\label{lem:mip*-to-tm}
Suppose $L \in \MIP^*[r(n), q(n), a(n), t(n)]$. Then there is a constant $\delta$ such that $L \in \sf{DTIME}[h(n)] / 2^{g(n)}$, with
\[ h(n) = 2^{r(n) + 2(q(n) + a(n))} \cdot t(n), \quad g(n) = (2q(n)+\delta +1) \cdot 2^{2(q(n) + a(n))}.
\]

Alternatively, if $L \in \MIP_0^*[r(n), q(n), a(n), t(n)]$, then $L \in \sf{DTIME}[h(n)] / 2^{g(n)}$ with
\[ h(n) = 2^{r(n) + 2(q(n) + a(n))} \cdot t(n), \quad g(n) = 2^{2(q(n) + a(n))}.
\]
\end{lemma}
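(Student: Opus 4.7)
The plan is to produce, for each input length $n$, an advice string that precomputes the acceptance bit of every nonlocal game that could be generated by the protocol on an input $z$ of length $n$. The deterministic machine $M$, on input $z$ with this advice, will first compute a canonical bitstring description of the game $G^z$ in time roughly $h(n)$, then use that string as an index into the advice and output the bit it finds.

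I will define the \emph{canonical description} of a game with question alphabet $\{0,1\}^{q(n)}$ and answer alphabet $\{0,1\}^{a(n)}$ to consist of: (i) the $2^{2(q(n)+a(n))}$-bit truth table of the decision predicate $D_0 : \{0,1\}^{2q(n)} \times \{0,1\}^{2a(n)} \to \{0,1\}$, and (in the gapped case only) (ii) a rounded question distribution $\hat\mu$ in which each of the $2^{2q(n)}$ entries $\hat\mu(x,y)$ is stored as a $(2q(n) + \delta)$-bit dyadic rational, for a universal constant $\delta$ to be fixed later. The description fits in $(2q(n) + \delta) \cdot 2^{2q(n)} + 2^{2(q(n)+a(n))} \leq (2q(n) + \delta + 1) \cdot 2^{2(q(n)+a(n))} = g(n)$ bits in the gapped case and $2^{2(q(n)+a(n))} = g(n)$ bits in the zero-gap case. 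The advice will hold one bit per possible description (set to $1$ iff the corresponding game has value $\geq 1/2$ in the gapped case, or value $= 1$ in the zero-gap case), giving total advice length $2^{g(n)}$ as required.

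The machine $M$ will then behave as follows. It will run the sampler $S(z, \cdot)$ on each of the $2^{r(n)}$ possible random seeds to tabulate the exact distribution $\mu^z$, round it to $\hat\mu^z$, and run the decider $D(z, \cdot, \cdot, \cdot, \cdot)$ on all $2^{2(q(n)+a(n))}$ argument tuples to tabulate $D_0^z$. It will concatenate these into the canonical description and look up the corresponding bit of advice. The runtime is dominated by the two tabulations, which together take $(2^{r(n)} + 2^{2(q(n)+a(n))}) \cdot t(n) \leq 2^{r(n)+2(q(n)+a(n))} \cdot t(n) = h(n)$, using $r(n), q(n)+a(n) \geq 1$.

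The one nontrivial point in the argument, and the main obstacle, is choosing $\delta$ so that rounding the question distribution preserves the constant completeness-soundness gap. This will follow from observing that for a fixed decision predicate and strategy, $\val^*(G_\mu)$ is linear in $\mu$, so $|\val^*(G_\mu) - \val^*(G_{\hat\mu})| \leq \|\mu - \hat\mu\|_1 \leq 2^{2q(n)} \cdot 2^{-(2q(n)+\delta)} = 2^{-\delta}$; taking $\delta$ to be a sufficiently large universal constant (e.g., any $\delta$ with $2^{-\delta} < 1/6$) preserves the $[1/3, 2/3]$ gap. The zero-gap case requires no such approximation: by the equivalence $\MIP^*_0 = \AM^*_0(2)$ recalled in \Cref{sec:prelims-gapped}, we may assume without loss of generality that the question distribution is uniform and free, so the canonical description only needs to record the decision predicate, and the bound $g(n) = 2^{2(q(n)+a(n))}$ is immediate.
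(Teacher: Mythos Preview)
Your proposal is correct and takes essentially the same approach as the paper: enumerate all possible game descriptions of the appropriate size, store one acceptance bit per description as advice, and have the deterministic machine reconstruct the game table from the sampler and decider before indexing into the advice. Your treatment is in fact slightly cleaner than the paper's in two places --- you store the question distribution in a separate $2^{2q(n)}$-entry table rather than redundantly repeating probabilities across all $2^{2(q(n)+a(n))}$ rows, and you make the linearity-in-$\mu$ argument for the rounding step explicit --- but the structure and bounds are the same.
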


\begin{proof}
The essential structure of the proof is as follows. Suppose $L \in \MIP^*[r(n), q(n), a(n), t(n)]$; then, for any instance $x$, there is a two-player entangled game $G_x$ with parameters $r(|x|), q(|x|), a(|x|), t(|x|)$ such that the value of $G_x$ is 1 if $x \in L$ and $\leq \frac{1}{2}$ if $x \notin L$. We construct a deterministic Turing machine which decides whether $x \in L$ after reading some advice. The advice will simply contain a single bit for each possible $\MIP^*(2)$ game with parameters $r(|x|), q(|x|), a(|x|), t(|x|)$, and that single bit will say whether the game in question has value $=1$ or $\leq \frac{1}{2}$ (or, in the zero-gap case, $=1$ or $<1$). The deterministic Turing machine can then look up the value of the game which decides $L$ instead of playing it with two entangled provers to approximate its value.

To specify a two-player game, it suffices to fill out the following table:
\begin{center}
\begin{tabular}{ c c c }
 $q_A, q_B, a_A, a_B$ & Probability of generating $(q_A, q_B)$  & $(a_A, a_B)$ accepted answer pair, given $(q_A, q_B)$? \\
 \hline 
 \dots & \dots & \dots
\end{tabular}
\end{center}
This table has $(2^{q(n)})^2 (2^{a(n)})^2$ rows. To calculate the number of bits needed to store each row, we must specify an accuracy to which the probabilities must be represented.

In the gapped case, the gap in the game value between the YES and NO case is a constant, so it suffices to represent the probability distribution over question pairs $(q_A, q_B)$ up to an additive constant error in total variational distance. To achieve this, it suffices in turn to store each probability up to accuracy $O(2^{-2q(n)})$. Thus, we may represent the probabilities as rational numbers specified by $2q(n) + \delta$ bits for some constant $\delta$ depending on the game value gap. Now, in total, each row stores $2q(n) + \delta$ bits for the probability, as well as a bit indicating whether the question-answer pair was accepted. Therefore, it takes 
\[ g(n) := (2q(n) + \delta +1) \times 2^{2(q(n) + a(n))} \]
bits to specify a game table. It follows that there are at most 
\[ G(n) := 2^{g(n)} = 2^{(2q(n) + \delta +1) \cdot 2^{2(q(n) + a(n)) }} \]
possible games.

In the gapless case where the game value is either 1 or strictly less
than 1, we can assume without loss of generality that the question
distribution is uniform over its support, as discussed in the preliminaries. Therefore, in the gapless case, we do not need the middle column, and there are at most
\[ G(n) = 2^{2^{2(q(n) + a(n)) }} \]
possible games.

Now, suppose $L$ is a language in $\MIP^*[r(n), q(n), a(n), t(n)]$. We specify a deterministic Turing machine $M$ which decides $L$ and which takes $G(n)$ bits of advice. The advice string for input length $n$ consists of one bit for every possible game corresponding to a $\MIP^*[r(n), q(n), a(n), t(n)]$ protocol; this single bit specifies whether or not the entangled value of that game is greater than 2/3 (or whether it is equal to 1, if we're in the gapless world). The Turing machine $M$ then does the following:
\begin{enumerate}
\item	Given instance $x$, compute the game table $T$ for the $\MIP^*$ game that would decide $L$. This can be done by the following procedure.
\begin{enumerate}
    \item First, compute the table of question pairs (and approximate probabilities, in the gapped case). This can be done by looping through all possible random seeds of length $r(n)$, and for each seed $s$, computing the question pair $q_A, q_B$ (as the output of the $\MIP^*$ sampler $S$ on seed $s$) and incrementing its associated count. This takes time $O(t(n) \cdot 2^{r(n)})$.
    \item Next, for each generated question pair $q_A, q_B$, loop through all possible answer pairs $a_A, a_B$, and run the decider on $q_A, q_B, a_A, a_B$. Store this information in the appropriate row in the table.
\end{enumerate}  
Part (a) takes time $O(t(n) \cdot 2^{r(n)})$. Part (b) takes time equal to $t(n)$ multiplied by the number of question and answer pairs. The number of question and answer pairs is at most $2^{2q(n)} \cdot 2^{2a(n)}$. So in total, the runtime to generate the game table is $O( t(n) \cdot 2^{r(n) + 2(q(n) + a(n))})$.

\item Look up the game table $T$ in the advice string. Decide to output YES or NO according to whether the advice string lists 1 or 0 for $T$.
\end{enumerate}
In the gapped case, the game table $T$ which the procedure above computes specifies a game that has a value that is $\eps(\delta)$-close to the value of the $\MIP^*$ game which decides $L$, where $\eps(\delta)$ is a constant that depends on the precision constant $\delta$. Therefore, choosing $\delta$ to be small enough, and given that $\MIP^*$ games have constant gaps between YES and NO instances, we can ensure that $M$ correctly decides $L$.

In the gapless case, the game table $T$ specifies a game that, with certainty, has a value that is 1 if and only if the value of the $\MIP^*$ game that decides $L$ is 1.

Moreover, the runtime and advice length of $M$ are as claimed in the theorem statement.
\end{proof}

\begin{lemma}
\label{lem:eexp-lower-bound}
$\sf{EEXP} \not\subseteq \sf{DTIME}[2^{k n^k}] / \eps 2^{cn}$, for any constant $k \geq 1$ and any constants $c, \eps$ such that $c + \eps < 1$.
\end{lemma}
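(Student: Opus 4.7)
The plan is to prove the separation by diagonalization, exploiting the slack between the advice length $\eps 2^{cn}$ and the $2^n$ bits needed to fully specify the restriction of a language to inputs of length $n$. Concretely, I will construct $L \in \sf{EEXP}$ whose restriction to each sufficiently large length $n$ differs, on at least one string, from every function $\{0,1\}^n \to \{0,1\}$ computable by a Turing machine $M$ of description length $\leq \log n$ equipped with advice of length $\leq \eps 2^{cn}$ and runtime $\leq 2^{kn^k}$. The key counting observation is that the number of such $(M,\alpha)$ pairs is at most $2n \cdot 2^{\eps 2^{cn}}$, and since $c<1$ (which follows from $c + \eps < 1$ and $\eps \geq 0$), this is strictly less than $2^{2^n}$ for all sufficiently large $n$. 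Hence a function $f_n$ differing from each $(M,\alpha)$-computed function exists; I would set $L \cap \{0,1\}^n := f_n$, taking $f_n$ to be the lexicographically first such function.

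To show $L \notin \sf{DTIME}[2^{kn^k}] / \eps 2^{cn}$, I would argue by contradiction: if some machine $M^*$ of fixed description length $m$ and advice sequence $\{\alpha_n\}$ decided $L$ within the prescribed bounds, then for every $n \geq 2^m$ the pair $(M^*, \alpha_n)$ would lie in the enumeration used to define $f_n$, contradicting the fact that $f_n$ differs from each enumerated function on at least one coordinate.

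To show $L \in \sf{EEXP}$, I would exhibit the following algorithm: on input $x$ of length $n$, first enumerate every eligible $(M, \alpha)$ pair and compute its truth table on $\{0,1\}^n$ by running $M$ for at most $2^{kn^k}$ steps on each of the $2^n$ inputs, at a total cost of $O(n \cdot 2^{\eps 2^{cn}} \cdot 2^n \cdot 2^{kn^k})$; then iterate over all $2^{2^n}$ candidate truth tables in lexicographic order, outputting at $x$ the first one that disagrees on at least one coordinate with each of the enumerated tables. The search step dominates, costing $2^{2^n} \cdot O(n \cdot 2^{\eps 2^{cn} + n}) = 2^{2^n + o(2^n)}$, which lies comfortably in $\sf{EEXP}$.

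The argument is essentially bookkeeping once the counting inequality is in hand; the one point requiring care is ensuring that $\log n + \eps 2^{cn} < 2^n$ is genuinely strict for large $n$, which hinges on the hypothesis $c + \eps < 1$ combined with $\eps \geq 0$ (so that $c < 1$ and $\eps 2^{cn}$ is exponentially smaller than $2^n$). I do not anticipate any serious technical obstacle; the proof is a variant of the classical ``most Boolean functions are hard'' counting argument, packaged to respect the simultaneous time and advice bounds and to be witnessable in deterministic double-exponential time.
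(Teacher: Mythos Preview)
Your proof is correct and takes a genuinely different route from the paper's. The paper argues via time-bounded Kolmogorov complexity: it fixes the lexicographically first string $\tau_n \in \{0,1\}^{2^{cn}}$ that is incompressible in the sense of $K[\cdot,\cdot]$, lets the language $A$ consist of the length-$n$ blocks of $\tau_n$, and shows that an $\sf{DTIME}[2^{kn^k}]/\eps 2^{cn}$ decider for $A$ would allow $\tau_n$ to be reconstructed from roughly $(c+\eps)2^{cn}$ bits (the $\eps 2^{cn}$ bits of advice plus a $\leq c\,2^{cn}$-bit index list locating each block inside the sorted set $A_n$), contradicting incompressibility. Your approach is a direct diagonalization over the at most $O(n)\cdot 2^{\eps 2^{cn}}$ machine--advice pairs, which is strictly fewer than $2^{2^n}$ whenever $c<1$.

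Two things worth noting. First, your argument is more elementary and in fact establishes a slightly stronger conclusion: it uses only $c<1$, whereas the paper's reconstruction step genuinely needs $c+\eps<1$ so that $(c+\eps)2^{cn} < 2^{cn}-2$. Second, the paper's approach has the feature that the hard language $A$ is ``compressed'' into a single high-complexity string $\tau_n$ of length $2^{cn}$, which ties the advice lower bound directly to a description-length lower bound; this Kolmogorov framing is what the surrounding section is built around, but for the bare separation statement your counting diagonalization is cleaner and avoids the block-index bookkeeping.
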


\begin{proof}
The following argument is based on the one found in Section 2 of \cite{hm95}.

We firstly define some notation. For any $n \in \mathbb{N}$, define the following:
\begin{enumerate}
\item Let $m := 2^{cn}$, for notational convenience.
\item Let $C$ be some constant to be decided later.
\item Let $\tau_n$ be the lexicographically first string in $\{0,1\}^{m}$ that is not in $K[m - 2, C \cdot 2^{2k (\frac{1}{c} \log m)^k}]$. (See Definition \ref{def:time-bounded-kolmogorov} for the definition of $K$. Note that, in terms of $n$, this evaluates to an advice bound of $2^{cn} - 2$ and a running time bound of $C \cdot 2^{2k n^k}$.) This string always exists by counting: there are $2^m$ strings in $\{0,1\}^m$, and only $2^{m-1}$ strings in $\bigcup_{i=1}^{m-2} \{0,1\}^{i}$.
\item If $\tau_n = w_1 \| w_2 \| \dots \| w_{\ell}$ for some strings $w_1, \dots, w_{\ell}$ of length $n$, we say that the strings $w_1, \dots, w_{\ell}$ are the `$n$-blocks' of $\tau_n$. Let $A_n$ be the set $\{w_i : w_i \text{ is an $n$-block of $\tau_n$}\}.$ Let $A$ (a language) be the union $\bigcup_{n \in \mathbb{N}} A_n$.
\end{enumerate}

\begin{lemma}
$A \subseteq \sf{EEXP}$.	
\end{lemma}
\begin{proof}
We describe a Turing machine that decides $A$. The machine does the following on input $x$ of length $n$:
\begin{enumerate}
\item Write down $\tau_n$. To do this:
\begin{enumerate}
\item Run $T$ (the universal simulator of \Cref{def:time-bounded-kolmogorov}) on all strings of length $\leq 2^{cn} - 2$ for $2^{2kn^k}$ steps, and lexicographically list all the strings that result.
\item Find the first string in $\{0,1\}^{2^{cn}}$ that is not in the list from (a). Set the result to be $\tau_n$.
\end{enumerate}
\item Compare $x$ with every $n$-block of $\tau_n$ to decide whether or not $x$ is in $A_n$.
\end{enumerate}
This procedure takes time $2^{2^{cn}} \cdot 2^{2 k n^k} + O(2^{cn})$, which is doubly exponential in $n$.
\end{proof}

\begin{lemma}
Assuming $A \subseteq \sf{DTIME}(2^{k n^k}) / \eps 2^{cn}$, we can generate $\tau_n$ with $(c+\eps)2^{cn}$ advice using $C \cdot 2^{2k n^k}$ time for some constant $C$.
\label{lem:advice-assumption}
\end{lemma}
\begin{proof}
Fix an $n$. Let $M$ be the $\sf{DTIME}(2^{k n^k}) / \eps 2^{cn}$ machine that decides $A$. $M$ takes advice $a$ with $|a| = \eps 2^{cn}$, and can in time $O(2^{k n^k})$ decide membership in $A$.

Let $LEX(A_n)$ be the lexicographically ordered list of all the strings in $A_n$. To reconstruct $\tau_n$, we need the following advice:
\begin{itemize}
\item $a$, the advice given to $M$,
\item $z$, a list for all $i$ of the position in $LEX(A_n)$ of the $i$th $n$-block of $\tau_n$.
\end{itemize}
The length of $a$ is $\eps 2^{cn}$. The length of $z$ is
\[ |z| = \frac{|\tau_n|}{n} \cdot \log(|A_n|) \leq \frac{2^{cn}}{n} \cdot (cn - n) \leq c 2^{cn}.\]
Therefore, the total amount of advice we need is $(c+\eps)2^{cn}$.

In order to reconstruct $\tau_n$, we will do the following:
\begin{enumerate}
\item Generate $LEX(A_n)$ by lexicographically iterating through all strings of length $n$ and running $M$ on each one.
\item For each $i \in [\frac{|\tau_n|}{n}]$, read the appropriate index in $z$, and copy out the correct $n$-block from $A_n$.
\end{enumerate}
This procedure takes time at most $2^n \cdot 2^{k n^k} + O(2^{cn})$. Using the assumption $k \geq 1$, we can write this as $O(2^{2k n^k})$.
\end{proof}

Given that $c+\eps < 1$, Lemma \ref{lem:advice-assumption} is a contradiction with the definition of $\tau_n$, since $(c+\eps)2^{cn} < 2^{cn}-2$ for sufficiently large $n$. We conclude that $\sf{EEXP} \not\subseteq \sf{DTIME}[2^{k n^k}] / \eps 2^{cn}$.
\end{proof}

\begin{theorem}
\label{thm:lower-bound}
$\sf{EEXP} \not\subseteq \MIP^*[r(n), q(n),a(n),t(n)]$ if there are constants $n_0$ and $\gamma < \frac{1}{2}$ such that $q(n) + a(n) \leq \gamma\log(n)$ for all $n > n_0$, and $t(n), r(n) = \poly(n)$.

The same statement holds with $\MIP^*[r(n), q(n),a(n),t(n)]$ replaced by $\MIP^*_0[r(n), q(n),a(n),t(n)]$.
\end{theorem}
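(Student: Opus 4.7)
The plan is to derive the lower bound by combining \Cref{lem:mip*-to-tm} (which simulates an $\MIP^*$ protocol by a deterministic Turing machine with advice) and \Cref{lem:eexp-lower-bound} (a Kolmogorov-complexity lower bound ruling out $\sf{EEXP}$ in deterministic time with mildly exponential advice). I would proceed by contradiction: assume that $\sf{EEXP} \subseteq \MIP^*[r(n), q(n), a(n), t(n)]$ with the hypothesized parameters. Then \Cref{lem:mip*-to-tm} places $\sf{EEXP} \subseteq \sf{DTIME}[h(n)]/2^{g(n)}$, and the task is to show that the resulting $h$ and $g$ are small enough to violate the conclusion of \Cref{lem:eexp-lower-bound}, thereby giving the contradiction.

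The main work is then a parameter computation. Under the hypothesis $q(n) + a(n) \leq \gamma \log n$ with $\gamma < 1/2$, we first bound the runtime: since $r(n), t(n) = \poly(n)$ and $2(q(n)+a(n)) \leq 2\gamma \log n = O(\log n)$, we get
\[ h(n) = 2^{r(n) + 2(q(n) + a(n))} \cdot t(n) = 2^{\poly(n)}, \]
so there is a constant $k$ with $h(n) \leq 2^{k n^k}$, matching the runtime hypothesis of \Cref{lem:eexp-lower-bound}. Next, for the advice:
\[ g(n) = (2q(n) + \delta + 1) \cdot 2^{2(q(n)+a(n))} \leq O(\log n) \cdot n^{2\gamma} = o(n), \]
using that $2\gamma < 1$. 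In particular, for any choice of constants $c, \eps > 0$, we have $g(n) \leq cn + \log \eps$ for all sufficiently large $n$, and hence $2^{g(n)} \leq \eps \cdot 2^{cn}$ asymptotically. Fixing, say, $c = \eps = 1/4$ so that $c + \eps = 1/2 < 1$, we obtain the inclusion $\sf{EEXP} \subseteq \sf{DTIME}[2^{k n^k}]/\eps 2^{cn}$, directly contradicting \Cref{lem:eexp-lower-bound}. The finitely many values of $n$ at which the asymptotic bound on $g$ fails can be handled by hardcoding the decisions for all inputs of those lengths into the advice string, which only increases the advice by a constant.

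The zero-gap case is strictly easier: \Cref{lem:mip*-to-tm} gives $g(n) = 2^{2(q(n) + a(n))} \leq n^{2\gamma}$ directly, with no extra $\log n$ factor, and the identical argument applies. So the two halves of the theorem will follow uniformly from the same computation.

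I do not anticipate a genuine technical obstacle here: both the $\MIP^*$-to-Turing-machine-with-advice simulation and the $\sf{EEXP}$ lower bound are already established, and the proof is essentially a parameter-matching exercise. The only subtle point to state cleanly is that the hypothesis $\gamma < 1/2$ is precisely what makes $2^{2(q(n)+a(n))}$ grow like $n^{2\gamma}$ with $2\gamma < 1$, which is what allows the advice $2^{g(n)}$ to be dominated by $\eps 2^{cn}$ with $c + \eps < 1$; any weakening of the hypothesis (e.g.\ $\gamma \geq 1/2$) would let $g(n)$ reach or exceed $n$, putting us outside the regime where \Cref{lem:eexp-lower-bound} applies.
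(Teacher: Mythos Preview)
Your proposal is correct and follows essentially the same argument as the paper: combine \Cref{lem:mip*-to-tm} with \Cref{lem:eexp-lower-bound} via a parameter computation showing that $q(n)+a(n)\le\gamma\log n$ with $\gamma<1/2$ forces $g(n)=o(n)$ and $h(n)=2^{\poly(n)}$. The only cosmetic difference is that the paper chooses $c$ with $2\gamma<c<1$ rather than fixing $c=\eps=1/4$, but either choice works since $g(n)=o(n)$.
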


\begin{remark}
Note that $q(n)$ and $a(n)$ are bounds on the question and answer length for a single prover; therefore, the total communication during an $\MIP^*$ protocol can be upper bounded by $2(q(n) + a(n))$. This may make the $\gamma < 1/2$ requirement seem more natural.
\end{remark}
\begin{proof}
    We present the proof for the gapped case; the proof in the gapless case is entirely analogous.
    
    Suppose that $\sf{EEXP} \subseteq \MIP^*[r(n), q(n), a(n), t(n)]$ with $r,q,a,t$ as in the theorem statement. We know from Lemma \ref{lem:mip*-to-tm} that
    
    \[\MIP^*[r(n), q(n), a(n), t(n)] \subseteq \sf{DTIME}[t(n) \cdot 2^{r(n) + 2(q(n)+a(n))}] / 2^{g(n)},\]
    where $g(n) = (2q(n) + \delta + 1) \cdot 2^{2(q(n) + a(n))}$ for some constant $\delta$. 
    
    Note that, if $t(n), r(n) = \poly(n)$, then there exists some constant $k$ such that $t(n) \cdot 2^{r(n) + 2(q(n)+a(n))} \leq 2^{k n^k}$ in the asymptotic limit. From \Cref{lem:eexp-lower-bound}, we know that $\sf{EEXP} \not\subseteq \sf{DTIME}[2^{k n^k}] / \eps 2^{c n}$ for any constant $k$ and any constants $c, \gamma$ such that $\eps + c < 1$. So if $2^{g(n)} \leq \eps 2^{c n}$ in the asymptotic limit for \emph{any} constants $\eps, c$ such that $\eps + c < 1$, we have a contradiction, since then we would have the chain of containments
    \[\sf{EEXP} \subseteq \MIP^*[r(n), q(n), a(n), t(n)] \subseteq \sf{DTIME}[2^{k n^k}] / \eps 2^{c n}.\]
    Taking logarithms of both sides, we conclude that it cannot be the case that
\[(2q(n) + \delta + 1) \cdot 2^{2(q(n) + a(n))} \leq c n + \log \eps.\]
Substituting $q(n) + a(n) \leq \gamma \log n$, we have that the left-hand-side is at most $(2\gamma \log (n) + \delta + 1)  n^{2\gamma}$.
    So when $\gamma < 1/2$, we have a contradiction, because we can choose constants $c$ and $\eps$ such that $2\gamma < c < 1$ and $\eps < 1 - c$, and in the asymptotic limit the right-hand-side will dominate the left-hand-side with these choices for $c$ and $\eps$ substituted in.
\end{proof}

\begin{remark}
Observe that \Cref{thm:lower-bound} does not imply that $\MIP^*$ with $q(n) + a(n) \leq \gamma \log(n)$ for $\gamma < \frac{1}{2}$ is \emph{decidable.} Indeed, one may construct undecidable languages in this class by a padding argument. For example, consider any language $L$ in $\MIP^*$ with $q(n) = \poly(n)$ and $a(n) = \poly(n)$; then consider $L' := \{ x \#^{2^{2^n}} : x \in L\}$. $L'$ is in $\MIP^*$ with $q(n) = O(\log\log n)$ and $a(n) = O(\log \log n)$, but $L'$ is still undecidable. However, \Cref{thm:lower-bound} shows that no language that $\sf{EEXP}$-hard under polynomial-time reductions---the binary halting problem, for instance---can be contained in the class $\MIP^*$ if $q(n) + a(n) < \frac{1}{2} \log n$. In particular, this means that $\MIP^*$ with $q(n) + a(n) < \frac{1}{2} \log n$ cannot be equal to $\RE$.
\end{remark}

\ifnames

\section{Acknowledgements}

We thank Aram Harrow, Hamoon Mousavi, Chris Umans, and Ryan Williams
for helpful conversations. TZ was supported in part by an Akamai Presidential Fellowship.

\else
\fi

\bibliographystyle{myhalpha}
\bibliography{bellqma2}

\notesendofpaper
\end{document}